\newcommand{\algmargin}{\the\ALG@thistlm}
\newcommand{\alphaU}{\alpha_{\scriptscriptstyle\mathbf{U}}}
\newcommand{\alphaFP}{\alpha_{\scriptscriptstyle \mathbf{F}'}}
\newcommand{\norm}[1]{\left\lVert#1\right\rVert}
\newtheorem{theorem}{Theorem}
\newtheorem{lemma}{Lemma}
\newtheorem{definition}{Definition}
\newtheorem{assumption}{Assumption}
\newtheorem{remark}{Remark}
\renewcommand\@biblabel[1]{#1.}
\definecolor{comment}{rgb}{0, 0, 0}
\newcommand{\tr}[1]{\operatorname{Tr}\left(#1\right)}
\newcommand{\abs}[1]{\left| #1 \right|}
\newcommand{\PKU}{Center on Frontiers of Computing Studies, School of Computer Science, Peking University, Beijing 100871, China}
\newcommand{\bnu}{School of Artificial Intelligence,
 Beijing Normal University, Beijing,
 100875, China}
\begin{document}

\title{Heisenberg-Limited Quantum Eigenvalue Estimation for Non-normal Matrices}

\author{Yukun Zhang}
\affiliation{\PKU}

\author{Yusen Wu}
\email{yusen.wu@bnu.edu.cn}
\affiliation{\bnu}

\author{Xiao Yuan}
\email{xiaoyuan@pku.edu.cn}
\affiliation{\PKU}

\date{\today}

\begin{abstract}
Estimating the eigenvalues of non-normal matrices is a foundational problem with far-reaching implications, from modeling non-Hermitian quantum systems to analyzing complex fluid dynamics. Yet, this task remains beyond the reach of standard quantum algorithms, which are predominantly tailored for Hermitian matrices. Here we introduce a new class of quantum algorithms that directly address this challenge. The central idea is to construct eigenvalue signals through customized quantum simulation protocols and extract them using advanced classical signal-processing techniques, thereby enabling accurate and efficient eigenvalue estimation for general non-normal matrices. Crucially, when supplied with purified quantum state inputs, our algorithms attain Heisenberg-limited precision—achieving optimal performance. These results extend the powerful guided local Hamiltonian framework into the non-Hermitian regime, significantly broadening the frontier of quantum computational advantage. Our work lays the foundation for a rigorous and scalable quantum computing approach to one of the most demanding problems in linear algebra.

\end{abstract}

\maketitle

Eigenvalue estimation is a foundational computational task with widespread relevance across science and engineering. For Hermitian matrices of size \(N \times N\), quantum algorithms—such as quantum phase estimation~\cite{kitaev1995quantum}—can achieve this task with polylogarithmic complexity, offering exponential speedups over classical methods~\cite{lin2020near,lin2022heisenberg,dong2022ground}. However, many real-world problems involve \textit{non-normal matrices}, which arise naturally in contexts such as open quantum systems~\cite{breuer2002theory,de2017dynamics}, fluid dynamics~\cite{trefethen1993hydrodynamic}, control theory~\cite{hinrichsen2005mathematical}, and network analysis~\cite{asllani2018structure}. These matrices exhibit spectral characteristics that differ fundamentally from the Hermitian case: their eigenvectors are generally non-orthogonal, requiring a bi-orthogonal formalism, and their spectra can be highly sensitive to perturbations, as described by pseudospectra~\cite{trefethen1993hydrodynamic}.

Despite recent advances in quantum singular value transformation~\cite{gilyen2019quantum,lin2020near,dong2022ground}, which have extended quantum eigenvalue estimation techniques beyond Hermitian matrices, non-normal systems remain largely unexplored. Existing approaches either rely on variational quantum algorithms—generally lacking rigorous performance guarantees~\cite{zhao2023universal} —or apply to matrices with real eigenvalues~\cite{shao2022computing,low2024quantum}. Alternatively, they exploit connections between eigenvalues and singular values, which typically lead to Heisenberg-limited scaling in special cases~\cite{zhang2025exponential,lin2025quantum}, but are not capable of solving multiple eigenvalues simultaneously. More broadly, the absence of an orthogonal eigenbasis in non-normal matrices complicates both algorithmic design and theoretical analysis, positioning them as a uniquely challenging and largely untapped domain for quantum computation~\cite{horn2012matrix}.

\begin{figure}[b]
\centering
\includegraphics[width=.49\textwidth]{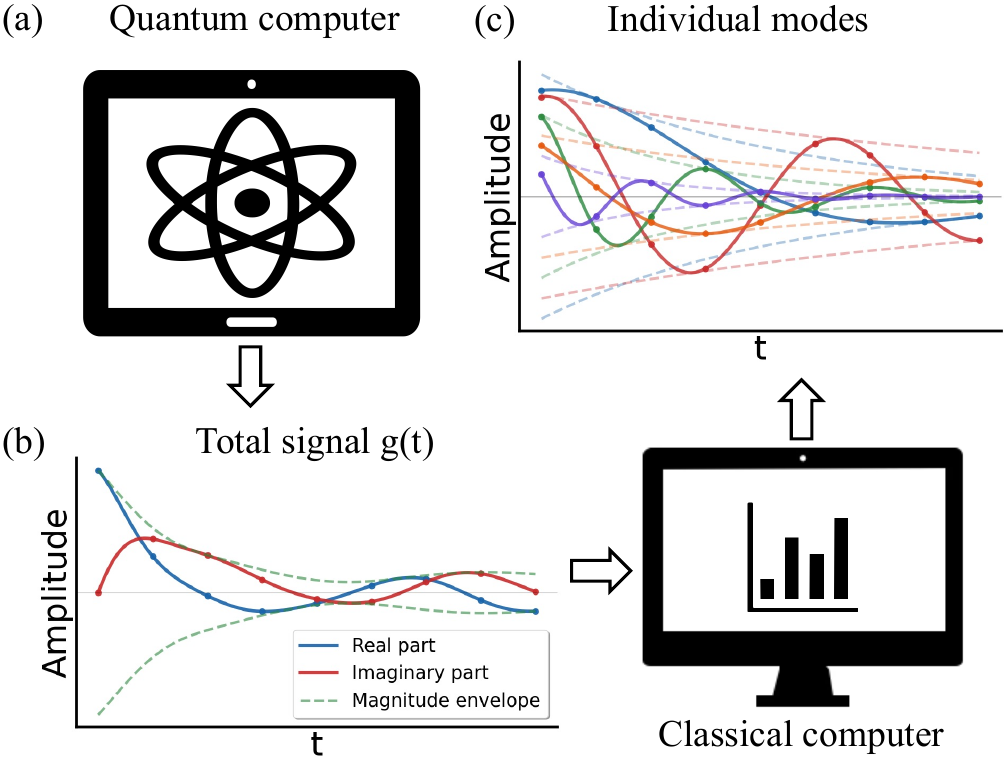} 
\caption{Schematic of the algorithm for solving the quantum eigenvalue estimation problem. 
(a) A quantum computer is used to generate times series that are mixture of signals of the eigenstates.
(b) Example of the signal of Eq.~\eqref{eq:signal1} with sparsity $r=5$. The blue and red lines are the amplitude of real and imaginary parts of the signal as a function of $t$. A classical computer is then used to solve for $\{\lambda_i\}_i$. (c) Schematic for each individual mode $\lambda_i^t$ as a function of $t$. 
Dots indicates discrete $t=1,2,\cdots,2r-1=9$. Dotted lines are the magnitude envelope of the signal, indicating exponential decay with $t$. 
}
\label{fig:main}
\end{figure}


Here, we present a systematic solution to the eigenvalue estimation problem for general non-normal matrices. As shown in Fig.~\ref{fig:main}, we first introduce tailored quantum algorithms and exploit quantum computers to construct different time-series that are a mixture of signals of different eigenvalues. These signals are then processed using efficient classical techniques, such as the matrix pencil method~\cite{hauer1990initial,hua1990matrix}, to recover the eigenvalue spectrum. Signal values are obtained via quantum measurement protocols like the Hadamard test or, more efficiently, through quantum amplitude estimation~\cite{brassard2000quantum,aaronson2020quantum} when pure states or purified quantum query access~\cite{gilyen2019quantum} is available, enabling Heisenberg-limited precision and approaching fundamental bounds on accuracy~\cite{giovannetti2004quantum,giovannetti2006quantum}. These results constitute a rigorous and general-purpose quantum framework for non-normal eigenvalue estimation, bridging a critical gap in quantum algorithm design.

In the following, we summarize the results in the main text and refer to SM for details.

\vspace{0.2cm}
\noindent\textbf{Framework ---}
Consider a general matrix \( A \in \mathbb{C}^{N \times N} \), which admits the eigen-decomposition
\begin{equation}
	A = P J P^{-1} = \sum_{i=1}^N \lambda_i \ket{\psi_i}\bra{\phi_i},
\end{equation}
where \( P \) is an invertible matrix, \( \lambda_i \) are the eigenvalues of \( A \), and \( \{ \ket{\psi_i} \} \) and \( \{ \ket{\phi_i} \} \) form a biorthogonal basis satisfying \( \langle \phi_i | \psi_j \rangle = \delta_{ij} \), with \( \delta_{ij} \) the Kronecker delta. When \( A \) is diagonalizable, \( J \) is diagonal; otherwise, \( J \) corresponds to the Jordan normal form (JNF) of \( A \).
We define the condition number of \( A \) under its JNF decomposition as $
\kappa_J(A) := \inf_{J, P} \|P\| \cdot \|P^{-1}\|$,
where the infimum is taken over all possible Jordan decompositions. Since diagonalizable matrices are dense in the space of complex matrices~\cite{horn2012matrix}, any non-diagonalizable matrix can be approximated arbitrarily closely by a diagonalizable one with small spectral perturbations. In this work, we therefore focus primarily on diagonalizable matrices.

A key challenge in extending quantum eigenvalue estimation from normal to non-normal matrices lies in the lack of an orthonormal eigenbasis. This structural difference invalidates standard techniques such as quantum phase estimation (QPE)~\cite{kitaev1995quantum,ni2023low} and filter-based algorithms~\cite{lin2020near,dong2022ground}, which rely on orthogonality. To overcome this, we formulate the quantum eigenvalue problem as follows:
\begin{definition}(\textbf{Quantum Estimation of Sparse Eigenvalues (QESE)})
Given density matrix $\rho_I$ with a sparse diagonal representation in the eigenbasis of $A$:
\begin{equation}\label{eq:dm_expansion}
\rho_I = \sum_{i=1}^r c_i |\psi_i\rangle\langle\phi_i| + \sum_{i,j:i\neq j} c_{ij} |\psi_i\rangle\langle\phi_j|,
\end{equation}
where $r \ll N$ and $\sum_{i=1}^r c_i = 1$, the task is to estimate the sparse eigenvalues $\{\lambda_i\}_{i=1}^r$.
\end{definition}

Our key observation is that any density matrix can be decomposed in the eigenbasis of the target non-normal matrix~\cite{horn2012matrix}. Under the assumption in Eq.~\eqref{eq:dm_expansion}, the diagonal components of this decomposition are assumed to be \( r \)-sparse, and the goal is to estimate the corresponding eigenvalues. This formulation naturally generalizes quantum eigenvalue estimation for Hermitian matrices~\cite{somma2019quantum}, known as the guided local Hamiltonian problem.
As we show below, the complexity of solving this quantum eigenvalue sparsity estimation (QESE) problem scales polynomially with the inverse of the minimum signal amplitude, defined as \( c_{\min} := \min_i |c_i| \). In other words, smaller spectral weights \( c_i \) in the eigenbasis expansion require proportionally more resources to resolve—mirroring the dependence on initial state overlap observed in ground-state energy estimation tasks~\cite{lin2020near,dong2022ground,lin2022heisenberg}.



Our approach consists of two steps. The first is to apply quantum circuits to generate time-series signals of the form:
\begin{equation}\label{eq:signal1}
g(t) = \tr{\rho_I A^t} = \sum_{i=1}^r c_i \lambda_i^t,
\end{equation}
where $t = 0, 1, 2, \ldots, 2r-1$. More generally, we can also construct signals using other matrix functions:
\begin{equation}\label{eq:signal2}
g'(t) = \tr{\rho_I f_t(A)} = \sum_{i=1}^r c_i f_t(\lambda_i),
\end{equation}
where $f$ is some function that is to be specified later. 
At the heart of our construction, we need to construct such matrix functions and get an estimation of the above signal through measurements.
We require block encoding (BE)~\cite{gilyen2019quantum} access to the matrix $A$, i.e., a unitary operator $U \in \mathbb{C}^{2^{m+n} \times 2^{m+n}}$ satifying
\begin{equation}\label{eq:be}
\left\| A - \alpha_A \left( \bra{G_2} \otimes I_n \right) U_A \left( \ket{G_1} \otimes I_n \right) \right\| \leq \epsilon.
\end{equation}
Here, $\alpha_A \geq \|A\|$ is the normalization factor, $m$ is the number of ancilla qubits, $\epsilon \geq 0$ is the encoding error, and $\ket{G_1}, \ket{G_2}$ are unit vectors in the ancillary space. This is called an $(\alpha_A, m, \epsilon)$-block encoding of $A\in \mathbb{C}^{2^{n} \times 2^{n}}$. In addition, we need to design adapted quantum eigenvalue value transformations (QEVT) to realize the matrix functions following the framework firstly proposed in Ref.~\cite{low2024quantum}. 

The second step involves extracting the eigenvalues from the generated time-series signals. To this end, we employ the matrix pencil (MP) method~\cite{hua1990matrix}, and provide a detailed analysis of error propagation within the MP framework. This allows us to derive noise-tolerant bounds for solving the QESE problem reliably under realistic conditions.

While the core techniques leveraged in this work—such as QEVT and MP—have been individually studied in other contexts, our approach is the first to integrate them into a unified and systematic framework for non-normal eigenvalue estimation. The resulting algorithms offer a significantly broader and more versatile solution to this long-standing challenge. Meanwhile, from a technical perspective, our contributions also include the development of problem-specific QEVT techniques and a thorough analysis of the algorithm’s complexity and robustness to noise.


\vspace{0.2cm}
\noindent\textbf{Quantum signal generation ---}
We first introduce the procedure for generating signals of Eq.~\eqref{eq:signal1} and \eqref{eq:signal2}. 

Given the $(\alpha_A, m, 0)$-BE $U_A$ of $A$, we can construct the $(\alpha_A^t, mt, 0)$-BE $U_{A^t}$ of $A^t$ by taking product of the block encoded matrix~\cite{gilyen2019quantum}. The signal generation process then proceeds by implementing the Hadamard test circuit with the controlled unitary being $U_{A^t}$ as shown in Fig.~\ref{fig:main} (a). Setting $W=I$ or $S^\dagger$ with $S$ the phase gate, the measurement outcomes of the ancillary qubit are serve as unbiased estimators to $\mathrm{Re}(g(t)/\alpha_A^t)$ and $\mathrm{Im}(g(t)/\alpha_A^t)$, where $\mathrm{Re}(\cdot)$ and $\mathrm{Im}(\cdot)$ denote taking the real and imaginary parts. The final step of generating the signal in Eq.~\eqref{eq:signal1} follows multiplying the factor $\alpha_A^t$ to the estimated results from the Hadamard test circuit. The sampling-shot-to-accuracy rate of the Hadmard test protocol is dictated by Hoeffding's inequality~\cite{hoeffding1994probability}, such that $m$ shots of measurements yield a $1/\sqrt{m}$ accuracy with high probability. When we have purified quantum query access to $\rho_I$, we can quadratically improve the accuracy to $1/m$ using the QAE~\cite{brassard2000quantum,aaronson2020quantum} algorithm, while increasing the circuit depth in one coherent implementation. 


In the special cases where the eigenvalues of the matrix $A$ are real, we may significantly improve the performance of the algorithm by casting the signal to the form of a linear combination of Fourier modes (exponentials). This reduces the signal processing procedure to a super-resolution problem. To this end, we resort to the QEVT framework \cite{low2024quantum} for implementing the function $x\mapsto e^{-2\pi i xt}$ so that the eigenvalues are mapped to a phase factor. Conceptually, the QEVT algorithm (approximately) performs a matrix function $A\mapsto f(A)=Pf(\Lambda)P^{-1}$ with $P\Lambda P^{-1}$ being the eigendecomposition of $A$. As such, by approximating the above-stated mapping to the blocked encoded matrix, we can obtain a $(\alpha_p, k, \varepsilon)$-BE to $e^{-2\pi i A t/\alpha_A}$ with some $k$ and $\varepsilon$. Crucially, the normalization factor $\alpha_p$ can be chosen to be of order $\mathcal{O}(1)$, which we can readily multiply back to get an approximation to the ideal signal. The ideal signal thus becomes
\begin{equation}\label{eq:signal3}
g'(t) = \tr{\rho_I e^{-2\pi i A t/\alpha_A}} = \sum_{i=1}^r c_i e^{-2\pi i \lambda_i t/\alpha_A}.
\end{equation}
We remark that the signal shares similarity with the recently intensively studied phase estimation problems with early fault-tolerant devices~\cite{lin2022heisenberg,ni2023low,ding2023even}.
Using either the Hadamard test or QAE, we obtain an estimation of the ideal signal. 



\vspace{0.2cm}
\noindent\textbf{Matrix Pencil Method ---}
Once we obtain each estimated signal, we apply the matrix pencil method~\cite{hua1990matrix} to extract eigenvalues. This method leverages the shift-invariance property of signals composed of powers or exponentials and works as follows: 1.~Construct Hankel matrices $H_0 = (g(j+k))_{j,k=0}^{r-1}$ and $H_1 = (g(j+k+1))_{j,k=0}^{r-1}$.
2.~Define the generalized eigenvalue problem (GEVP): 
    $H_1 |{\nu}\rangle= z H_0 |{\nu}\rangle$.
3.~Each generalized eigenvalue $z$ correspond to each component in the signal with $t=1$. These are $\lambda_i$ and $e^{-2\pi i\lambda_i/\alpha_A}$ for Eq.~\eqref{eq:signal1} and \eqref{eq:signal3} , respectively.
4.~Post-process the obtained generalized eigenvalues to get an estimation of each eigenvalue in the QESE problem.
The key insight is that these Hankel matrices have a Vandermonde decomposition:
$H_0 = V(\mathbf{z})CV(\mathbf{z})^T$ and
   $H_1 = V(\mathbf{z})CZV(\mathbf{z})^T$,
where $V(\mathbf{z}) = (z_j^{k-1})_{j,k=1}^r$ is the Vandermonde matrix with $z_j$ the generalized eigenvalue, $C = \text{diag}(\mathbf{c})$ is the diagonal matrix of coefficients $c_i$, and $Z = \text{diag}(\mathbf{z})$ contains the (transformed) eigenvalues we seek. It is readily verified that the GEVP is given by the diagonal elements of $Z$ as claimed.
In practics, while $r$ is unknown, we assume a known upper bound $R\ge r$ and hence first construct the $R\times R$ Hankel matrix and estimate its rank to learn about the sparsity.



The next significant question is that \emph{how robust is the MP method to the noises in the Hankel matrices?} The importance lies in that the estimated signals inevitably suffer from two sources of errors, the approximation error in the matrix function and estimation inaccuracy brought by the Hadamard test or QAE, as discussed in the last section. High sensitivity to noise or inaccuracy could render the algorithm inefficient for solving the QESE problem. To this end, we establish the fault-tolerant bounds of the MP method. We briefly discuss how the error propagates in the MP method and leave the rigorous analysis to Supplementary Information. The GEVP can be reduced to an eigenvalue problem when the Vandermonde matrix is invertible:
$H^{-1}_0H_1\ket{\nu}=\lambda\ket{\nu}$.
Denoting $G:=H^{-1}_0H_1$, we find its eigendecomposition as $G=(V^{\rm T})^{-1}ZV^{\rm T}$. First order perturbation theory shows that small perturbation (noises) in the Hankel matrix is enlarged by a factor proportional to $\|G\|\leq \|(V^{\rm T})^{-1}\|\|Z\|\|V^{\rm T}\|\leq\kappa(V)$, where $\|x\|$ denotes the operator norm of $x$, and $\kappa(V)$ denotes the condition number of $V$. Besides, we have used the sub-multiplicativity of the operator norm and $\|Z\|\leq 1$. Established results have shown that there is a strong separation in the behavior of the condition number, depending on whether the components in the signal are Fourier modes or not. In the former cases, the condition number could be well-behaved (grow at most polynomially with $r$)~\cite{moitra2015super}. By contrast, in the latter case, a constant $r=\mathcal{O}(1)$ is sufficient for guaranteeing the noise robustness. 


\vspace{0.2cm}
\noindent\textbf{Complexity analysis ---} 
Our algorithm applies generally to all non-normal matrices with complex or real eigenvalues. Here, we focus on the results with purified quantum query access to $\rho_I$ and refer to Supplementary Information for results with sample access to $\rho_I$. 
The result for general complex eigenvalues is:
\begin{theorem}[informal]
For the targeted eigenvalues $Z:=\{\lambda_i\}_{i=1}^r$ of $A\in\mathbb{C}^{N\times N}$, there exists a quantum algorithm that outputs an estimation $\{\widetilde\lambda_i\}_{i=1}^r$ such that
$\min _{\lambda \in Z}|\widetilde{\lambda}-\lambda|\leq \epsilon$
with high success probability and $\mathcal{\widetilde{O}}\left(\exp (r)  \epsilon^{-1}  c_{\min }^{-1} \alpha_A^{q_1 r}\left(\Delta^{\prime}\right)^{-q_2r}\right)$ queries to (controlled) $U_A$ and its inverse in one coherent run of the algorithm, and $\mathcal{{O}}\left(r\right)$ queries to $\rho_I$. Here $q_1,q_2>0$ are constant numbers.
\end{theorem}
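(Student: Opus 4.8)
\emph{Proof sketch (plan).}
The plan is to combine the two stages described above and balance their error budgets. In Stage~I the quantum computer produces, for each $t\in\{0,1,\dots,2r-1\}$, an estimate $\widetilde g(t)$ of $g(t)=\tr{\rho_I A^{t}}=\sum_{i=1}^{r}c_i\lambda_i^{t}$ with additive error at most $\eta$; in Stage~II these $2r$ numbers are fed to the matrix pencil method, which returns nodes $\{\widetilde\lambda_i\}_{i=1}^{r}$. The total $U_A$-query count is then (the number of signal values)$\,\times\,$(the cost of one value to accuracy $\eta$), and $\eta$ itself is fixed by demanding that the matrix-pencil error amplification bring the final node error below $\epsilon$. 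Thus the argument reduces to two quantitative ingredients: (i) the cost of generating one signal value to accuracy $\eta$, and (ii) an upper bound on the amplification map $\eta\mapsto\max_i|\widetilde\lambda_i-\lambda_i|$.

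For (i): composing $t\le 2r-1$ copies of the $(\alpha_A,m,0)$-block encoding $U_A$ gives an \emph{exact} $(\alpha_A^{t},mt,0)$-block encoding of $A^{t}$; the Hadamard test with $W\in\{I,S^{\dagger}\}$ estimates the $\mathcal{O}(1)$-bounded quantity $g(t)/\alpha_A^{t}=\sum_i c_i(\lambda_i/\alpha_A)^{t}$, and with purified access to $\rho_I$ one instead runs amplitude estimation, reaching additive accuracy $\delta$ with $\widetilde{\mathcal{O}}(t/\delta)$ coherent queries to controlled-$U_A$ and its inverse (the inverse entering through the amplitude-estimation reflections). Multiplying by $\alpha_A^{t}\le\alpha_A^{2r}$ returns $\widetilde g(t)$ with $|\widetilde g(t)-g(t)|\le\alpha_A^{2r}\delta=:\eta$, so one value at accuracy $\eta$ costs $\widetilde{\mathcal{O}}(r\,\alpha_A^{2r}/\eta)$ queries to $U_A$ in a single coherent circuit; the factor $\alpha_A^{2r}$ is precisely the price of undoing the normalization blow-up of the block encoding of $A^{t}$.

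For (ii) I would work with the pencil directly, rather than with $G:=H_0^{-1}H_1$, so as to keep only a single power of $c_{\min}^{-1}$. Writing $\mathbf z=(\lambda_1,\dots,\lambda_r)$, $V=V(\mathbf z)$, $C=\mathrm{diag}(\mathbf c)$, $Z=\mathrm{diag}(\mathbf z)$, one has $H_0=VCV^{T}$ and $H_1=VCZV^{T}$, so $\det(H_1-zH_0)=\det(V)^{2}\det(C)\prod_{i}(\lambda_i-z)$, whose roots are the $\lambda_i$. For $z$ in the disk $|z-\lambda_i|\le\Delta'/2$, with $\Delta':=\min_{i\neq j}|\lambda_i-\lambda_j|$, one has $\sigma_{\min}(H_1-zH_0)\ge\sigma_{\min}(V)^{2}\,c_{\min}\,|z-\lambda_i|$ (from $\sigma_{\min}(VDV^{T})\ge\sigma_{\min}(V)^{2}\sigma_{\min}(D)$ applied to $D=C(Z-zI)$). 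An entrywise $\eta$-perturbation of the Hankel matrices changes $\widetilde H_1-z\widetilde H_0$ on this disk by at most $\mathcal{O}((1+\alpha_A)r\eta)$ in operator norm, so a Rouch\'e/continuity argument applied to $z\mapsto\sigma_{\min}(\widetilde H_1-z\widetilde H_0)$ localizes exactly one perturbed generalized eigenvalue $\widetilde\lambda_i$ per true node, with
\[
|\widetilde\lambda_i-\lambda_i|\ \lesssim\ \frac{(1+\alpha_A)\,r\,\norm{V^{-1}}^{2}}{c_{\min}}\;\eta ,
\]
which is valid --- and yields the bijection between recovered and true nodes, hence $\min_{\lambda\in Z}|\widetilde\lambda-\lambda|\le\epsilon$ --- once the right-hand side falls below $\Delta'/2$. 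The remaining classical input is a condition-number bound for Vandermonde matrices with \emph{arbitrary} complex nodes of modulus $\le\|A\|\le\alpha_A$: using the Lagrange-interpolation form of $V^{-1}$ one obtains $\norm{V^{-1}}\le\exp(\mathcal{O}(r))\,\alpha_A^{\mathcal{O}(r)}\,(\Delta')^{-\mathcal{O}(r)}$. Hence the amplification is at most $\exp(\mathcal{O}(r))\,c_{\min}^{-1}\,\alpha_A^{\mathcal{O}(r)}\,(\Delta')^{-\mathcal{O}(r)}$; setting this times $\eta$ equal to $\epsilon$ fixes $1/\eta$, and substituting into the Stage-I cost $\widetilde{\mathcal{O}}(r\,\alpha_A^{2r}/\eta)$ --- while folding all $\mathrm{poly}(r)$ prefactors and the various $\mathcal{O}(r)$ exponents into constants $q_1,q_2>0$ --- gives $\widetilde{\mathcal{O}}(\exp(r)\,\epsilon^{-1}\,c_{\min}^{-1}\,\alpha_A^{q_1 r}\,(\Delta')^{-q_2 r})$ queries to (controlled) $U_A$ and its inverse in one coherent run, together with $\mathcal{O}(r)$ queries to $\rho_I$ (one coherent run per signal value). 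A median wrapper of logarithmic overhead boosts the success probability to any constant and is absorbed into $\widetilde{\mathcal{O}}$; the unknown value of $r$ is handled by first building an $R\times R$ Hankel matrix and thresholding its singular values at a scale comparable to $\eta$, which recovers $r$ correctly in the same regime.

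The hard part will be ingredient (ii): establishing the quantitative Vandermonde condition-number bound for \emph{general} complex nodes --- where, unlike the Fourier-mode case, there is no minimum-separation-on-the-unit-circle structure to exploit, so the bound is genuinely exponential in $r$ and governed by $\Delta'$ and the node moduli --- and then propagating the Hankel perturbation through the \emph{non-normal} generalized eigenvalue problem while (a) retaining only a single power of $c_{\min}^{-1}$ (which is why the direct pencil analysis is needed rather than the reduction to $H_0^{-1}H_1$), (b) controlling the $\alpha_A^{\mathcal{O}(r)}$ factors contributed by $\|Z\|$ and the node moduli, and (c) keeping $\sigma_{\min}(H_0)$ and $\sigma_{\min}(\widetilde H_0)$ above the perturbation level so that the rank estimation and the localization argument stay legitimate. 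A secondary technical point is the bookkeeping needed so that the amplitude-estimation sampling error can be driven below the single tolerance $\eta$ without incurring factors beyond $\widetilde{\mathcal{O}}$.
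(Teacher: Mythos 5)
Your proposal reaches the theorem and shares the same two-stage architecture as the paper (Hadamard test / amplitude estimation on $(\alpha_A^t,mt,0)$-block encodings of $A^t$ to build the Hankel data, then a matrix-pencil recovery), and the Stage~I accounting — $\widetilde{\mathcal O}(r\,\alpha_A^{2r}/\eta)$ coherent queries per value via QAE, $\mathcal{O}(r)$ preparations of $\rho_I$, error budget $\eta$ tuned against the Stage~II amplification — matches the paper's Theorem~\ref{thm:general_with} proof essentially verbatim. Where you genuinely diverge is the Stage~II perturbation lemma. The paper reduces the GEVP to the ordinary eigenvalue problem for $G=H_0^{-1}H_1=(V^{\mathrm T})^{-1}ZV^{\mathrm T}$, does a first-order Neumann expansion of $\widetilde G=\widetilde H_0^{-1}\widetilde H_1$, applies Bauer--Fike to $G$, and pays $\kappa(V)\cdot\sigma_{\min}^{-1}(H_0)\cdot(1+\kappa(V))\lesssim \norm{V}^2\norm{V^{-1}}^4/c_{\min}$. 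You instead work with the pencil $H_1-zH_0=VC(Z-zI)V^{\mathrm T}$ directly, lower-bound $\sigma_{\min}(H_1-zH_0)$ on a disk of radius $\Delta'/2$ around each node, and localize a perturbed generalized eigenvalue by a Rouch\'e/continuity argument; the resulting amplification is $\mathcal{O}\bigl((1+\alpha_A)r\norm{V^{-1}}^2/c_{\min}\bigr)$, only a quadratic power of $\norm{V^{-1}}$. Your route is arguably cleaner: it never forms $\widetilde H_0^{-1}$ (so you do not need the unstated assumption that the noisy $\widetilde H_0$ stays well-conditioned for the Neumann expansion to converge), it produces a true bijection between recovered and exact nodes rather than only a $\min_\lambda$ distance, and it sheds two powers of $\norm{V^{-1}}$. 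What the paper's route buys in exchange is that Bauer--Fike is an off-the-shelf black box and the derivative-of-inverse calculation is short, whereas your sketch still owes a careful determinant-level Rouch\'e comparison (the operator-norm perturbation bound you state does not immediately control $|\det(\widetilde H_1-z\widetilde H_0)-\det(H_1-zH_0)|$, so you must argue via singular-value continuity and a separate counting step that exactly one perturbed zero sits in each disk). Since the informal statement absorbs all $\exp(\mathcal{O}(r))$, $\alpha_A^{\mathcal{O}(r)}$, $(\Delta')^{-\mathcal{O}(r)}$ factors into unspecified constants $q_1,q_2$, both analyses land on the same stated complexity.
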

\noindent The algorithm remains efficient when $r=\mathcal{O}(1)$, which is often the case in practical applications like identifying the slowest decay modes in open quantum systems. Besides, we note that the performance of the algorithm depends inversely on $c_{\min}$ and $\Delta'$. The first one follows that, as the eigen-component in the initial state becomes smaller, it becomes harder to detect it. Besides, the normalized gap $\Delta'$ decides the hardness in distinguishing different generalized eigenvalues used to solve for the target eigenvalues. This necessitates the separation of different eigenvalues to not be exponentially small, a requirement also found in the classical signal processing studies~\cite{moitra2015super}.

The purified quantum query access to $\rho_I$ enables to achieve the Heisenberg limit performance, which is believed to be optimal. While achieving better total asymptotic scaling, the algorithm based on QAE requires far deeper circuit implementation in one coherent run of the algorithm, which could hinder its applicability for early fault-tolerant quantum devices. Note that, although this method cannot detect the eigenvalue $0$ as the corresponding signal vanishes, in Supplementary Information, we construct a different method that resorts to QEVT that solves the problem.

When the eigenvalues are real, the algorithm can be improved and works even for polynomially large $r$. 
\begin{theorem}[informal]
Suppose the eigenvalues of $A\in\mathbb{C}^{N\times N}$ are real, the cost is improved to $\mathcal{\widetilde{O}}\left({\alpha_A \epsilon^{-1}c_{\min}^{-1}\kappa^4(V)d^{3/2}\kappa_J^2}{}\right)$ queries to controlled-$U_A$ and its inverse in one coherent run of the algorithm, and $\mathcal{O}\left(r\right)$ queries to $\rho_I$, where $\kappa^2(V) \leq \frac{r+1 / \Delta_w-1}{r-1 / \Delta_w-1}$.

\end{theorem}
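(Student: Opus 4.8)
The plan is to specialize the two-stage pipeline to the real-spectrum case and carefully chain three error budgets — the polynomial truncation inside QEVT, the block-encoding error of QEVT itself, and the amplitude-estimation error — through the matrix-pencil perturbation bound, exploiting that real eigenvalues turn the signal into a sum of \emph{pure} Fourier modes so that the Vandermonde conditioning is the benign super-resolution one rather than a $\kappa_J$-dependent quantity.

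\textbf{Signal generation.} First I would build $g'(t)=\tr{\rho_I e^{-2\pi i At/\alpha_A}}=\sum_{i=1}^r c_i e^{-2\pi i\lambda_i t/\alpha_A}$ of Eq.~\eqref{eq:signal3} for $t=0,1,\dots,2d-1$, where $d$ is the dimension of the Hankel matrices — a known upper bound on $r$, which may be taken polynomial in $r$ thanks to the separation hypothesis. Since each $\lambda_i$ is real with $|\lambda_i|\le\|A\|\le\alpha_A$, the argument $x=\lambda_i/\alpha_A$ lies in $[-1,1]$; I approximate $x\mapsto e^{-2\pi i x t}$ by its Jacobi--Anger/Chebyshev expansion truncated to degree $O\!\big(t+\log(1/\varepsilon_{\mathrm{poly}})\big)=\widetilde O(d)$ and feed it to QEVT to obtain an $(\alpha_p,k,\varepsilon)$-block encoding with $\alpha_p=O(1)$ using $\widetilde O(d)$ queries to controlled-$U_A$ and $U_A^\dagger$. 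The non-Hermitian subtlety enters here: writing $A=P\Lambda P^{-1}$ gives $\|\widetilde f_t(A)-f_t(A)\|\le\|P\|\,\|P^{-1}\|\max_{x\in[-1,1]}|\widetilde f_t(x)-f_t(x)|$, so one power of $\kappa_J$ multiplies the per-sample error, and a second power appears because $\|f_t(A)\|$ is bounded only by $\kappa_J$ rather than $1$, inflating the effective amplitude seen by QAE; together these produce the $\kappa_J^2$ in the statement. Each rescaled signal value is estimated to additive precision $\varepsilon_{\mathrm{sig}}$ by QAE using $\widetilde O(1/\varepsilon_{\mathrm{sig}})$ coherent queries to the block encoding and one invocation of the purified-access oracle for $\rho_I$, for $O(r)$ queries to $\rho_I$ overall.

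\textbf{Matrix pencil and robustness.} Next I run the matrix-pencil method on the noisy samples: form the $d\times d$ Hankel matrices $H_0,H_1$, estimate the rank to recover $r$, and solve $H_1|\nu\rangle=zH_0|\nu\rangle$. For robustness I use the reduction $G=H_0^{-1}H_1=(V^{\mathrm T})^{-1}ZV^{\mathrm T}$ together with $H_0=VCV^{\mathrm T}$: first-order eigenvalue perturbation of $G$ is amplified by the eigenvector condition number $\kappa\big((V^{\mathrm T})^{-1}\big)=\kappa(V)$ and by $\|H_0^{-1}\|\le\big(c_{\min}\sigma_{\min}(V)^2\big)^{-1}\le\kappa(V)^2 c_{\min}^{-1}\|V\|^{-2}$, while propagating the per-entry noise to the operator-norm error of the $d$-dimensional Hankel matrices costs a further $\sqrt d$, so the node error obeys $|\widetilde z_i-z_i|=\widetilde O\!\big(\kappa(V)^4 c_{\min}^{-1}\sqrt d\,\varepsilon_{\mathrm{sig}}\big)$. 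Converting circle nodes to eigenvalues via $z_i=e^{-2\pi i\lambda_i/\alpha_A}$ gives $|\widetilde\lambda_i-\lambda_i|=\Theta(\alpha_A)|\widetilde z_i-z_i|$, so target accuracy $\epsilon$ demands $\varepsilon_{\mathrm{sig}}=\widetilde O\!\big(\epsilon\, c_{\min}\,\kappa(V)^{-4}\kappa_J^{-2}\alpha_A^{-1}d^{-1/2}\big)$. Finally I invoke the discrete-Ingham / Moitra-type estimate for well-separated nodes on the circle to bound the only remaining spectral quantity, $\kappa^2(V)\le\frac{r+1/\Delta_w-1}{r-1/\Delta_w-1}$, valid whenever the sample count and the normalized gap satisfy the stated condition; this keeps $\kappa(V)=O(1)$ for well-separated spectra and hence permits $r$ to be polynomially large — the main qualitative improvement over the complex case.

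\textbf{Assembling the cost and the main obstacle.} One coherent run applies the deepest QEVT circuit, of degree $\widetilde O(d)$, inside the QAE amplification loop of length $\widetilde O(1/\varepsilon_{\mathrm{sig}})$, giving $\widetilde O\!\big(d/\varepsilon_{\mathrm{sig}}\big)=\widetilde O\!\big(\alpha_A\epsilon^{-1}c_{\min}^{-1}\kappa^4(V)\kappa_J^2 d^{3/2}\big)$ queries to controlled-$U_A$ and its inverse, while the $\rho_I$-oracle is used $O(r)$ times; together with the QAE guarantee this yields the Heisenberg-limited $1/\epsilon$ scaling. \textbf{The main obstacle is the non-Hermitian QEVT error analysis}: unlike Hamiltonian simulation, both the matrix-function approximation and the block-encoding error are inflated by $\kappa_J(A)$, and one must verify — uniformly in $t$ — that after re-multiplying the $O(1)$ normalization $\alpha_p$ the total per-sample error stays below the matrix-pencil noise threshold $\widetilde O\!\big(\epsilon c_{\min}\kappa(V)^{-4}\kappa_J^{-2}\alpha_A^{-1}d^{-1/2}\big)$; a secondary difficulty is pinning down the exact powers of $\kappa(V)$ and $d$ in the perturbation bound for the generalized eigenvalue problem, which is precisely what produces the $\kappa^4(V)$ and the $d^{3/2}$.
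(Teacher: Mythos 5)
Your framework is correct and mirrors the paper's (QEVT to build the Fourier signal $g'(t)=\sum_i c_i e^{-2\pi i\lambda_i t/\alpha_A}$, QAE for the signal values, matrix-pencil plus Moitra's Vandermonde bound), and your final query count matches. But the way you distribute the cost between the QEVT depth and the QAE accuracy budget is not correct, and the agreement with the stated bound arises only because two mistakes compensate each other.

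Concretely, you take the per-call QEVT depth to be $\widetilde O(d)$, as if this were Hermitian QSVT. The paper's block-encoded QEVT (Lemma~\ref{lem:qevt_r_block}, Eq.~\eqref{eq:qevt_be_complexity_r}) costs $\widetilde O\bigl(d^{3/2}\alphaU^2\cdot\|p(\cos)\sin\|_{2,[-\pi,\pi]}/\alpha_p\bigr)$, where $\alphaU$ bounds $\max_j\|U_j(A/\alpha_A)\|$ for Chebyshev polynomials of the second kind and is $O(\kappa_J)$ in the diagonalizable case. Both the $d^{3/2}$ and the $\kappa_J^2$ in the stated bound live inside this QEVT circuit: the $d\alphaU$ comes from the QLSP normalization $\alpha_{\mathbf{Pad}(A_r)^{-1}}=O(d\alphaU)$, and the extra $\sqrt d\,\alphaU$ from the input-state prefactor $\alpha_{p,\mathrm{pre}}=O(\sqrt d\,\alphaU\,\|p(\cos)\sin\|)$ fed into the block-encoding amplification. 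Your account instead assigns one $\kappa_J$ to ``error inflation'' — but since Chebyshev truncation converges exponentially, an accuracy demand multiplied by $\kappa_J$ only adds $\log\kappa_J$ to the degree $d$, not a multiplicative $\kappa_J$ — and another $\kappa_J$ to ``amplitude inflation'' via $\alpha_p=O(\kappa_J)$. The latter does cost a $\kappa_J$ in QAE iterations, but then the $1/\alpha_p$ in the QEVT amplification step cancels it, so amplitude inflation is a wash and cannot by itself produce a surviving $\kappa_J$. Meanwhile you put a stray $\sqrt d$ into the per-entry to operator-norm noise propagation (the paper uses the Frobenius bound $\|E\|\le r\varepsilon_1$, and that $r$ is exactly cancelled by the $1/r$ in Lemma~\ref{lem:real_gevp_error}, leaving no $d$-dependence in $\varepsilon_1$). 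The net effect is that your $\varepsilon_{\mathrm{sig}}$ carries spurious factors $\kappa_J^{-2}d^{-1/2}$ and your QEVT depth is missing exactly $\kappa_J^{2}d^{1/2}$; the product comes out right, but the decomposition does not.

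To fix it: set $\varepsilon_1=\Theta\bigl(c_{\min}\epsilon/(\alpha_A\kappa^4(V))\bigr)$ with no $\kappa_J$ or $d$ factor (using $\|E\|\le r\varepsilon_1$ against Lemma~\ref{lem:real_gevp_error}'s $1/(rc_{\min})$ so the $r$ drops out), let QAE run $\widetilde O(1/\varepsilon_1)$ iterations, and place the entire $d^{3/2}\alphaU^2=d^{3/2}\kappa_J^2$ inside the per-call QEVT cost via Lemma~\ref{lem:qevt_r_block}. Your $\kappa^4(V)$ reasoning also silently dropped the $\|G\|\le\kappa(V)$ factor from $\Delta G=H_0^{-1}(E_1-E_0G)$; with it, the Bauer--Fike $\kappa(V)$, the $\|G\|$ factor, and $\sigma_{\min}^{-2}(V)=\kappa^2(V)/r$ assemble cleanly into $\kappa^4(V)/(rc_{\min})$.
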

\noindent The upper bound on the condition number of $V$ is proved in the seminal work of Ref.~\cite{moitra2015super}, such that when the signal is well-separated ($\Delta_w$ is not too small), the condition number remains reasonable. As such, the algorithm could remain efficient even when $r=\mathcal{O}(\mathrm{poly}(n))$.



\vspace{0.2cm}
\noindent\textbf{Applications ---}
Our algorithms find natural applications in domains where non-normal matrices appear, such as non-hermitian physics, open quantum systems, fluid dynamics, differential equations, etc. Here, we discuss two application in estimating the Liouvillian gap for open quantum systems and the spectral abscissa and dynamics stability in fluid dynamics. 


In open quantum systems, the Liouvillian superoperator $\mathcal{L}$ governs dynamics through the master equation $\frac{d\rho}{dt} = \mathcal{L}(\rho)$. The Liouvillian gap—the smallest non-zero value of the real part of eigenvalues—determines relaxation timescales. Our algorithm efficiently estimates this gap by vectorizing the Liouvillian into a non-normal matrix and applying our complex eigenvalue algorithm, providing crucial insights into dissipative quantum dynamics. For certain systems that possess PT symmetry and have real eigenvalues~\cite{bender1998real}, we show in Supplementary Information improved algorithm for the enhanced dependence on the sparsity $r$.

The spectral abscissa of a matrix is a fundamental quantity in the stability analysis of dynamical systems, particularly in fields like control theory, fluid dynamics, and differential equations.
In fluid dynamics, linearizing the Navier-Stokes equations yields non-normal operators $\mathcal{N}$.
Essentially, the spectral abscissa $a(\mathcal{N}):=\max\{\mathrm{Re}(\lambda):\lambda\in\sigma(\mathcal{N})\}$ with $\sigma(\mathcal{N})$ denotes the set of eigenvalues, determine the stability of the dynamics.
Our algorithm for general complex eigenvalues enables efficient analysis of these operators and the spectral abscissa.
In special cases where the matrices are known to have real eigenvalues, such as oscillatory matrices in mechanical vibration analysis~\cite{gantmacher2002oscillation}, we can adopt the algorithm tailored for real eigenvalues to improve the performance.

\vspace{0.2cm}
\noindent\textbf{Conclusion ---}
In summary, this work introduces quantum algorithms designed to address the quantum eigenvalue problem for non-normal matrices. Assuming access to an initial state whose representation in the target matrix's eigenbasis is diagonally sparse, we presented distinct algorithmic frameworks tailored for general complex, purely real, and other complex eigenvalue scenarios. Our approach leverages the synergy between QEVT and classical signal processing, specifically the matrix pencil method. Quantum circuits generate characteristic signals encoding eigenvalue information, which are then processed classically to extract the desired spectral data.

Our complexity analysis establishes conditions for algorithmic efficiency, primarily dependent on eigenvalue sparsity ($r$), minimum diagonal coefficient ($c_{\min}$), and spectral gaps ($\Delta'$ or $\Delta_w$). Notably, for general complex eigenvalues, efficiency typically requires constant sparsity ($r=\mathcal{O}(1)$), whereas for real eigenvalues, polynomial sparsity ($r=\mathcal{O}(\text{poly}(n))$) can be accommodated under favorable gap conditions. A key finding is the achievement of near-Heisenberg-limited precision scaling ($\mathcal{O}(\epsilon^{-1})$) when purified quantum query access is available.

Several promising avenues for future research emerge from this work. Extending these techniques beyond eigenvalue estimation to the preparation of corresponding non-orthogonal eigenstates presents an intriguing challenge, potentially requiring novel filtering techniques compatible with the complexities of non-normal operators, particularly in the complex domain where polynomial functions may not be bounded as opposed to the normal matrices cases. Investigating methods to further optimize the computational complexity, especially the dependence on sparsity and spectral gaps, remains an important objective. Establishing rigorous lower bounds for the QESE problem and designing algorithms that provably achieve these bounds would provide deeper insights into the ultimate capabilities and limitations of quantum computation for non-normal eigenvalue problems.

\section*{Acknowledgments}
\noindent We acknowledge the support from the National Natural Science Foundation of China  NSAF Grant (No.~U2330201) and Grant No.~12361161602, and the Innovation Program for Quantum Science and Technology Grant (No.~2023ZD0300200).
Y.~Wu acknowledges the support from the Natural Science Foundation of China Grant No.~62371050. 
\bibliography{ref}
\clearpage

\widetext
\section*{Supplementary Information}
\appendix

\section{Preliminaries}
\subsection{Basics of matrix decompositions and transformations}
\label{sec:basic_matrix}
In this work, we consider $n$-qubit square matrix $A\in\mathbb{C}^{N\times N}$ with $N=2^n$. 
The eigenvalues of an arbitrary square matrix are characterized by solving 
\begin{equation}\label{eq:character_poly}
    \det[A-\lambda I]=\sum_{\sigma \in S_N} \operatorname{sgn}(\sigma) \prod_{i=1}^n A'_{\sigma(i) i}(\lambda)=0,
\end{equation}
where $\det[X]$ represents determinant of $X$, $I$ is the identity matrix, $A'(\lambda)=A-\lambda I$, $\operatorname{sgn}$ is the sign function and $\sigma$ is an element from the permutation group $S_N$. Here, we have expanded the determinant into a polynomial function, known as \emph{characteristic polynomial}, of $\lambda$ using the Leibniz formula. As such, the eigenvalues are solved as the roots of the polynomial function. Any matrix $A$ can be decomposed as the sum of its hermitian and anti-hermitian parts:
\begin{equation}\label{eq:ha_decomp}
    A=A_h+iA_a,
\end{equation}
such that 
$$A_h=\frac{A+A^\dagger}{2},~A_a=\frac{A-A^\dagger}{2i}$$
satisfying $A_h^\dagger=A_h$ and $(iA_a)^\dagger=-iA_a$.

A matrix is \emph{normal} if it commutes with its complex conjugation such that $[A,A^\dagger]=0$. In this case, the eigendecomposition of the matrix is given by $A=UDU^\dagger$, where $U$ is unitary and $D={\rm{diag}}(\lambda_1,\cdots,\lambda_N)$ with $\lambda_i$ the eigenvalues of $A$. 
In such cases, the matrix can be diagonalized by a set of orthonormal bases $\{\ket{\psi_i}\}_{i=1}^N$ such that
$$A\ket{\psi_i}=\lambda_i \ket{{\psi_i}},$$
where each $\ket{{\psi_i}}$ is the column vector of $U$ satisfying $\langle \psi_i \ket{\psi_j}=\delta_{ij}$ with $\delta_{ij}$ the Kronecker delta function and we have $A=\sum_i \lambda \ket{\psi_i}\bra{\psi_i}$. 
It is apparent that the orthonormal basis is complete, such that $$\sum_{i=1}^N\ket{\psi_i}\bra{\psi_i}= I,$$
where $ I$ is identity matrix of size $N\times N$.
Typical examples of normal matrices include hermitian, anti-hermitian, and unitary matrices. 

For matrices such that $[A,A^\dagger]\neq0$, these are \emph{non-normal} matrices. For those which are \emph{diagonalizable}, we have the eigendecomposition to be $A=P\Lambda P^{-1}$, where $P$ is an invertible matrix and 
$\Lambda=\rm{diag}(\lambda_1,\cdots,\lambda_N)$ with $\lambda_i\in\mathbb{C}$ the eigenvalues of $A$. It is known that a matrix is diagonalizable, also known as \emph{non-defective}, if and only if the \emph{geometric multiplicity} equals the \emph{algebraic multiplicity} for every eigenvalue. Formally, the algebraic multiplicity of an eigenvalue is the number of times it appears as a root in the characteristic polynomial of the matrix given by Eq.~\eqref{eq:character_poly}. The geometric multiplicity of an eigenvalue is the dimension of the eigenspace corresponding to that eigenvalue. In other words, it is the number of linearly independent eigenvectors associated with the eigenvalue.

Unlike normal matrices, diagonalizable non-normal matrices are diagonalized by two sets of bi-orthogonal bases. Denoted $\{\ket{\psi_i}\}_{i=1}^N$ and $\{\ket{\phi_i}\}_{i=1}^N$ be to the column and row vectors of $P$ and $P^{-1}$,  respectively, the two set of bases are biorthogonal because $\langle \psi_i\ket{\phi_j}=\delta_{ij}$. One finds that
\begin{equation}\label{eq:spectral_rep}
    A=\sum_{i=1}^N\lambda_i\ket{\psi_i}\bra{\phi_i}.
\end{equation}
The biorthogonal basis is complete due to
$$\sum_{i=1}^N\ket{\psi_i}\bra{\phi_i}= I.$$
For a mathematical proof, we refer the readers to Ref.~\cite{horn2012matrix,gohberg2005matrix}. For the matrix function, we want to implement $f(A)$ for some function $f(x)$. Because of the biorthogonality, it is easy to verify that 
\begin{equation}
    f(A)=\sum_{i=1}^Nf(\lambda_i)\ket{\psi_i}\bra{\phi_i}=Pf(\Lambda)P^{-1}=P
    \begin{bmatrix}
        f(\lambda_1) & & &\\
        & f(\lambda_2) & &\\
        & & \ddots &\\
        & & & f(\lambda_{N})
    \end{bmatrix}P^{-1}.
\end{equation}

In more general cases, a non-normal matrix may not be diagonalizable, i.e.~the matrix is \emph{defective}. This happens when at least one eigenvalue has a geometric multiplicity that is less than its algebraic multiplicity. That is, the number of linearly independent eigenvectors for an eigenvalue is less than the number of times that eigenvalue is repeated. Under such circumstances, the eigendecomposition is given by \emph{Jordan form decomposition} (JFD) $A=PJP^{-1}$ with $P$ some invertible matrix and $J$ the \emph{Jordan canonical form} (JCF) of $A$ given by
\begin{equation}
        \label{eq:jordan_block}
        \begin{aligned}
            J&=
            \begin{bmatrix}
                J(\lambda_1,d_1) & & &\\
                & J(\lambda_2,d_2) & &\\
                & &\ddots &\\
                & & &J(\lambda_{M},d_{M})
            \end{bmatrix}.
        \end{aligned}
        \end{equation}
Here, each $J(\lambda_l,d_l)\in\mathbb{C}^{d_l\times d_l}$ is a \emph{Jordan block} of block diagonal form
\begin{align*}
    J(\lambda_l,d_l)&=
            \begin{bmatrix}
                \lambda_l &  &  &  &  & \\
                1 & \lambda_l &  &  &  & \\
                0 & 1 & \lambda_l &  &  & \\
                \vdots & 0 & 1 & \ddots &  & \\
                \vdots & \ddots & \ddots & \ddots & \lambda_l & \\
                0 & \cdots & \cdots & 0 & 1 & \lambda_l\\
            \end{bmatrix}.
\end{align*}
When all block dimensions of the Jordan blocks equal one, a matrix is diagonalizable. Unlike non-defective matrices, the matrix function of defective matrices is given by \cite[Sec.~2.3]{low2024quantum}:
\begin{equation}
\label{eq:jordan_form_transformation}
\begin{aligned}
    f(A)&=Pf(J)P^{-1}
    =P
    \begin{bmatrix}
        f(J(\lambda_0,d_0)) & & &\\
        & f(J(\lambda_1,d_1)) & &\\
        & &\ddots &\\
        & & &f(J(\lambda_{s-1},d_{s-1}))     
    \end{bmatrix}P^{-1},\\
    f(J(\lambda_l,d_l))&=
    \begin{bmatrix}
        f(\lambda_l) &  &  &  &  & \\
        f'(\lambda_l) & f(\lambda_l) &  &  &  & \\
        \frac{f^{(2)}(\lambda_l)}{2!} & f'(\lambda_l) & f(\lambda_l) &  &  & \\
        \vdots & \frac{f^{(2)}(\lambda_l)}{2!} & f'(\lambda_l) & \ddots &\\
        \vdots & \ddots & \ddots & \ddots & \ddots & \\
        \frac{f^{(d_l-1)}(\lambda_l)}{(d_l-1)!} & \cdots & \cdots & \frac{f^{(2)}(\lambda_l)}{2!} & f'(\lambda_l) & f(\lambda_l).\\
    \end{bmatrix}
\end{aligned}
\end{equation}
This shows that the matrix functions of defective matrices rely on higher derivatives. A critical quantity is the maximal size of the Jordan block in $J$, which is formally defined as
\begin{equation}
    d_{\max }:=\max_\lambda \left\{\min \left\{k \in \mathbb{N} \mid \operatorname{ker}\left((A-\lambda  I)^k\right)=\operatorname{ker}\left((A-\lambda  I)^{k+1}\right)\right\}\right\},
\end{equation}
where $\operatorname{ker}(\cdot)$ denotes the kernel. This factor relates directly to the complication of matrix functions~\cite[Sec.~1.2.2]{higham2008functions}.

Such a form of function complicates our analysis. Yet, we resort to the fact that diagonalizable matrices with simple spectrum (distinctive eigenvalues) are dense in the space of matrices~\cite{horn2012matrix}. This indicates that for a defective matrix $A$, we can find a non-defective one $A'$ by perturbing the entries of $A$. For a sufficiently small perturbation, the eigenvalues of $A'$ are close to those of $A$. Thus, in this work, we only consider non-defective matrices. For diagonalizable matrices, we find that $d_{\max}=1$. Besides, we also define the \emph{Jordan condition number} as 
\begin{equation}
    \kappa_J(A):=\inf_{J,P}\norm{P}\norm{P^{-1}}
\end{equation}
with minimization over all possible forms of JNF. In cases the matrix becomes diagonalizable, this factor gauges how much the matrix deviates from a normal one~\cite{trefethen2020spectra}. 


\subsection{Block encoding and other quantum algorithmic tools}
\label{sec:be}
Block encoding is a foundational technique in quantum algorithms that enables the embedding of matrices into unitary operators, facilitating a broad range of quantum operations. We present a definition as follows:
\begin{definition}[Block Encoding]
\label{def:block_encoding}
Let $A \in \mathbb{C}^{2^n \times 2^n}$ be a matrix. A unitary operator $U \in \mathbb{C}^{2^{m+n} \times 2^{m+n}}$ is called an $(\alpha_A, m, \epsilon)$-block encoding of $A$ if
\begin{equation}\label{eq:be}
\left\| A - \alpha_A \left( \bra{G_2} \otimes I_n \right) U_A \left( \ket{G_1} \otimes I_n \right) \right\| \leq \epsilon,
\end{equation}
where $\alpha_A \geq \|A\|$ is the normalization factor, $m$ is the number of ancilla qubits, $\epsilon \geq 0$ is the encoding error, and $\ket{G_1}, \ket{G_2}$ are unit vectors in the ancillary space.
\end{definition}
The block encoding framework provides a probabilistic implementation of matrix operations. When measuring the ancillary register in state $\ket{G_2}$ after applying $U$ to $\ket{G_1} \otimes \ket{\psi}$, one obtains a state proportional to $A\ket{\psi}$ with probability $\|A\ket{\psi}\|_2^2/\alpha_A^2$, where $\|\cdot\|_2$ denotes the $\rm{L}_2$ norm. This probabilistic nature is fundamental to quantum algorithms that leverage amplitude amplification~\cite{yoder2014fixed} to enhance success probabilities.

The block encoding framework is flexible and allows several basic operations, such as a linear combination of several block-encoded matrices and matrix multiplication. In particular, we provide the following lemma for the realization of products of block-encoded matrices for later convenience.
\begin{lemma}[Product of multiple block-encoded matrices {\cite[Adapted from Lemma 53]{gilyen2019quantum}}]\label{lemma:block_prod}
Let $U_A$ be a $(\alpha_A, m, 0)$-block encoding of $A$ with $\ket{G_1}=\ket{G_2}=\ket{0^m}$ as defined in Definition \ref{def:block_encoding}. Then, there exists $U_{A^t}$, which is a $(\alpha_A^t, mt, 0)$-block encoding of $A^t$.
\end{lemma}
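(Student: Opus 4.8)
The plan is to prove Lemma~\ref{lemma:block_prod} by induction on $t$, using the fact that the product of two block-encodings is again a block-encoding, and then tracking how the normalization factor and ancilla count accumulate across $t$ factors. Concretely, I would first recall the two-matrix statement behind \cite[Lemma 53]{gilyen2019quantum}: if $U_A$ is an $(\alpha_A, m, 0)$-block encoding of $A$ with $\ket{G_1}=\ket{G_2}=\ket{0^m}$, and $U_B$ is an $(\alpha_B, m', 0)$-block encoding of $B$ with the analogous all-zero ancilla convention, then the circuit $(I_{m'}\otimes U_A)(U_B \otimes I_m)$ — acting on the $(m'+m+n)$-qubit register with the two ancilla blocks juxtaposed — is an $(\alpha_A\alpha_B, m+m', 0)$-block encoding of $AB$, again with the flag state $\ket{0^{m+m'}}$. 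The key algebraic point is that projecting both ancilla blocks onto $\ket{0}$ factors through the intermediate projection: $(\bra{0^{m+m'}}\otimes I_n)(I\otimes U_A)(U_B\otimes I)(\ket{0^{m+m'}}\otimes I_n) = \big((\bra{0^m}\otimes I_n)U_A(\ket{0^m}\otimes I_n)\big)\big((\bra{0^{m'}}\otimes I_n)U_B(\ket{0^{m'}}\otimes I_n)\big) = (A/\alpha_A)(B/\alpha_B)$, since the exact ($\epsilon=0$) block encodings give equality, not just an inequality.

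Next I would run the induction: the base case $t=1$ is immediate ($U_{A^1}=U_A$ is an $(\alpha_A, m, 0)$-block encoding of $A^1$). For the inductive step, suppose $U_{A^{t-1}}$ is an $(\alpha_A^{t-1}, m(t-1), 0)$-block encoding of $A^{t-1}$ with all-zero flag state; apply the two-matrix composition with $B = A$ (encoded by $U_A$, with $m$ ancillas) to obtain an $(\alpha_A^{t-1}\cdot\alpha_A,\ m(t-1)+m,\ 0) = (\alpha_A^t, mt, 0)$-block encoding of $A^{t-1}A = A^t$, and the flag state remains $\ket{0^{mt}}$, so the hypothesis is preserved. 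This closes the induction and yields exactly the claimed parameters. I would also note in passing that the construction is explicit — $U_{A^t}$ is the ordered product of $t$ copies of $U_A$, each acting on a fresh block of $m$ ancilla qubits and the common system register — which is what later sections need when counting query complexity.

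The only genuinely delicate point, and the one I would state carefully rather than wave at, is the bookkeeping of which ancilla registers each copy of $U_A$ acts on and why the $\ket{G_1}=\ket{G_2}=\ket{0^m}$ hypothesis (rather than general unit vectors $\ket{G_1},\ket{G_2}$) is what makes the composition clean: with generic flag vectors the intermediate insertion of $\ket{G_1}\bra{G_2}$ would not collapse to a resolution of identity on the ancilla and the product of block encodings would pick up an extra inner-product factor $\langle G_2 | G_1\rangle$, degrading the normalization. Since the paper's convention fixes the flags to $\ket{0^m}$, this subtlety does not arise, but it is worth one sentence. Everything else — submultiplicativity giving $\|A^t\|\le \|A\|^t \le \alpha_A^t$ so that $\alpha_A^t$ is a valid normalization, and the exactness $\epsilon=0$ propagating through products because $0\cdot(\text{anything})=0$ with no error accumulation — is routine and I would dispatch it in a line. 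I do not anticipate a real obstacle here; the lemma is essentially an unrolling of \cite[Lemma 53]{gilyen2019quantum}, and the proof is a short induction.
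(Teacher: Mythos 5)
Your proof is correct and takes the same approach as the paper's, which simply states the result as ``iteratively apply Lemma~53 of Ref.~\cite{gilyen2019quantum} $(t-1)$ times''; your induction is just the unrolled version of that one-liner, and the parameters $(\alpha_A^t, mt, 0)$ come out as claimed.

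One correction to the side remark in your last paragraph, which is wrong (though harmless for the case at hand): generic flag vectors $\ket{G_1}, \ket{G_2}$ do \emph{not} cause the composition to pick up an inner-product factor $\langle G_2 | G_1\rangle$. In the tensor-product composition $(I_{m'}\otimes U_A)(U_B\otimes I_m)$ the two unitaries act on disjoint ancilla blocks, so each block's output projection commutes past the other block's unitary and the whole expression factors exactly as
\begin{equation*}
\pbra{\bra{G_2}\bra{G_2}\otimes I}\pbra{I\otimes U_A}\pbra{U_B\otimes I}\pbra{\ket{G_1}\ket{G_1}\otimes I}
= \pbra{\tfrac{A}{\alpha_A}}\pbra{\tfrac{B}{\alpha_B}},
\end{equation*}
with flag states $\ket{G_1}^{\otimes t}$, $\ket{G_2}^{\otimes t}$ for the composite. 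There is never an ``intermediate insertion of $\ket{G_1}\bra{G_2}$'' on a shared register; that penalty would only arise if you incorrectly reused the \emph{same} $m$-qubit ancilla for both applications of $U_A$, which the composition does not do. The paper fixes $\ket{G_1}=\ket{G_2}=\ket{0^m}$ in Assumption~1 for convenience elsewhere (the Hadamard-test estimator in Eq.~\eqref{eq:power_unbiased} and the citations to prior lemmas stated in the $\ket{0}$ convention), not because this product lemma requires it.
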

\begin{proof}
This follows directly from iteratively applying Lemma 53 of Ref.~\cite{gilyen2019quantum} $(t-1)$ times.
\end{proof}


\begin{lemma}[Block encoding amplification {\cite[Theorem 2]{low2017hamiltonian}}]
\label{lem:amp_block}
    Let $C$ be a matrix such that $C/\alpha$ is block encoded by $O_C$ with some normalization factor $\alpha$.
    Then given any $\alpha_C\geq\norm{C}$, the operator
    \begin{equation}
        \frac{C}{2\alpha_C}
    \end{equation}
    can be block encoded with accuracy $\epsilon$ using
    \begin{equation}
        \mathcal{O}\left(\frac{\alpha}{\alpha_C}\log\left(\frac{1}{\epsilon}\right)\right)
    \end{equation}
    queries to the controlled-$O_C$ and its inverse.
\end{lemma}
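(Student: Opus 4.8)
The plan is to obtain the amplified block encoding as a quantum singular value transformation (QSVT) of $O_C$, using a polynomial that acts as a steep linear ramp on the relevant singular-value window and stays bounded by one elsewhere. Write $\gamma := \alpha/(2\alpha_C)$ for the target amplification factor and $c := \alpha_C/\alpha = 1/(2\gamma)$. Since $O_C$ realizes $C/\alpha$ as a sub-block and $\|C\|\le\alpha_C$, every singular value of $C/\alpha$ lies in $[0,\|C\|/\alpha]\subseteq[0,c]$; moreover $C/(2\alpha_C)=\gamma\,(C/\alpha)$ is precisely the odd singular-value transform of $C/\alpha$ under $x\mapsto\gamma x$. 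The only obstruction to applying QSVT directly with this linear map is that its sup-norm on $[-1,1]$ is $\gamma$, which may exceed $1$; but since $\gamma c = 1/2 < 1$ there is ample headroom to replace it by a bounded polynomial approximant.

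Accordingly, the first step is to invoke the uniform spectral amplification polynomial: there exists an odd real polynomial $P$ of degree $d=\mathcal{O}\big(\gamma\log(1/\epsilon)\big)=\mathcal{O}\big((\alpha/\alpha_C)\log(1/\epsilon)\big)$ with $|P(x)-\gamma x|\le\epsilon$ for all $x\in[-c,c]$ and $|P(x)|\le 1$ for all $x\in[-1,1]$. This is exactly the polynomial underlying Theorem~2 of Ref.~\cite{low2017hamiltonian}; alternatively it can be produced from the standard polynomial-approximation toolkit by approximating, to precision $\epsilon$, a smoothed clamped-linear function that equals $\gamma x$ on $[-c,c]$ and saturates in magnitude at $1$ outside, the slack $\gamma c=1/2$ guaranteeing feasibility of the sup-norm constraint. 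In the easy regime $\alpha_C\ge\alpha/2$ one has $\gamma\le 1$ and may simply take $P(x)=\gamma x$ (degree $1$), so no approximation is needed and $\mathcal{O}(1)\le\mathcal{O}(\log(1/\epsilon))$ queries already suffice.

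The second step is to feed $P$ into the QSVT theorem of Ref.~\cite{gilyen2019quantum}: because $P$ has definite (odd) parity, real coefficients, and $\|P\|_{[-1,1]}\le 1$, there is a circuit making $d$ calls to controlled-$O_C$ and controlled-$O_C^\dagger$, together with $\mathcal{O}(d)$ single-qubit phase rotations and $\mathcal{O}(1)$ extra ancillas (one more, via a two-term LCU, if one must extract the real part), that block-encodes $P^{\mathrm{(SV)}}(C/\alpha)$. The error estimate then reduces to a maximum over the spectrum: $\big\|P^{\mathrm{(SV)}}(C/\alpha)-C/(2\alpha_C)\big\| = \max_i |P(\sigma_i)-\gamma\sigma_i|\le\epsilon$, since each singular value $\sigma_i$ of $C/\alpha$ lies in $[0,c]$. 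Hence the circuit is an $(1, m+\mathcal{O}(1), \epsilon)$-block encoding of $C/(2\alpha_C)$ built from $\mathcal{O}\big((\alpha/\alpha_C)\log(1/\epsilon)\big)$ queries to controlled-$O_C$ and its inverse, which is the claim.

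The main obstacle is the first step — exhibiting the bounded amplifying polynomial with degree scaling as $(\alpha/\alpha_C)\log(1/\epsilon)$. The linear lower bound $\deg P\gtrsim\gamma$ forced by Bernstein's inequality applied to $P'(0)$ shows the degree is essentially tight, and the matching upper-bound construction is the genuine technical content, inherited from uniform spectral amplification. The remaining steps are a routine invocation of QSVT together with a triangle-inequality error bound, and would be spelled out only briefly.
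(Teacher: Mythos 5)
The paper does not prove this lemma; it is imported verbatim as Theorem 2 of Ref.~\cite{low2017hamiltonian} (uniform spectral amplification) and used as a black box, so there is no in-paper proof to compare against. On its own merits, your reconstruction is sound and is essentially the modern, QSVT-phrased presentation of Low's result: set $\gamma=\alpha/(2\alpha_C)$, observe that the singular values of $C/\alpha$ lie in $[0,c]$ with $c=1/(2\gamma)$ so that $\gamma c=1/2$ leaves sup-norm headroom, build an odd polynomial $P$ of degree $\mathcal{O}(\gamma\log(1/\epsilon))$ that tracks $\gamma x$ on $[-c,c]$ to error $\epsilon$ while satisfying $\|P\|_{[-1,1]}\le 1$, and feed $P$ into the singular-value-transform machinery. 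The original reference predates the QSVT formalism and argues via qubitization walk operators and amplitude multiplication rather than directly via a singular-value transform, but the polynomial-approximation content — and the Bernstein-type degree lower bound you mention showing $\deg P = \Theta(\gamma)$ — is the same. Two minor remarks: the degree-$1$ shortcut you flag for $\alpha_C\ge\alpha/2$ is fine but is subsumed by the general statement; and the extra ancilla to extract the Hermitian (real) part of the QSVT polynomial is needed even for real odd $P$ in general, so it is worth keeping, but it only costs $\mathcal{O}(1)$ ancillas as you note and does not affect the query count.
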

We also borrow the block encoding version of the QLSP algorithm. That is, block encodes a normalized inverse matrix.
\begin{lemma}[Block encoding inversion {\cite[Corollary 69]{gilyen2019quantum}}]
\label{lem:inv_block}
    Let $\epsilon\in(0,\frac{1}{2}]$. Let $O_X$ be a ($\alpha_X, m,0$)-block encoding of $X$ with $\alpha_X\geq\norm{X}$. Then, there is a ($\alpha_{X^{-1}}, m+1, \epsilon$)-block encoding of $X^{-1}$ using
     $$
    \mathcal{O}\left(\alpha_X\alpha_{X^{-1}}\log\left(\frac{1}{\epsilon}\right)\right)
    $$
    queries to the controlled-$O_X$ and its inverse such that $\alpha_{X^{-1}}\geq\norm{X^{-1}}$ upper bounds the norm of the inverse matrix.
\end{lemma}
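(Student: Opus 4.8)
The plan is to realize $X^{-1}$, up to its normalization factor, as a polynomial transformation of the singular values of the block-encoded contraction $X/\alpha_X$, and then invoke the quantum singular value transformation (QSVT) of Ref.~\cite{gilyen2019quantum}. \emph{Step 1 (reduction to a scalar approximation problem).} Since $O_X$ is an exact $(\alpha_X,m,0)$-block encoding of $X$, it exactly block-encodes the contraction $X/\alpha_X$, whose singular values lie in $[\sigma_{\min}(X)/\alpha_X,\,1]$ because $\alpha_X\ge\norm{X}$. Writing $\kappa:=\alpha_X\alpha_{X^{-1}}$ and using $\sigma_{\min}(X)=1/\norm{X^{-1}}\ge 1/\alpha_{X^{-1}}$, every singular value of $X/\alpha_X$ lies in $[1/\kappa,\,1]$. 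From the SVD $X=W\Sigma V^{\dagger}$ (columns $w_i$ of $W$, $v_i$ of $V$, diagonal entries $\sigma_i$ of $\Sigma$) we have $X^{-1}/\alpha_{X^{-1}}=\sum_i\frac{1}{\sigma_i\alpha_{X^{-1}}}\,v_i w_i^{\dagger}$, so a singular value $y=\sigma_i/\alpha_X$ of $X/\alpha_X$ must be mapped to $f(y):=1/(\kappa y)$. Note that $f$ maps $[1/\kappa,1]$ into $[1/\kappa,1]\subseteq[-1,1]$, so a uniformly bounded polynomial approximation is meaningful.

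\emph{Step 2 (bounded polynomial approximation of $1/x$).} Invoke the standard approximation result: there is an odd real polynomial $p$ of degree $d=\mathcal{O}\!\left(\kappa\log(1/\epsilon)\right)$ with $\max_{y\in[-1,1]}|p(y)|\le 1$ and $|p(y)-f(y)|\le\epsilon$ for all $y\in[1/\kappa,1]$. This is the polynomial underlying QSVT-based matrix inversion, obtained by truncating a Chebyshev expansion of a mollified version of $1/x$ (see Ref.~\cite{gilyen2019quantum}); a harmless constant rescaling (approximating, e.g., $\tfrac12 f$) enforces strict boundedness on $[-1,1]$, and the extra constant factor is absorbed into $\alpha_{X^{-1}}$ or removed by a single round of amplitude amplification as in Lemma~\ref{lem:amp_block}.

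\emph{Step 3 (QSVT and error accounting).} Apply QSVT to $O_X$ with the phase sequence implementing $p$; since $p$ is odd, the transformation acts on the SVD of $X/\alpha_X$ in the direction compatible with inversion, producing a unitary $U_{X^{-1}}$ that exactly block-encodes $\sum_i p(\sigma_i/\alpha_X)\,v_i w_i^{\dagger}$ using $d$ queries to controlled-$O_X$ and controlled-$O_X^{\dagger}$ (its inverse), $\mathcal{O}(d)$ single-qubit rotations, and one extra ancilla qubit. Because all singular values of $X/\alpha_X$ lie in $[1/\kappa,1]$ where $|p-f|\le\epsilon$, and $f$ applied to those singular values is by construction $X^{-1}/\alpha_{X^{-1}}$, the block-encoded matrix differs from $X^{-1}/\alpha_{X^{-1}}$ by at most $\max_{y\in[1/\kappa,1]}|p(y)-f(y)|\le\epsilon$ in operator norm. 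Hence $U_{X^{-1}}$ is an $(\alpha_{X^{-1}},m+1,\epsilon)$-block encoding of $X^{-1}$ with query count $\mathcal{O}(\alpha_X\alpha_{X^{-1}}\log(1/\epsilon))$, and $\alpha_{X^{-1}}\ge\norm{X^{-1}}$ holds by hypothesis.

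\emph{Main obstacle.} The only non-routine ingredient is Step 2: exhibiting a polynomial whose degree is \emph{linear} in $\kappa$ (up to logarithms) that stays uniformly bounded by $1$ on $[-1,1]$ while $\epsilon$-approximating $1/x$ on $[1/\kappa,1]$ — it is this linear-in-$\kappa$ degree that makes the query bound linear in $\alpha_X\alpha_{X^{-1}}$. This is a known analytic fact quoted directly from Corollary~69 of Ref.~\cite{gilyen2019quantum}, of which the present lemma is essentially a restatement (up to the convention for the output normalization); the remaining steps are bookkeeping on normalizations and the mechanical construction of the QSVT circuit.
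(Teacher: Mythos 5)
The paper does not supply a proof of this lemma at all: it is imported verbatim (up to notation) as a citation to Corollary~69 of Ref.~\cite{gilyen2019quantum}, and the text moves on. So there is no internal proof to compare against; what you have done is reconstruct the proof that lives in the cited reference. Your reconstruction is correct and is indeed the standard QSVT argument: reduce to approximating $1/y$ on the effective singular-value interval $[1/\kappa,1]$ with $\kappa=\alpha_X\alpha_{X^{-1}}$, invoke the bounded odd-polynomial approximation of degree $\mathcal{O}(\kappa\log(1/\epsilon))$, and run QSVT with that polynomial, which uses the extra ancilla and yields the $(\alpha_{X^{-1}},m+1,\epsilon)$-block encoding. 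Your bookkeeping that $y=\sigma_i/\alpha_X\mapsto 1/(\kappa y)$ correctly reproduces $X^{-1}/\alpha_{X^{-1}}$ on the SVD, and the observation that parity forces the ``right'' QSVT variant ($\sum_i p(\sigma_i/\alpha_X)\,v_i w_i^\dagger$) is the right one. The only caveat worth flagging is cosmetic: Gilyén et al.'s Corollary~69 actually produces a block encoding of $X^{-1}/(2\kappa)$ with cost $\mathcal{O}(\kappa\log(\kappa/\epsilon))$; both the factor $1/2$ in the normalization (which you handle by rescaling/amplification via Lemma~\ref{lem:amp_block}) and the suppressed $\log\kappa$ inside the logarithm are minor restatements already present in the paper's version of the lemma, so your sketch is faithful to what the paper asserts.
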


Along this side, we also employ the quantum amplitude estimation (QAE) method to facilitate later discussion.

\begin{lemma}[Quantum amplitude estimation {\cite[Adapted from Theorem 3]{aaronson2020quantum}}]\label{lemma:ae}
Suppose $U$ is an $(a+b)$-qubit unitary operator such that
$$
U|0\rangle_{a+b}=\sqrt{p}|0\rangle_a\left|\phi_0\right\rangle_b+\sqrt{1-p}|1\rangle_a\left|\phi_1\right\rangle_b,
$$
where $\left|\phi_0\right\rangle$ and $\left|\phi_1\right\rangle$ are pure quantum states and $p \in[0,1]$. There exists a quantum algorithm that outputs $\widetilde{p} \in[0,1]$ such that
$$
|\widetilde{p}-p| \leq \epsilon
$$
with probability at least $(1-\delta)$ for $\delta\in[0,1]$, using $\mathcal{O}(\epsilon^{-1}\log(\delta^{-1}))$ queries to $U$ and $U^{\dagger}$ for sufficiently small $\epsilon$.
\end{lemma}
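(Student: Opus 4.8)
\noindent\emph{Strategy.} The plan is to reduce the estimation of $p$ to the estimation of a single rotation angle and then invoke the Grover-iteration counting scheme of Aaronson-Rall (Theorem~3 of Ref.~\cite{aaronson2020quantum}), whose query cost is already $\mathcal{O}(\epsilon^{-1}\log\delta^{-1})$; the remaining work is to phrase that scheme's guarantee --- naturally an estimate of the angle, or a multiplicative estimate of the amplitude --- as the additive bound $|\widetilde p-p|\le\epsilon$ claimed here, and to handle the degenerate regime $p\approx 0,1$. Write $p=\sin^2\theta$ with $\theta\in[0,\pi/2]$. First I would form the reflections $R_0=I-2\ket{0}\bra{0}_{a+b}$ and $R_\chi=I-2\left(\ket{1}\bra{1}_a\otimes I_b\right)$ and the Grover operator $G=-U R_0 U^\dagger R_\chi$. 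The two-dimensional real subspace $\operatorname{span}\{\ket{0}_a\ket{\phi_0}_b,\ket{1}_a\ket{\phi_1}_b\}$ is $G$-invariant, contains $U\ket{0}$, and $G$ acts on it as a rotation by $2\theta$; hence $G^kU\ket{0}$ has squared overlap $\sin^2\!\big((2k+1)\theta\big)$ with the good-flag subspace, so measuring the $a$-register on $G^kU\ket{0}$ and repeating $\mathcal{O}(\log\delta^{-1})$ times yields a constant-accuracy estimate of $\sin^2\!\big((2k+1)\theta\big)$ --- the only primitive needed below.

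\noindent\emph{Estimating the angle.} With this primitive I would run the Aaronson-Rall refinement: start from the trivial interval $\theta\in[0,\pi/2]$, and at round $j$ apply $G$ with $k_j=\Theta(2^j)$ iterations, which amplifies $\theta$ enough that the current interval contracts by a constant factor, keeping the per-round failure probability below $\delta/J$ (with $J=\mathcal{O}(\log\epsilon^{-1})$ the number of rounds) by majority over $\mathcal{O}(\log(J/\delta))$ repetitions. After round $J$ the interval has length $\mathcal{O}(\epsilon)$ and the total number of Grover iterations --- hence of queries to $U$ and $U^\dagger$ --- is $\sum_j\mathcal{O}(2^j\log(J/\delta))=\mathcal{O}(\epsilon^{-1}\log\delta^{-1})$, after the careful accounting of Ref.~\cite{aaronson2020quantum} that absorbs the $\log\log\epsilon^{-1}$. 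Equivalently one could run $\mathcal{O}(\epsilon^{-1})$-register phase estimation on $G$ to read off its eigenphase $\pm2\theta$ and boost by a median of $\mathcal{O}(\log\delta^{-1})$ runs, with the same query count up to logarithms.

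\noindent\emph{From the angle to the amplitude.} I would output $\widetilde p=\sin^2\widetilde\theta\in[0,1]$, with $\widetilde\theta$ the midpoint of the final interval (clamped to $[0,\pi/2]$). On the event that the interval contains $\theta$, the Lipschitz bound $\big|\tfrac{d}{d\theta}\sin^2\theta\big|=|\sin2\theta|\le1$ gives $|\widetilde p-p|=|\sin^2\widetilde\theta-\sin^2\theta|\le|\widetilde\theta-\theta|=\mathcal{O}(\epsilon)$, which is $\le\epsilon$ once the target interval length is contracted by a suitable constant; a union bound over the $J$ rounds makes the overall success probability at least $1-\delta$. Combining with the query count of the previous step proves the lemma.

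\noindent\emph{Main obstacle.} The genuinely delicate part is controlling the two ambiguities intrinsic to Grover-based estimation: $G$ has the conjugate eigenphases $\pm2\theta$, and $(2k+1)\theta$ is only fixed modulo $2\pi$, so one must argue the refinement is never trapped on a spurious angle --- handled by keeping $k_j\theta$ out of the wrap-around regime until the interval is short, the invariant maintained by the Aaronson-Rall scheme. The second is the boundary: for $p$ extremely close to $0$ or $1$ the rotation is nearly degenerate and the plain scheme can stall, which is exactly why the statement carries the ``for sufficiently small $\epsilon$'' caveat; for a bound valid for all $p$ one first spends $\mathcal{O}(1)$ iterations on a constant-precision estimate to decide whether $\theta$ is near an endpoint and, if so, falls back to direct sampling of the flag qubit. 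Everything else --- the invariant-subspace computation, the geometric sum of iteration counts, and the $\sin^2$ Lipschitz step --- is routine.
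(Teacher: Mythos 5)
Your proof is correct but takes a genuinely different route from the paper's. You unpack the interior of the Aaronson--Rall scheme --- the Grover rotation on the two-dimensional invariant subspace, the exponential/binary interval refinement on the rotation angle $\theta$ --- and then convert an additive $\mathcal{O}(\epsilon)$ estimate of $\theta$ into an additive estimate of $p=\sin^2\theta$ via the Lipschitz bound $\bigl|\tfrac{d}{d\theta}\sin^2\theta\bigr|\le1$. The paper instead treats Theorem~3 of Ref.~\cite{aaronson2020quantum} entirely as a black box: it records that $\mathcal{O}\!\bigl(\varepsilon_1^{-1}p^{-1/2}\log\delta^{-1}\bigr)$ queries yield a \emph{relative}-error-$\varepsilon_1$ estimate $\hat{x}$ of $\sqrt{p}$, formally substitutes $\varepsilon_1=\varepsilon/\sqrt{p}$ so that the $p^{-1/2}$ cancels in the query count, and then bounds $|\hat{x}^2-p|\le 2\sqrt{p}\,\varepsilon+\varepsilon^2\le 2\varepsilon+\varepsilon^2$ before taking $\varepsilon=\epsilon/2$. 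Your derivation buys a cleaner picture of \emph{why} the $p^{-1/2}$ disappears (the rotation-angle parametrization linearizes the problem uniformly in $\theta$) and avoids the paper's slightly uncomfortable step of plugging the unknown $p$ into the accuracy parameter; the paper's version is shorter and stays closer to the cited theorem's statement. You also correctly identify that the "sufficiently small $\epsilon$" caveat is what protects the argument near $p\in\{0,1\}$ --- the same role it plays in absorbing the lower-order $\varepsilon^2$ term in the paper's algebra.
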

\begin{proof}
We invoke \cite[Theorem 3]{aaronson2020quantum}. There, with probability no less than $1-\delta$, a query complexity of $\mathcal{O}(\varepsilon_1^{-1}{p}^{-1/2}\log(\delta^{-1}))$ of $U$ and $U^\dagger$ is applied to output an estimation $\hat{x}$ of $\sqrt{p}$ such that
$$(1-\varepsilon_1)\sqrt{p}<\hat{x}<(1+\varepsilon_1)\sqrt{p}.$$
This achieves relative error $\varepsilon_1$ in estimating $\sqrt{p}$. In absolute error, we set $\varepsilon_1=\varepsilon/\sqrt{p}$ so that
$$|\hat{x}-\sqrt{p}|<\varepsilon.$$
Thus, let $\hat{x} = \sqrt{p} + \eta,~|\eta|<\varepsilon$. By setting $\hat{x}^2$ as an estimation to $p$, we find
\begin{align*}
  \bigl|\hat{x}^2-p\bigr|
  = \bigl|2\sqrt{p}\,\eta + \eta^2\bigr|
  \;\le\; 2\sqrt{p}\,|\eta| + \eta^2
  \;\le\; 2\sqrt{p}\,\varepsilon + \varepsilon^2
  \;\le\; 2\,\varepsilon + \varepsilon^2.
\end{align*}
For sufficient small $\epsilon$, we may set $\varepsilon=\epsilon/2$. This results in the claimed complexity.
\end{proof}

Compared to the seminal work of Ref.~\cite{brassard2000quantum}, the above results simplified the circuit implementation such that it does not require a quantum Fourier transform, which could render the algorithm more amenable to early fault-tolerant quantum devices.

\subsection{Chebyshev and Faber polynomials}\label{sec:cheby_faber}
We introduce the tools in approximation theory~\cite{sachdeva2014faster,trefethen2019approximation} for approximating a target function using polynomials. The theory aims to provide the (near) optimal uniform approximation.
That is, given a function $f \in C[a,b]$ and an approximant $g$ from a prescribed class, one seeks to minimize the supremum norm:
\begin{equation}
\|f-g\|_{\infty} = \sup_{x \in [a,b]} |f(x) - g(x)|.
\end{equation}
The best polynomial approximation is then given by the lowest degree polynomial function with desirable accuracy in the uniform approximation. The constructive aspect of optimal uniform approximation is obtained by the Chebyshev polynomials of the first kind, denoted as $T_d(x)$. The function provides (near) optimal performance for functions defined on the canonical interval $[-1,1]$, and can be defined via the trigonometric relation:
$$
T_d(x) = \cos(d\arccos(x)).
$$
To see that it is indeed a polynomial function, one finds that they can also be formulated
through the recursive formulation:
\begin{equation*}
\begin{aligned}
T_0(x) &= 1\\
T_1(x) &= x\\
T_{d+1}(x) &= 2xT_d(x) - T_{d-1}(x).
\end{aligned}
\end{equation*}
The Chebyshev polynomials manifest an equioscillatory behavior characterized by attaining their extremal values of precisely $\pm 1$ at $n+1$ distinct points within $[-1,1]$ and possessing exactly $n$ zeros interspersed among these extrema. This underpins them to be the best uniform polynomial approximants as dictated by the Chebyshev equioscillation theorem~\cite{trefethen2019approximation}. Following Ref.~\cite{low2024quantum}, we define
\begin{equation}
    \widetilde{{T}}_j(x)=\begin{cases}{T}_j(x),&j\geq1,\\\frac{1}{2}{T}_0(x),&j=0\end{cases}
\end{equation}
such that the first Chebyshev polynomial is rescaled by a factor of $\frac{1}{2}$. Then, consider a Chebyshev expansion $f(x)=\sum_{j=0}^{\infty}\beta_j{T}_j(x)=\sum_{j=0}^{\infty}\widetilde{\beta}_j\widetilde{T}_j(x)$ with $\beta_j$ the coefficients of the expansion, and $\widetilde{\beta_j}$ the rescaled coefficient:
\begin{equation}
    \widetilde{\beta}_j=\begin{cases}\beta_j,&j\geq1,\\2\beta_0,&j=0.\end{cases}
\end{equation}

When extending the uniform approximation theory to the complex domain, Faber polynomials~\cite{markushevich2005theory} emerge as the preeminent candidate for approximation on compact subsets of the complex plane, exhibiting optimality properties analogous to their Chebyshevian counterparts. For a compact set $\mathcal{E} \subset \mathbb{C}$ with connected complement, let $\Omega = \mathbb{C}_{\infty} \setminus \mathcal{E}$ denote the complement of $\mathcal{E}$ with respect to the extended complex plane.
Here, we choose $\mathcal{E}$ as a region that encloses the eigenvalues of the target matrix $A$. We will refer to this region as \emph{Fiber region}.

Under the assumption that $\Omega$ is simply connected, there exists a unique conformal mapping $\Psi: \Omega \rightarrow \Delta = \{w \in \mathbb{C} : |w| > 1\}$ with the normalization $\Psi(\infty) = \infty$ and $\lim_{z \rightarrow \infty} \Psi(z)/z > 0$. This map is known as the \emph{exterior Riemann map}.
The Faber polynomials $\{F_n(z)\}_{n=0}^{\infty}$ associated with $\mathcal{E}$ are then defined via the generating function:
$$
\frac{\Psi'(z)}{\Psi(z) - w} = \sum_{n=0}^{\infty} \frac{F_n(z)}{w^{n+1}}, \quad |w| > \sigma > \|\Phi\|_\mathcal{E}
$$
where $\sigma$ denotes the capacity of $\mathcal{E}$ and $\|\Psi\|_\mathcal{E} = \sup_{z \in \mathcal{E}} |\Psi(z)|$.

Importantly, the Faber polynomials constitute a natural generalization of the monomial basis to arbitrary compact sets, reducing to the standard basis $\{z^n\}_{n=0}^{\infty}$ when $\mathcal{E}$ is the unit disk. That is when $\mathcal{E}=\{|w|\leq 1\}$, the exterior Riemann map is the identity map $\Psi(w)=w$ and 
$$F_n(z)=z^n$$
is the power function.
More significantly, for the canonical interval $[-1,1]$, they coincide with the Chebyshev polynomials of the first kind, thereby establishing an elegant theoretical continuity between real and complex approximation theory.

The theoretical significance of Faber polynomials is encapsulated in their near-optimal properties for uniform approximation in the complex domain. Specifically, for a function $f$ analytic on $E$ and analytic in some neighborhood of $E$, the truncated Faber series:
$$
S_n(f,z) = \sum_{k=0}^{n} a_k F_k(z),
$$
where the coefficients $a_k$ are determined by the asymptotic expansion of $f$ at infinity:
$$
f(z) \sim \sum_{k=0}^{\infty} \frac{a_k}{[\Phi(z)]^k}, \quad z \rightarrow \infty
$$
provides a near-optimal polynomial approximation with respect to the uniform norm on $E$.

The approximation error admits the precise characterization:
$$
\|f - S_n(f)\|_E \leq C \cdot \inf_{p \in \mathcal{P}_n} \|f - p\|_E,
$$
where $\mathcal{P}_n$ denotes the space of polynomials of degree at most $n$, and $C$ is a constant independent of $n$ but dependent on the geometric properties of $E$. For certain classes of domains, particularly those with analytic boundaries, this constant can be shown to be asymptotically optimal as $n \rightarrow \infty$.

\subsection{Quantum eigenvalue transformation}
In this section, we introduce the recently proposed quantum eigenvalue transformation (QEVT) framework for approximating the matrix function.
Specifically, a function of a matrix $A$ is acting on its eigenvalues:
$$f(A)=Pf(\Lambda)P^{-1},$$
where $A=P\Lambda P^{-1}$ is the eigen-decomposition of $A$, $f(x)$ is the targeted function. More specifically, the aim is to approximate matrix functions with polynomial transformations with (near) optimal performance in query complexity. That is, we want the number of calls to the matrix $A$ (e.g., accessed by the block encoding of $A$) to be as few as possible. As the times of query are closely related to the degree of the polynomial function, this is to say our objective is finding the (near) lowest degree of the polynomial function such that the error given by uniform approximation over the targeted region is desirable:
$$\min_{d}\sup_{x\in \mathbb{D}}|p_d(x)-f(x)|\leq \epsilon,$$
where $P_d(x)$ is a polynomial function of degree $d$, $S$ is the region for the eigenvalues of $A$, and $\epsilon$ is the accuracy we need to achieve. Typically, $\mathbb{D}:=\{z\in\mathbb{C}||z|\leq 1\}$ is the complex unit disk that encompasses all eigenvalues of the block encoded (normalized) matrix $\alpha^{-1}A$.

Using knowledge from approximation theory, the Chebyshev and Faber polynomials are introduced for the approximation of real and complex functions, respectively. The goal of the QEVT algorithm is to realize polynomial functions of matrix $A$ for approximating the target matrix function:
\begin{equation}\label{eq:target poly}
    p(A)=\sum_{j=0}^{d-1}\beta_jp_j(A),
\end{equation}
where $p_j(x)$ is some polynomial function of $x$ of degree $j$, and $\beta_j$ is coefficient. 
Then, we can apply the matrix function to an initial state $\ket{\psi_I}$ and prepare 
\begin{equation}\label{eq:target state}
    \frac{p(A)\ket{\psi_I}}{\|p(A)\ket{\psi_I}\|_2},
\end{equation}
where $\|p(A)\ket{\psi_I}\|_2$ is the normalization factor.
Yet, the challenges here are twofold: (i) How to realize the interested (Chebyshev or Faber) polynomial function natively on quantum circuits, and (ii) How to combine polynomial functions of different degrees with corresponding coefficients. The QEVT algorithm solves the first problem by resorting to the generating function of the targeted polynomial functions. The rationale is that the generating function encodes the polynomial functions as coefficients of a power series. Thus, by implementing the generating function on a quantum circuit, we may directly load each degree of polynomial functions onto the quantum devices rather than separately for each one of them. To see this, we follow the notation in Ref.~\cite{low2024quantum} and write the generating function of $\{p_j(x)\}_{j=0}^\infty$ as 
$$g(y,x)=\sum_{j=0}^\infty p_j(x) y^j,$$
which is a power series of $y$ with coefficient given by $p_j(x)$. While the sum is up to infinity, we can choose the matrix form of $y$ as the lower shift matrix:
\begin{equation}
    L=\sum_{k=0}^{d-2} \ket{k+1}\bra{k}
\end{equation}
so that $L^d=0$, resulting in the summation stopped at $d-1$. As such, we have the matrix representation of the generating function as
\begin{equation}
    \sum_{j=0}^{d-1}L_j\otimes p_j(A)=\sum_{j=0}^{\infty}L_j\otimes p_j(A)=g(L\otimes I, I\otimes A).
\end{equation}
Besides, it is readily seen that different degrees of the matrix polynomials $p_j(A)$ are encoded in different parts of the overall matrix identified by the power of $L$. 
The next step is to show that the generating functions of the two interested polynomial functions are all of simple forms: (i) For the real eigenvalues cases, the Chebyshev polynomial (denoted as $T_j(x)$) has the generating function as
$$\sum_{j=0}^\infty y^j T_j(x)= \frac{1}{2}\frac{1-y^2}{1-2yx+y^2},$$
and for the Faber polynomial (denoted as $F_j(z)$), we have
$$\sum_{j=0}^{\infty} {F}_j(z)y^j=\frac{{\Psi}^{\prime}\left(y^{-1}\right)}{y\left({\Psi}\left(y^{-1}\right)-z\right)}$$
such that $z \in \mathbb{D}$, ${\Psi}(\cdot)$ is the exterior Riemann map defined in Sec.~\ref{sec:cheby_faber} and $\Psi^\prime$ is its derivative. Then, the matrix Chebyshev generating function is given by
\begin{equation}\label{eq:cheby_gen}
    \sum_{j=0}^{n-1} L^j \otimes \widetilde{T}_j\left(\frac{A}{\alpha_A}\right)=\sum_{j=0}^{\infty} L^j \otimes \widetilde{T}_j\left(\frac{A}{\alpha_A}\right)=\frac{I \otimes I-L^2 \otimes I}{2\left(I \otimes I+L^2 \otimes I-2 L \otimes \frac{A}{\alpha_A}\right)}=:\frac{B_r}{A_r},
\end{equation}
where we have denoted the nominator and denominator as $B_r$ and $A_r$, respectively.
The matrix Faber generating function is given by
\begin{equation}\label{eq:faber_gen}
    \sum_{j=0}^{n-1} L^j \otimes {F}_j\left(\frac{A}{\alpha_A}\right)=\sum_{j=0}^{\infty} L^j \otimes {F}_j\left(\frac{A}{\alpha_A}\right)=\frac{\Psi^{\prime}\left(L^{-1}\right) \otimes I}{L \Psi\left(L^{-1}\right) \otimes I-L \otimes \frac{A}{\alpha_A}}=:\frac{B_c}{A_c},
\end{equation}
where we have denoted the nominator and denominator as $B_c$ and $A_c$, respectively.
Hence, we see that the matrix Chebyshev and Faber generating functions are both of rational polynomial function forms. The nominator of the function involves only functions of $L$, which can be readily block encoded as presented in~\cite[Lemma 14]{low2024quantum}. The denominators are implemented by acting $f(x)=x^{-1}$ with the variable being either $B_r$ or $B_c$ in each case, to some initial state. This problem is solved by the quantum linear system problem (QLSP) solver, where algorithms with optimal performance (query complexity) are known~\cite{costa2022optimal}.

The part of loading the coefficients of the polynomial function is merged into an input state preparation step. Straightforwardly, for a suitable choice of input state, we may act the matrix Chebyshev or Faber generating function onto this input state to produce the final target state as shown in Eq.~\eqref{eq:target state}. Yet, the key observation of the QEVT algorithm is that we can alternatively directly prepare the input state as the chosen state after the action of the nominator part of Eq.~\eqref{eq:cheby_gen} or \eqref{eq:faber_gen}. As the nominators ($B_r$ or $B_c$) in each case are of simple form, the analytic formula of the input states can be derived without triggering the explicit multiplication action of the nominators. Indeed, the input state in the real cases is given by
\begin{equation}\label{eq:final-initial-real}
    \ket{\psi_I'}:=\ket{0}\alpha_\beta'^{-1}\sum_{k=0}^{d-1}(\beta_k-\beta_{k+2})\ket{d-1-k}\ket{\psi_I},
\end{equation}
where $\alpha_\beta'=\sqrt{\sum_{k=0}^{d-1}(\beta_k-\beta_{k+2})^2}$, and $\beta_k$ are coefficients of the polynomial function as shown in Eq.~\eqref{eq:target poly}. For complex cases, this is given by
\begin{equation}\label{eq:final-initial-complex}
    \ket{\psi_I''}:=\ket{0}\alpha_\beta^{-1}\Psi^{\prime}\left(L_n^{-1}\right) \sum_{k=0}^{d-1} \beta_k|d-1-k\rangle,
\end{equation}
where 
\begin{equation}\label{eq:alpha beta}
    \alpha_\beta=\left\| \Psi^{\prime}\left(L_n^{-1}\right) \sum_{k=0}^{d=1} \beta_k|d-1-k\rangle \right\|.
\end{equation}
A Fourier coefficients generation can prepare this input state as discussed in \cite[Sec.~6]{low2024quantum}. 

The final key step is to generate the Chebyshev or Faber history state. The history state is the core of the algorithm design of the QEVT framework. This state consists of the desirable part given by Eq.~\eqref{eq:target state}, while superposed by other unwanted components, which are linear combinations of the polynomial function with mismatched coefficients. To boost the success probability in postselecting the favorable measurement outcome, the \emph{runaway padding trick}~\cite{berry2017quantum} is introduced.
This results in the desirable part in the superposed history state indicated by the first register. For the QEVT task, we observe that a single ancillary qubit in state $\ket{0}$ is employed, as shown in both Eq.~\eqref{eq:final-initial-real} and \eqref{eq:final-initial-complex}. For now, we provide the results for the complex cases to facilitate later discussion. The padded matrices are given by
\begin{equation}\label{eq:pad_def}
\begin{aligned}
\mathbf{Pad}(A_c)&= \ket{0}\bra{0} \otimes\left(L_d \Psi\left(L_d^{-1}\right) \otimes I-L_d \otimes \frac{A}{\alpha_A}\right) +|1\rangle\langle 0| \otimes|0\rangle\langle d-1| \otimes(-I)+|1\rangle\langle 1| \otimes\left(I_{\eta d}-L_{\eta d}\right) \otimes I ,\\
\mathbf{Pad}(B_c) & =|0\rangle\langle 0| \otimes \Psi^{\prime}\left(L_d^{-1}\right) \otimes I+|1\rangle\langle 1| \otimes I_{\eta d} \otimes I,
\end{aligned}
\end{equation}
where $\eta=1$. At this point, the history state is generated by the following steps:
\begin{enumerate}
    \item Construct the block encoding of the lower shift matrix $L$ by \cite[Lemma 14]{low2024quantum}.
    \item Using the block encoding of $L$ and $A$ for the construction of the block encoding of $\mathbf{Pad}(A_c)$.
    \item Prepare the input state as given by Eq.~\eqref{eq:final-initial-complex} and invoke the QLSP solver with the input $\mathbf{Pad}(A_c)$ for acting on the input state.
\end{enumerate}
As such, the cost of generating the Faber history state is as follows.
\begin{lemma}[Faber history state generation, Theorem 9, Ref.~\cite{low2024quantum}]
\label{lem:faber_history}
Let $A$ be a square matrix such that $A/\alpha_A$ is block encoded by $O_A$ with some normalization factor $\alpha_A\geq\norm{A}$. Suppose that eigenvalues of $A/\alpha_A$ are enclosed by a Faber region $\mathcal{E}$ with associated conformal maps ${\Phi}:\mathcal{E}^c\rightarrow\mathcal{D}^c$, ${\Psi}:\mathcal{D}^c\rightarrow\mathcal{E}^c$ and Faber polynomials ${F}_n(z)$.
Let $O_\psi\ket{0}=\ket{\psi}$ be the oracle preparing the initial state, and $O_{\beta}\ket{0}={\Psi}'(L_d^{-1})\sum_{k=0}^{d-1}\beta_k\ket{d-1-k}/\norm{{\Psi}'(L_d^{-1})\sum_{k=0}^{d-1}\beta_k\ket{d-1-k}}$ be the oracle preparing the shifting of coefficients $\beta$.
Then, the quantum state
	\begin{equation}\label{eq:faber_history}
		\frac{\ket{0}\sum_{l=0}^{n-1}\ket{l}
			\sum_{k=d-1-l}^{d-1}{\beta}_k{{F}}_{k+l-d+1}\left(\frac{A}{\alpha_A}\right)\ket{\psi}
			+\sum_{s=1}^{\eta}\ket{s}\sum_{l=0}^{d-1}\ket{l}
			\sum_{k=0}^{d-1}{\beta}_k{{F}}_{k}\left(\frac{A}{\alpha_A}\right)\ket{\psi}}
        {\sqrt{\sum_{l=0}^{d-1}\norm{\sum_{k=d-1-l}^{d-1}\beta_k{F}_{k+l-d+1}\left(\frac{A}{\alpha_A}\right)\ket{\psi}}^2
        +\eta n\norm{\sum_{k=0}^{d-1}\beta_k {F}_{k}\left(\frac{A}{\alpha_A}\right)\ket{\psi}}^2}}
	\end{equation}
can be prepared with accuracy $\epsilon$ and probability $1-\delta$ using
\begin{equation}
    \mathcal{O}\left(\alphaFP d(\eta+1)\log\left(\frac{1}{\epsilon}\right)\log\left(\frac{1}{\delta}\right)\right)
\end{equation}
queries to controlled-$O_A$, controlled-$O_\psi$, controlled $O_{\widetilde\beta}$, and their inverses, where
\begin{equation}
\label{eq:alphaFP_alphaPsiFaber}
    \alphaFP\geq\max_{j=1,\ldots,n}\norm{\frac{{F}_j'\left(\frac{A}{\alpha_A}\right)}{j}}
\end{equation}
is an upper bound on the derivative of Faber polynomials.
\end{lemma}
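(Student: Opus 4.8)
The plan is to follow the quantum eigenvalue transformation recipe sketched above: realize the target polynomial map $A\mapsto p(A)=\sum_{j}\beta_j F_j(A/\alpha_A)$ through the \emph{matrix Faber generating function} $B_c/A_c$ of Eq.~\eqref{eq:faber_gen}, and extract the desired superposition by a single quantum linear-system solve. First I would verify the algebraic identity that solving $A_c\ket{x}=B_c\ket{\psi_I''}$, with $\ket{\psi_I''}$ the coefficient-loaded input state of Eq.~\eqref{eq:final-initial-complex}, produces in the block indexed by $\ket{0}$ exactly the numerator term $\sum_{l=0}^{n-1}\ket{l}\sum_{k=d-1-l}^{d-1}\beta_k F_{k+l-d+1}(A/\alpha_A)\ket{\psi}$ of Eq.~\eqref{eq:faber_history}. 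This is immediate from the generating-function identity $\sum_j L^j\otimes F_j(A/\alpha_A)=B_cA_c^{-1}$ together with the fact that multiplying the coefficient vector $\sum_k\beta_k\ket{d-1-k}$ by powers of the lower shift matrix $L_d$ reindexes the $\beta_k$, the nilpotency $L_d^{d}=0$ truncating the series at degree $d-1$.

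Next I would introduce the runaway padding of Ref.~\cite{berry2017quantum} to control the post-selection overhead. A plain solve places only a small amplitude on the useful $\ket{0}$-block, so instead I would solve the padded system $\mathbf{Pad}(A_c)\ket{x}=\mathbf{Pad}(B_c)(\text{seed})$ using the matrices of Eq.~\eqref{eq:pad_def}, which append $\eta$ extra copies of the $l$-independent target component in the $\ket{s}$, $s\ge 1$, registers. The solution is then, after normalization, precisely the state of Eq.~\eqref{eq:faber_history}, whose denominator reflects the $\eta n$-fold replication; choosing $\eta=\mathcal{O}(1)$ makes the probability of landing in the correct branch a constant, so that $\mathcal{O}(\log(1/\delta))$ repetitions (or a single round of amplitude amplification) boost success to $1-\delta$.

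The query count then splits into (i) block-encoding $\mathbf{Pad}(A_c)$ and $\mathbf{Pad}(B_c)$ and preparing the seed, and (ii) the QLSP solve. For (i), I would block-encode $L_d$ and the auxiliary $L_d\Psi(L_d^{-1})$, $\Psi'(L_d^{-1})$ via \cite[Lemma 14]{low2024quantum}, assemble $\mathbf{Pad}(A_c)$ from the block encodings of $L_d$ and $A$ by the standard linear-combination and product rules (cf.\ Lemma~\ref{lemma:block_prod}), and prepare the coefficient state by the Fourier-coefficient routine of \cite[Sec.~6]{low2024quantum}; each call costs $\mathcal{O}(1)$ uses of controlled-$O_A$ plus $L$-gates, on a register of dimension $\sim d(\eta+1)$. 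For (ii), I would invoke an optimal QLSP solver~\cite{costa2022optimal}, whose cost is $\widetilde{\mathcal{O}}\!\left(\kappa\log(1/\epsilon)\right)$ queries to the block encoding of $\mathbf{Pad}(A_c)$, where $\kappa$ is its condition number; combining with the $\mathcal{O}(d(\eta+1))$ per-query overhead and the $\log(1/\delta)$ amplification factor reproduces the claimed $\mathcal{O}\!\left(\alphaFP\, d(\eta+1)\log(1/\epsilon)\log(1/\delta)\right)$ bound, provided $\kappa=\mathcal{O}(\alphaFP)$.

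The main obstacle, and the technical heart of the argument, is exactly this bound $\kappa(\mathbf{Pad}(A_c))=\mathcal{O}(\alphaFP)$. The spectral factor to control is $L_d\Psi(L_d^{-1})\otimes I-L_d\otimes(A/\alpha_A)$: since the eigenvalues of $A/\alpha_A$ lie strictly inside the Faber region $\mathcal{E}$ while $\Psi$ maps the exterior of the unit disk to the exterior of $\mathcal{E}$, the resolvent-type operator $\Psi(L_d^{-1})-A/\alpha_A$ is invertible, and one must turn the geometric separation between the spectrum and $\partial\mathcal{E}$ into a lower bound on the smallest singular value of the nilpotent-perturbed operator. Tracking how the inverse norm of this factor — which, via the Faber generating-function coefficients, is exactly the quantity bounded by $\alphaFP\ge\max_j\norm{F_j'(A/\alpha_A)/j}$ in Eq.~\eqref{eq:alphaFP_alphaPsiFaber} — propagates through the padded block structure is where the $\alphaFP$ dependence is established; this is the step where I would expect to spend most of the effort, and it is handled in detail in the proof of \cite[Theorem 9]{low2024quantum}.
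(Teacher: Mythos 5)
The paper does not supply its own proof of this lemma: it is imported verbatim from Theorem~9 of Ref.~\cite{low2024quantum}, exactly as the lemma's header indicates. Your sketch reconstructs the architecture of that cited proof in broad strokes — the Faber generating-function identity of Eq.~\eqref{eq:faber_gen}, the runaway-padded system $\mathbf{Pad}(A_c)$ of Eq.~\eqref{eq:pad_def}, a QLSP solve against it, and amplification — and you correctly locate the technical crux in bounding $\norm{\mathbf{Pad}(A_c)^{-1}}$ via the geometry of the Faber region. Since you explicitly defer that crux to Low–Su's argument, the proposal is a faithful road map rather than a self-contained proof, which is acceptable here given the paper itself states the result as an import.

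One substantive remark on bookkeeping: you attribute the $d(\eta+1)$ factor to a ``per-query overhead'' on top of $\kappa=\mathcal{O}(\alphaFP)$, but each query to the block encoding of $\mathbf{Pad}(A_c)$ costs only $\mathcal{O}(1)$ calls to controlled-$O_A$ (the operator is a linear combination of a single $A$-dependent term with shift matrices). The $d$ factor actually lives inside the inverse-norm bound, $\alpha_{\scriptscriptstyle \mathbf{Pad}(A_c)^{-1}}=\mathcal{O}(d\,\alphaFP)$ — this is the quantity the paper's own proof of Lemma~\ref{lem:qevt_c_block} quotes from \cite[Eq.~400]{low2024quantum} — and the $(\eta+1)$ factor reflects the enlarged padded register. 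Your final count comes out the same, but conflating $\alphaFP$ with $\kappa(\mathbf{Pad}(A_c))$ obscures where the degree dependence originates, which matters if one later wants to reason about whether the padded condition number can be improved independently of $\alphaFP$.
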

For constructing the QEVT algorithm, a choice of $\eta=1$ is enough, as stated before.
To reach the final desirable state as shown in Eq.~\eqref{eq:target state}, we remark that the history state already encodes this state as indicated by the first register with state $\ket{0}$. Hence, a fixed point amplitude amplification algorithm~\cite{yoder2014fixed} is applied to enhance the failure probability to be no more than $p_{\rm fail}$, such that a quadratic improvement is made compared to directly measuring the first qubit.

The design of the algorithm solving the QESE problem needs the block encoding version of the polynomial function. This is similar to the block encoding version of the QEVT algorithm in the real case as discussed in
\cite[Sec.~5.2]{low2024quantum}. We summarize the steps in the following.
\begin{enumerate}
    \item Construct the block encoding of $\mathbf{Pad}(A_c)/(2\alpha_\Psi+2)$, where $\alpha_{\Psi} \geq\left\|L_d \Psi\left(L_d^{-1}\right)\right\|$ for some $d\in \mathbb{Z}^+$. This block encoding can be realized by only $1$ query to the block encoding of $A$, as shown in \cite[Eq.~390-394]{low2024quantum}.
    \item Prepare the input state given by Eq.~\eqref{eq:final-initial-complex}.
    \item Invoke the block encoding version of the QLSP solver given by Lemma \ref{lem:inv_block} with input $\mathbf{Pad}(A_c)$. Unprepare $|1\rangle \frac{1}{\sqrt{d}} \sum_{k=0}^{d-1}|k\rangle$.
    \item Perform the block encoding amplification given by Lemma \ref{lem:amp_block}.
\end{enumerate}

\begin{lemma}[Quantum eigenvalue transformation with Faber polynomials, block encoded version]
\label{lem:qevt_c_block}
    Let $A$ be a square matrix such that $A/\alpha_A$ of block encoding $O_A$ with normalization factor $\alpha_A\geq\norm{A}$.
    Let $p(x)=\sum_{k=0}^{d-1}\beta_k{F}_{k}(x)$ be the Faber expansion of a degree-$(d-1)$ polynomial $p$.
    Then for any $\alpha_p\geq\norm{p\left(\frac{A}{\alpha_A}\right)}$, the operator
    \begin{equation}
        \frac{p\left(\frac{A}{\alpha_A}\right)}{2\alpha_p}
    \end{equation}
    can be block encoded with accuracy $\epsilon$ using
    \begin{equation}\label{eq:qevt_be_complexity_c}
        \begin{aligned}
        {\mathcal{O}\left(\frac{\alpha_\beta\sqrt{d}\alphaFP}{\alpha_p}d\alphaFP\log\left(\frac{\alpha_\beta\sqrt{d}\alphaFP}{\alpha_p\epsilon}\right)\log\left(\frac{1}{\epsilon}\right)\right)}
        \end{aligned}
    \end{equation}
    queries to controlled-$O_A$ and its inverse, where
    $\alpha_{\mathbf{F}^{\prime}}$ is upper bounded by Eq.~\eqref{eq:alphaFP_alphaPsiFaber}, and $\alpha_\beta$ is defined by Eq.~\eqref{eq:alpha beta}.
\end{lemma}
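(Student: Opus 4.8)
The plan is to follow the four enumerated steps above, which mirror the block-encoded Chebyshev (real) construction of \cite[Sec.~5.2]{low2024quantum}; the analytic backbone is the matrix Faber generating-function identity \eqref{eq:faber_gen}, $\sum_{j=0}^{d-1}L^j\otimes F_j(A/\alpha_A)=A_c^{-1}B_c$, lifted to the runaway-padded pair $\mathbf{Pad}(A_c),\mathbf{Pad}(B_c)$ of \eqref{eq:pad_def} with $\eta=1$. In words: first block encode the padded denominator, then invert it with the QLSP routine, then sandwich the inverse between the coefficient-loaded input state \eqref{eq:final-initial-complex} and a suitable ancilla read-out so that exactly $p(A/\alpha_A)=\sum_k\beta_kF_k(A/\alpha_A)$ is extracted in the block at some natural normalization $\alpha_0$, and finally rescale $\alpha_0\mapsto 2\alpha_p$ by block-encoding amplification.

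For Step~1, I would invoke \cite[Eqs.~390--394]{low2024quantum}: $\widehat X:=\mathbf{Pad}(A_c)/(2\alpha_\Psi+2)$ with $\alpha_\Psi\ge\norm{L_d\Psi(L_d^{-1})}$ is block encoded with $\norm{\widehat X}\le 1$ using a single query to $O_A$, and since $L_d$ is nilpotent, $L_d\Psi(L_d^{-1})$ is a polynomial in $L_d$ of bounded norm, so $\alpha_\Psi=\Ord{1}$. For the inversion in Step~3 I would apply Lemma~\ref{lem:inv_block} to obtain an $(\alpha_{\widehat X^{-1}},\cdot,\epsilon')$-block encoding of $\mathbf{Pad}(A_c)^{-1}$ using $\Ord{\alpha_{\widehat X^{-1}}\log(1/\epsilon')}$ queries to controlled-$\widehat X$, hence to $O_A$. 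The crucial quantitative claim here is $\alpha_{\widehat X^{-1}}=\Ord{d\,\alphaFP}$: writing $A_c^{-1}=\big(\sum_{j=0}^{d-1}L^j\otimes F_j(A/\alpha_A)\big)B_c^{-1}$, bounding the block-triangular operator norm of $\sum_jL^j\otimes F_j(A/\alpha_A)$ by $\sum_j\norm{F_j(A/\alpha_A)}$, controlling each $\norm{F_j(A/\alpha_A)}$ through its derivative via \eqref{eq:alphaFP_alphaPsiFaber}, and noting $\norm{B_c^{-1}}=\Ord{1}$; the off-diagonal pad-block of $\mathbf{Pad}(A_c)^{-1}$ is estimated the same way.

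For Steps~2 and~3 I would prepare the state \eqref{eq:final-initial-complex} on the input side — a Fourier-coefficient preparation costing no $O_A$ queries and carrying the normalization $\alpha_\beta$ of \eqref{eq:alpha beta} — and, as the final part of Step~3, unprepare $\ket1\tfrac1{\sqrt d}\sum_{k=0}^{d-1}\ket k$ together with the pad qubit on the output side. Using the block-triangular form of $\mathbf{Pad}(A_c)^{-1}$ and the generating-function identity, the $\ket1$ pad-sector sends the input to $\big(\sum_{m=0}^{d-1}\ket m\big)\otimes\tfrac1{\alpha_\beta}p(A/\alpha_A)\ket\psi$, so the read-out extracts $\tfrac{\sqrt d}{\alpha_\beta\alpha_{\widehat X^{-1}}}p(A/\alpha_A)$; this is an $\alpha_0$-block encoding with $\alpha_0=\Ord{\alpha_\beta\sqrt d\,\alphaFP}$ and accuracy $\epsilon'$. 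For Step~4, Lemma~\ref{lem:amp_block} rescales this to $\tfrac{p(A/\alpha_A)}{2\alpha_p}$ using $\Ord{(\alpha_0/\alpha_p)\log(1/\epsilon)}$ queries to the Step~3 encoding while magnifying its error by $\Ord{\alpha_0/\alpha_p}$, so I would set $\epsilon'=\Theta(\epsilon\alpha_p/\alpha_0)$ to keep the final error below $\epsilon$. Multiplying the cost of one invocation of the Step~3 encoding, $\Ord{d\,\alphaFP\log(1/\epsilon')}$ queries to $O_A$, by the number of invocations $\Ord{(\alpha_\beta\sqrt d\,\alphaFP/\alpha_p)\log(1/\epsilon)}$ and substituting $\epsilon'=\Theta(\epsilon\alpha_p/(\alpha_\beta\sqrt d\,\alphaFP))$ then yields exactly the bound~\eqref{eq:qevt_be_complexity_c}.

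The hard part will be pinning down the pre-amplification normalization $\alpha_0=\Ord{\alpha_\beta\sqrt d\,\alphaFP}$. This rests on two things: (a) a faithful read of the padded inverse — verifying that the $\ket1$-sector of $\mathbf{Pad}(A_c)^{-1}$ acting on \eqref{eq:final-initial-complex} really produces $(\sum_m\ket m)\otimes p(A/\alpha_A)\ket\psi/\alpha_\beta$, which is a short computation using the block-triangular structure and the nilpotency of $L$; and (b) the bound $\norm{\mathbf{Pad}(A_c)^{-1}}=\Ord{d\,\alphaFP}$, which hinges on controlling the operator norms of the Faber polynomials $F_j(A/\alpha_A)$ at the \emph{non-normal} matrix $A/\alpha_A$ — for non-normal $A$ the relevant conditioning is precisely what gets absorbed into $\alphaFP$ through \eqref{eq:alphaFP_alphaPsiFaber}. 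The remaining task, tracking how the inversion error $\epsilon'$ propagates through the fixed sandwiching and is amplified by $\alpha_0/\alpha_p$ in Step~4, is routine but must be done carefully to land on the stated logarithmic factors.
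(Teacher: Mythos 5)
Your proposal follows essentially the same route as the paper's proof: block-encode $\mathbf{Pad}(A_c)/(2\alpha_\Psi+2)$ with one query to $O_A$, invert via Lemma~\ref{lem:inv_block}, sandwich between the coefficient-loaded input state \eqref{eq:final-initial-complex} and the $\ket1\frac{1}{\sqrt d}\sum_k\ket k$ read-out to extract $p(A/\alpha_A)$ at normalization $\mathcal{O}(\alpha_\beta\sqrt d\,\alphaFP)$, and rescale with Lemma~\ref{lem:amp_block} after tuning the QLSP accuracy to $\epsilon'=\Theta(\epsilon\alpha_p/(\alpha_\beta\sqrt d\,\alphaFP))$. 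The only cosmetic difference is that where you sketch a hand-argument for $\alpha_\Psi=\mathcal{O}(1)$ and $\norm{\mathbf{Pad}(A_c)^{-1}}=\mathcal{O}(d\,\alphaFP)$, the paper simply cites \cite[Lemma~33, Eq.~400]{low2024quantum}; be aware your informal $\sum_j\norm{F_j}$ bound is not obviously tight enough to land on $\mathcal{O}(d\,\alphaFP)$ rather than a worse polynomial in $d$, so in a final write-up you should defer to the cited reference for that estimate as the paper does.
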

\begin{proof}
Using the block encoded version of matrix inversion given by Lemma \ref{lem:inv_block} with the input $\mathbf{Pad}(A_c)/(2\alpha_\Psi+2)$, we obtain the block encoding of 
$$\frac{\mathbf{Pad}(A_c)^{-1}}{2\alpha_{\scriptscriptstyle \mathbf{Pad}(A_c)^{-1}}},$$
such that $\alpha_{\scriptscriptstyle \mathbf{Pad}(A_c)^{-1}}$ is an upper bound on $\left\|\mathbf{Pad}(A_c)^{-1}\right\|$ and satisfies $\alpha_{\scriptscriptstyle \mathbf{Pad(A_c)}^{-1}}=\mathcal{O}(d\alphaFP)$, which is given by \cite[Eq.~400]{low2024quantum}. This factor decides the cost of the QLSP step. Besides, we remark that $\alpha_\Psi=\mathcal{O}(1)$ in general, as shown by \cite[Lemma 33]{low2024quantum}.

Now, the preparation of $\ket{0}\alpha_\beta^{-1}\Psi^{\prime}\left(L_n^{-1}\right) \sum_{k=0}^{d-1} \beta_k|d-1-k\rangle$ and unpreparation of $\ket{1}\frac{1}{\sqrt{d}}\sum_{k=0}^{d-1}\ket{k}$ gives us the block encoding in that subspace as
\begin{equation*}
\begin{aligned}
    &\left(\bra{1}\frac{1}{\sqrt{d}}\sum_{k=0}^{n-1}\bra{k}\otimes I\right)
    \frac{\mathbf{Pad}(A_c)^{-1}}{2\alpha_{\scriptscriptstyle \mathbf{Pad(A_c)}^{-1}}}
    \left(\ket{0}\alpha_\beta^{-1}\Psi^{\prime}\left(L_d^{-1}\right) \sum_{k=0}^{d-1} \beta_k|d-1-k\rangle\right)\\
    &=\frac{1}{\sqrt{d}\alpha_{\scriptscriptstyle \mathbf{Pad}(A_c)^{-1}}\alpha_{\beta}}
    \left(\sum_{k=0}^{d-1}\bra{k}\otimes I\right)
    \left(\sum_{l=0}^{d-1}\ket{l}
    \sum_{k=0}^{d-1}{\beta}_k F_{k}\left(\frac{A}{\alpha_A}\right)\otimes I\right)
    =\frac{p\left(\frac{A}{\alpha_A}\right)}{\alpha_{p,\text{pre}}}
\end{aligned}
\end{equation*}
with $\alpha_{p,\text{pre}}=\frac{\alpha_{\scriptscriptstyle \mathbf{Pad(A)}^{-1}}\alpha_{\beta}}{\sqrt{d}}=\mathcal{O}\left(\alphaFP \sqrt{d}\alpha_{\beta}\right)$. Here, in the second line, we have used the definition of $\mathbf{Pad}(B_c)$ in Eq.~\eqref{eq:pad_def}. Besides, we notice that up to a normalization factor, the state
$$\mathbf{Pad}(A_c)^{-1}\left(\ket{0}\alpha_\beta^{-1}\Psi^{\prime}\left(L_d^{-1}\right) \sum_{k=0}^{d-1} \beta_k|d-1-k\rangle\right)$$ prepares the Faber history state given by Eq.~\eqref{eq:faber_history}.
Thus, post-select on the first register to be $\ket{1}$ gives the Faber expansion. The accuracy of the QLSP protocol is of order $\left(\frac{\alpha_\beta\sqrt{d}\alphaFP}{\alpha_p\epsilon}\right)^{-1}$.
Finally, invoking the block encoding amplification to the accuracy of order $\epsilon$ gives us the final results.
\end{proof}

The real case is highly parallel with the above analysis, for which we only provide the results below and refer the readers to the original paper for a complete 
\begin{lemma}[Quantum eigenvalue transformation with Chebyshev polynomial, block encoded version, Theorem 5 of \cite{low2024quantum}]
\label{lem:qevt_r_block}
    Let $A$ be a square matrix with only real eigenvalues, such that $A/\alpha_A$ is block encoded by $U_A$ with some normalization factor $\alpha_A\geq\norm{A}$.
    Let $p(x)=\sum_{k=0}^{d-1}\widetilde\beta_k\widetilde{{T}}_{k}(x)=\sum_{k=0}^{d-1}\beta_k{{T}}_{k}(x)$ be the Chebyshev expansion of a degree-$(d-1)$ polynomial $p$.
    Then for any $\alpha_p\geq\norm{p\left(\frac{A}{\alpha_A}\right)}$, the operator
    \begin{equation}
        \frac{p\left(\frac{A}{\alpha_A}\right)}{2\alpha_p}
    \end{equation}
    can be block encoded with accuracy $\epsilon$ using
    \begin{equation}\label{eq:qevt_be_complexity_r}
        \begin{aligned}
        {\mathcal{O}\left(\frac{\norm{p(\cos)\sin}_{2,[-\pi,\pi]}\sqrt{d}\alphaU}{\alpha_p}d\alphaU\log\left(\frac{\norm{p(\cos)\sin}_{2,[-\pi,\pi]}\sqrt{d}\alphaU}{\alpha_p\epsilon}\right)\log\left(\frac{1}{\epsilon}\right)\right)}
        \end{aligned}
    \end{equation}
    queries to controlled-$U_A$ and its inverse, where $\alphaU$ satisfies 
    \begin{equation}\label{eq:alphaU}
        \alphaU \geq \max _{j=0,1, \ldots, d-1}\left\|{U}_j\left(\frac{A}{\alpha_A}\right)\right\|,
    \end{equation}
    where $U_j$ denotes the degree-$j$ Chebyshev polynomial of the second kind.
\end{lemma}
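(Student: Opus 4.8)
The plan is to transcribe, step by step, the construction used to prove Lemma~\ref{lem:qevt_c_block}, replacing the Faber matrix generating function $B_c/A_c$ by its Chebyshev counterpart $B_r/A_r$ from Eq.~\eqref{eq:cheby_gen} and the complex input state of Eq.~\eqref{eq:final-initial-complex} by the real input state $\ket{\psi_I'}$ of Eq.~\eqref{eq:final-initial-real}. First I would build the block encoding of the runaway-padded denominator $\mathbf{Pad}(A_r)$ associated with $A_r = I\otimes I + L^2\otimes I - 2L\otimes\frac{A}{\alpha_A}$; exactly as in the first step of the block-encoded construction preceding Lemma~\ref{lem:qevt_c_block}, and because the block encoding of the lower-shift matrix $L$ is query-free, this costs only $\mathcal{O}(1)$ queries to $U_A$, and a rescaling by an $\mathcal{O}(1)$ factor brings it into the form demanded by the inversion lemma.

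Second, I would apply the block-encoded QLSP solver of Lemma~\ref{lem:inv_block} to $\mathbf{Pad}(A_r)$, obtaining a block encoding of $\mathbf{Pad}(A_r)^{-1}/(2\alpha_{\mathbf{Pad}(A_r)^{-1}})$. The key quantitative input is the bound $\alpha_{\mathbf{Pad}(A_r)^{-1}} = \mathcal{O}(d\,\alphaU)$: the relevant block of $A_r^{-1}$ realizes the partial Chebyshev sums $\sum_{j} L^j\otimes\widetilde T_j(A/\alpha_A)$, and telescoping these via $T_j - T_{j+2}$ produces second-kind Chebyshev polynomials $U_j(A/\alpha_A)$ — hence the factor $\alphaU$ from Eq.~\eqref{eq:alphaU} — while the runaway padding contributes the extra factor $d$; this is the real-case analogue of $\alpha_{\mathbf{Pad}(A_c)^{-1}} = \mathcal{O}(d\,\alphaFP)$ borrowed from \cite[Eq.~400]{low2024quantum}. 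The inversion step then costs $\mathcal{O}(d\,\alphaU\log(1/\epsilon'))$ queries to controlled-$U_A$ and its inverse.

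Third, I would prepare $\ket{\psi_I'}$ of Eq.~\eqref{eq:final-initial-real}, act with the inverted padded matrix, and un-prepare $\ket{1}\frac{1}{\sqrt d}\sum_{k=0}^{d-1}\ket k$; using the definition of $B_r$ this yields a block encoding of $p(A/\alpha_A)/\alpha_{p,\mathrm{pre}}$ with $\alpha_{p,\mathrm{pre}} = \alpha_{\mathbf{Pad}(A_r)^{-1}}\,\alpha_\beta'/\sqrt d = \mathcal{O}(\alpha_\beta'\sqrt d\,\alphaU)$, mirroring the Faber computation. It then remains to identify $\alpha_\beta' = \sqrt{\sum_k(\beta_k-\beta_{k+2})^2}$ with $\norm{p(\cos)\sin}_{2,[-\pi,\pi]}$ up to a universal constant, which follows from $p(\cos\theta)\sin\theta = \frac{1}{2}\sum_m(\beta_{m-1}-\beta_{m+1})\sin(m\theta)$ together with Parseval's theorem, and to invoke the block-encoding amplification of Lemma~\ref{lem:amp_block} to pass from normalization $\alpha_{p,\mathrm{pre}}$ to $2\alpha_p$ at the cost of a multiplicative factor $\mathcal{O}(\frac{\alpha_{p,\mathrm{pre}}}{\alpha_p}\log(1/\epsilon))$. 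Choosing $\epsilon'$ of order $\epsilon\,\alpha_p/(\alpha_\beta'\sqrt d\,\alphaU)$ inside the QLSP call so the amplified error is $\mathcal{O}(\epsilon)$, and multiplying the two query counts, gives Eq.~\eqref{eq:qevt_be_complexity_r}.

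I expect the main obstacle to be the conditioning estimate $\alpha_{\mathbf{Pad}(A_r)^{-1}} = \mathcal{O}(d\,\alphaU)$, since every other ingredient is a mechanical translation of the Faber argument; fortunately this bound is exactly the content of the real-case analysis in \cite[Sec.~5.2]{low2024quantum}, so I would cite it rather than re-derive it, reducing the proof to assembling Lemmas~\ref{lem:inv_block} and~\ref{lem:amp_block} with the correct normalization factors.
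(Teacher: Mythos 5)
Your proposal is correct in substance and matches the route the paper implicitly endorses: the paper presents this lemma as an imported result (``Theorem 5 of Ref.~\cite{low2024quantum}'') and states that the Chebyshev case is ``highly parallel'' to the Faber proof given for Lemma~\ref{lem:qevt_c_block}, but does not carry out the translation. You carry it out, correctly mirroring each step — block-encode $\mathbf{Pad}(A_r)$ at $\mathcal{O}(1)$ query cost, invert via Lemma~\ref{lem:inv_block} with normalization $\mathcal{O}(d\,\alphaU)$, prepare $\ket{\psi_I'}$ and un-prepare the uniform state to isolate the Chebyshev expansion with $\alpha_{p,\mathrm{pre}} = \mathcal{O}(\alpha_\beta'\sqrt{d}\,\alphaU)$, then amplify via Lemma~\ref{lem:amp_block} — and your Parseval identification of $\alpha_\beta' = \sqrt{\sum_k(\beta_k - \beta_{k+2})^2}$ with $\norm{p(\cos)\sin}_{2,[-\pi,\pi]}$ up to a constant is exactly right.

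One small correction to your explanatory gloss: the second-kind polynomials do not arise from ``telescoping $T_j - T_{j+2}$.'' Rather, $A_r^{-1} = (I + L^2 - 2L\otimes A/\alpha_A)^{-1}$ is itself the Chebyshev-$U$ generating function, $A_r^{-1} = \sum_j L^j\otimes U_j(A/\alpha_A)$, which is why $\norm{\mathbf{Pad}(A_r)^{-1}} = \mathcal{O}(d\,\alphaU)$ directly. The telescoping $\widetilde{T}_j - \widetilde{T}_{j+2}$ is instead what produces the simplified input state $\ket{\psi_I'}$ in Eq.~\eqref{eq:final-initial-real} after the nominator $B_r = I - L^2\otimes I$ acts on the coefficient register — a different step in the construction. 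This misattribution does not affect the claimed bound, and your decision to cite \cite[Sec.~5.2]{low2024quantum} for the conditioning estimate rather than re-derive it is appropriate and consistent with the paper's own treatment.
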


\section{Quantum eigenvalue estimation problem}
In this section, we introduce the quantum estimation of dominant eigenvalues (QESE) problem and develop algorithms for solving this problem. The algorithms are essentially based on the matrix pencil (MP) methods~\cite{sarkar1995using,hua1990matrix} from classical signal processing, as we will introduce in the next section. 

\subsection{Problem statement}
As we will show later, it is necessary for the signal as given by Eq.~\eqref{eq:signal} in the MP method to be sparse. That is $r=\mathcal{O}(1)$ when the eigenvalues are complex; otherwise, if the eigenvalues are real, we can allow $r=\mathcal{O}(\mathrm{poly}(n))$. Thus, in the quantum setting, we make the following assumption that an initial density matrix is known such that the density matrix has a sparse expansion in the (diagonal) eigenbasis of the targeted matrix $A$. 
\begin{assumption}[Assumptions on the target matrix and sparsity of the initial density matrix]\label{assm:sparsity}
Let $A$ be the given matrix with eigen-decomposition $A=\sum_i \lambda_i \ket{\psi_i}\bra{\phi_i}$ that $\ket{\psi_i}$ and $\ket{\phi_i}$ are biorthogonal bases satisfying $\bra{\psi_i}\phi_j\rangle=\delta_{ij}$. 
We assume a $(\alpha_A, m, 0)$-block encoding $U_A$ of $A$ is given with $\alpha_A$. Besides, we assume that $A$ is block encoded in upper left corner of $U_A$ such that $\ket{G_1}=\ket{G_2}=\ket{0^m}$ in Eq.~\eqref{eq:be}.
Assume that an initial density matrix $\rho_I$ is known such that when expanded in the eigenbasis of $A$, the matrix is $r$-sparse for the diagonal part:
\begin{equation}
    \rho_I=\sum_{i=1}^r c_i \ket{\psi_i}\bra{\phi_i} + \sum_{i,j:i\neq j} c_{ij} \ket{\psi_i}\bra{\phi_j},
\end{equation}
where $c_i$ and $c_{ij}$ are the expanded coefficients. Besides $\sum_{i=1}^r c_i=1$ due to $\tr{\rho_I}=1$. Also, denote the coefficient of minimal absolute value to be
\begin{equation}\label{eq:c_min}
    c_{\min}=\min_{i\in[r]}|c_i|.
\end{equation}
\end{assumption}

Our assumption for the BE of $A$ is common in the literature of quantum singular value transformation~\cite{gilyen2019quantum}, such that we only specify the explicit form of $\ket{G_1}$ and $\ket{G_2}$ for the convenience of later discussion. Besides, the knowledge of the normalization factor $\alpha_A$ is known if we know how to construct the BE of $A$, such as through a linear combination of unitaries~\cite{childs2017quantum,low2019hamiltonian} or quantum walk~\cite{berry2009black}.

Intriguingly, an arbitrary $\rho$ can be expanded in the eigenbasis of $A$ because the eigenbasis is complete, such that $\sum_i \ket{\psi_i}\bra{\phi_i}=I$. By virtue of generalized eigenvectors as discussed in Sec.~\ref{sec:basic_matrix}, this property holds even in cases when $A$ is non-diagonalizable.
\begin{remark}
It is worth noting that the initial density matrix could be a pure state. In this case, we denote it as $\rho_I=\ket{\psi_I}\bra{\psi_I}$, and the assumption becomes
\begin{equation}
    \rho_I=\ket{\psi_I}\bra{\psi_I}=\sum_{i,j}\bra{\phi_j}\psi_I\rangle \bra{\psi_I}\psi_j\rangle \ket{\psi_i}\bra{\phi_j}=\sum_{i=1}^r c_i\ket{\psi_i}\bra{\phi_i}+\sum_{i,j:i\neq j}c_{ij} \ket{\psi_i}\bra{\phi_j},
\end{equation}
where $c_i:=\bra{\phi_i}\psi_I\rangle \bra{\psi_I}\psi_i\rangle$, and $c_{ij}:=\bra{\phi_i}\psi_I\rangle \bra{\psi_I}\psi_j\rangle$.

We demand the query access of $U_I$ such that $\ket{\psi_I}=U_I\ket{0}$ when the initial state is pure. Otherwise, we demand sample access to $\rho_I$, that is, the ability to prepare $\rho_I$ repeatedly.
\end{remark}

We then define the quantum estimation of sparse eigenvalues (QESE) problem.
\begin{definition}[The quantum estimation of sparse eigenvalues problem]
Given the Assumption \ref{assm:sparsity}, we find 
$$\tr{\rho_I A}=\sum_{i=1}^r c_i\lambda_i.$$
The goal is to output an estimation, denoted as $\widetilde{\lambda}_i$, of each $\lambda_i$ presented in the result of the trace such that
\begin{equation}
    |\widetilde{\lambda}_i-{\lambda}_i|<\epsilon.
\end{equation}
\end{definition}

In the remainder of the paper, we will interchangeably refer to $\rho_I$ as the initial density matrix or states. Typically, any quantum algorithm for solving the QESE problem will only be efficient if $c_{\min}=\Omega(1/\mathrm{poly}(n))$.
The demand for not-too-small components ($c_i$) in the initial state is natural in both classical and quantum settings. This is because when the amplitude $c_i$ vanishes, the cost of discriminating the signal from the initial state may scale $\rm poly(|c_i|^{-1})$. This is shown from the ground-state energy estimation problem, which shows that a lower-bounded estimation of the overlap with the ground state is needed.


\section{Main algorithm}

\subsection{Classical signal processing methods}
We first introduce the celebrated Prony's method along with some useful properties. We then apply these properties and introduce the MP methods~\cite{sarkar1995using,hua1990matrix}.

The goal of Prony's method is to estimate each component $c_j$ and $\omega_j$ (or equivalently $z_j$) in a given low-rank signal
\begin{equation}\label{eq:signal}
    q(t)=\sum_{j=1}^r c_j e^{i\omega_j t}=\sum_{j=1}^r c_jz_j^t,
\end{equation}
with $c_j\in\mathbb{C}\setminus\{0\}$ using as few as possible samples for $q(t)$ for different $t$. For $\omega\in\mathbb R$, the problem becomes super-resolution and (sparse) Fourier analysis problems. For $\omega\in\mathbb C$, the assumption that the signal is not increasing in amplitude is made, i.e.~$\rm Re(i\omega)\leq0$, which is the damping factor.

This method originates from Prony's method~\cite{hauer1990initial}, where the key gradient is the shift operator $S$. Consider the shift operator $\hat{S}$ that satisfies
$$\hat{S}z_j^t=z_j^{(t+1)}$$
for $j$. The shift operator is characterized by the eigenfunction 
$$(\hat{S}-z_jI)z_j^t=0,$$
where $I$ is the identity operator. Then, by linearity we find that $\hat{S}q(t)=q(t+1)$. We now define the Prony polynomial, which is the polynomial with $e^{i\omega_k }$ as roots: $P(z):=\prod_{k=1}^{r}(z-z_k)=\sum_{k=0}^{r-1}\alpha_k z^k+z^r$. It is then found that for arbitrary $t$, we have
\begin{equation}
\begin{aligned}
    P(S)q(t)&=\sum_{k=0}^r\alpha_k \hat{S}^k \sum_{j=1}^r c_j z_j^t\\
    &=\sum_{j=1}^r c_j z_j^t\sum_{k=0}^r \alpha_k z_j^k\\
    &=\sum_{j=1}^r c_j z_j^t P(z_j)=0.
\end{aligned}
\end{equation}
By knowledge of the Prony polynomial, we know that $\alpha_r=1$, resulting in the following matrix expression
\begin{equation}\label{eq:linear_system}
    H_0(\alpha_k)_{k=0}^{r-1}=-(q(r+j))_{j=0}^{r-1},
\end{equation}
$H_0\in \mathbb{C}^r\times\mathbb{C}^{r}$ is the \emph{square Hankel matrix} defined as $H_x:=(q(x+j+k))_{j,k=0}^{r-1}$, where the $x=0$ here is reflected as a shift in the first element in the matrix. The Hankel matrix has rank $r$, involves equispaced signals $q(j),~j\in[2r]-1$, and has the decomposition, which we refer to as the Vandermonde decomposition:
\begin{equation}\label{eq:h0}
    H_0=V(\mathbf{z}) C V(\mathbf{z})^{\rm T},
\end{equation}
$V(\mathbf{z}):=(z_j^{k-1})_{j,k=1}^r$ is the \emph{Vandermonde matrix} generating from $\textbf{z}:=(z_j)_{j=1}^r$, and $C:=\rm{diag}(\mathbf{c})$ is a diagonal matrix generating from $\mathbf{c}:=(c_j)_{j=1}^r$, which are coefficients in Eq.~\eqref{eq:signal}.

Prony's method then works by solving the linear system given by Eq.~\eqref{eq:linear_system} for the vector $\bm{\alpha}$. Subsequently, each $z_j$ can be obtained by finding the roots for $P(z)=0$. Finally, the coefficients are solved by the Vandermonde system $$V(\mathbf{z})(c_j)_{j=1}^r=(q(k))_{k=0}^{r-1}.$$
In practice, an upper bound $R$ on the rank is usually known, and one can learn the signal's sparsity by constructing the Hankel matrix and probing the rank. However, because of the root-finding step, Prony's method may suffer from instability due to noise.

To address this problem, we observe that the key insight in Prony's method is to leverage the shift invariance of the sum of exponential structure. That is, equispaced samples of the signal given by Eq.~\eqref{eq:signal} reside in the same subspace regardless of the starting point of the samples. The problem then becomes extracting information from such a shift-invariant subspace. As such, we examine the Vandermonde decomposition of $H_1$ as
\begin{equation}\label{eq:h1}
    H_1=V(\mathbf{z}) CZ V(\mathbf{z})^{\rm T},
\end{equation}
where $Z:=\rm{diag}(\bm{z})$ generating from $\bm{z}$. Then, we define the \emph{matrix pencil} as the family of matrices $H(1)-\lambda H(0)$ parameterized by $\lambda$. We claim that \emph{the poles of the matrix pencil give $\lambda=z_j$} such that the signal processing problem is addressed. To justify this, we consider the generalized eigenvalue problem (GEVP) as follows. 
\begin{definition}[Generalized eigenvalue problem]\label{def:gevp}
Let $H_0, H_1\in \mathbb{C}^{r\times r}$ be Hankel matrices that are provided by Eq.~\eqref{eq:h0} and Eq.~\eqref{eq:h1} with each element a signal given by Eq.~\eqref{eq:signal}. The generalized eigenvalue problem is defined by solving the following equation for each generalized eigenvalue $\lambda$:
\begin{equation}\label{eq:generalized_ev}
    H_1\ket{\nu}=\lambda H_0\ket{\nu}.
\end{equation}
\end{definition}
For the above equation, by plugging in the Vandermonde decomposition, we obtain
\begin{align*}
    V(\mathbf{z}) CZ V(\mathbf{z})^{\rm T}\ket{\nu}&=\lambda V(\mathbf{z}) C V(\mathbf{z})^{\rm T}\ket{\nu}\\
    Z\left(V(\mathbf{z})^{\rm T}\ket{\nu}\right)&=\lambda  \left(V(\mathbf{z})^{\rm T}\ket{\nu}\right).
\end{align*}
Here, the last line yields the eigenfunction that is satisfiable by eigenvalues $\lambda=z_j$. This is achieved by multiplying $(V(\mathbf{z}) C)^{-1}$ on both sides, as $V$ is of full rank and $C$ contains non-singular diagonal values. As such, by solving the generalized eigenvalue problem given by Eq.~\eqref{eq:generalized_ev}, we solve the signal processing problem. This completes the description of the MP method we used in this work.

\begin{remark}
In a practical sense, we may not directly know the sparsity $r$ of the signal. Yet, we may have the knowledge about the upper bound on the sparsity. Then, it is possible to construct the Hankel matrix and compute its rank and acquire the sparsity of the signal. Thus, for simplicity of the discussion, we assume knowledge of the sparsity. But an upper bound to it is equally good as long as the upper bound is on the same order as the true sparsity.
\end{remark}

Moreover, we compare the MP method here to that given in \cite[Algorithm 5.1]{dutkiewicz2022heisenberg}. The algorithm defines the shift matrix
$$S:=V(\mathbf{z})ZV^{-1}(\mathbf{z}),$$
which obeys $SH_0=H_1$. As such, one attains the full set of $\{z_j\}_{j=1}^r$ from eigenvalues of $S$, which is approximated by $G$ by solving
$$\min\|GH_0-H_1\|.$$
The two MP methods capture different aspects of the shift-invariance of the signal, while our method is conceptually simpler. And the authors only consider the tasks of phase estimation. Besides, as we will see in sections, our method yields noise-tolerant results for both sampling and noise in signals.

\subsection{Quantum algorithms combining the QEVT protocol}\label{sec:q_alg}
Our scheme for solving the QESE problem is to first generate the signal in the form of Eq.~\eqref{eq:signal} by a quantum computer and then solve for each $z_j$ in the signal to give us the information about the eigenvalues we aim for.

A simple idea for constructing such a signal may be realized by estimating the trace of $\rho_I$ with the power function of $A$:
\begin{equation}\label{eq:q_signal}
    g(t)=\tr{\rho_I A^t}=\alpha_A^t\tr{\rho_I(A/\alpha_A)^t}=\tr{\sum_{i} c_i \bra{\phi_i}A^t\ket{\psi_i}}=\sum_{i=1}^r c_i \lambda_i^t,
\end{equation}
where we have used the cyclic property of trace and biorthogonality of the eigenbasis of $A$. The above equation also indicates ways to approximate the signal: (i) First, construct the matrix function as $(A/\alpha)^t$ using Lemma \ref{lemma:block_prod}, (ii) Then, estimate the trace between the matrix function and the initial density matrix, and (iii) Multiply back the normalization factor $\alpha_A^t$ to get an final estimation to Eq.~\eqref{eq:q_signal}. Besides, when comparing Eq.~\eqref{eq:q_signal} to Eq.~\eqref{eq:signal}, we find that $z_i=\lambda_i$. Thus, the output of the GEVP problem is directly each $\lambda_i$.

In particular, the trace between $(A/\alpha)^t$ and the initial density matrix can be estimated using the Hadamard circuit as shown in Fig.~\ref{fig:main} (a) with the controlled unitary being a $(\alpha_A^t,mt,0)$-block encoding of the $A^t$, denoted as $U_{A^t}$. That is when $W=I$ and $S^\dagger$, the measurement outcome $\{X_j\}_j$ and $\{Y_j\}_j,~X_j, Y_j\in\{-1,1\}$ are unbiased estimators to the real and imaginary parts of the signal given by Eq.~\eqref{eq:q_signal} such that
\begin{equation}\label{eq:power_unbiased}
\begin{aligned}
    \mathbb{E}[X_j]&=\frac{1}{2}\left(\tr{\rho_I(\bra{0^m}U_{A^t}\ket{0^m}+\bra{0^m}U_{A^t}^\dagger\ket{0^m})}\right)=\tr{\rho_I\frac{A^t+(A^t)^\dagger}{2\alpha_A^t}}=\frac{\tr{\rho_I A^t_h}}{\alpha_A^t}=\mathrm{Re}(g(t)/\alpha_A^t)\\
    \mathbb{E}[Y_j]&=\frac{1}{2i}\left(\tr{\rho_I(\bra{0^m}U_{A^t}\ket{0^m}-\bra{0^m}U_{A^t}^\dagger\ket{0^m})}\right)=\tr{\rho_I\frac{A^t-(A^t)^\dagger}{2\alpha_A^ti}}=\frac{\tr{\rho_I A^t_a}}{\alpha_A^t}=\mathrm{Im}(g(t)/\alpha_A^t),
\end{aligned}
\end{equation}
where $\mathrm{Re}(x)$ and $\mathrm{Im}(x)$ denotes the real and imaginary part of $x$, respectively; $A_h^t$ and $iA_a^t$ denotes the hermitian and anit-hermitian parts of $A^t$ as given by Eq.~\eqref{eq:ha_decomp}. Note that $U^\dagger_{A_t}$ block encodes $(A^\dagger)^t$. 
Therefore, one may estimate the signal $g(t)$ by multiplying $\alpha^t_A$ by the empirical mean of the samples. Besides, Eq.~\eqref{eq:power_unbiased} is justified by
$$\tr{\rho_I A^t_h}+i \tr{\rho_I A^t_a}=\tr{\rho_IA^t}=g(t),$$
along with the fact that the spectra of hermitian and anti-hermitian matrices are real and imaginary, respectively. Thus, it holds that 
$$\tr{\rho_I A^t_h}\equiv\mathrm{Re}(g(t)),~\tr{\rho_I A^t_a}\equiv\mathrm{Im}(g(t)),$$
because of $\rho_I$ is hermitian.

For more complicated cases, the signal could be given by taking the trace of some matrix function $f(A)$ and the initial state:
\begin{equation}\label{eq:q_signal_f}
    g^\prime(t)=\tr{\rho_If_t(A)}\tr{\sum_{i} c_i \bra{\phi_i}f_t(A)\ket{\psi_i}}=\sum_{i=1}^r c_i f_t(\lambda_i).
\end{equation}
The matrix function can then be approximated by a polynomial function using the QEVT algorithm. Specifically, the first step it to construct a $(2\alpha_p, m', \varepsilon)$-block encoding unitary $U_{p(A)}$ of $p(A/\alpha_A)$ as given by Lemma  \ref{lem:qevt_c_block} and \ref{lem:qevt_r_block} of $f(A)$ for some $m'$ and $\varepsilon$. Although the matrix is normalized in the block encoding, we may still approximate $f(A)$ for $A$ with a much larger spectral radius than $1$ by assuming the map is dissipative such that $f(A)$ is a trace non-increasing operator.

For signal given by Eq.~\eqref{eq:q_signal_f}, similar to the power function case, we have for the Hadamard test circuit by controlling $U_{p(A)}$ with $W=I$ and $S^\dagger$, the measurement outcome $\{X_j\}_j$ and $\{Y_j\}_j,~X_j, Y_j\in\{-1,1\}$ are unbiased estimators to the real and imaginary parts of the signal such that
\begin{equation}\label{eq:unbaised_p}
\begin{aligned}
    \mathbb{E}[X_j]&=\frac{1}{2}\left(\tr{\rho_I(\bra{s}U_{p(A)}\ket{\psi_I''}+\bra{s}U_{p(A)}^\dagger\ket{\psi_I''})}\right)=\tr{\rho_I\frac{p(A/\alpha_A)+p^\dagger(A/\alpha_A)}{4\alpha_p}}=\frac{\tr{\rho_I p_h(A/\alpha_A)}}{2\alpha_p}\approx\mathrm{Re}(g'(t)/(2\alpha_p)),\\
    \mathbb{E}[Y_j]&=\frac{1}{2}\left(\tr{\rho_I(\bra{s}U_{p(A)}\ket{\psi_I''}-\bra{s}U_{p(A)}^\dagger\ket{\psi_I''})}\right)=\tr{\rho_I\frac{p(A/\alpha_A)-p^\dagger(A/\alpha_A)}{4\alpha_p i}}=\frac{\tr{\rho_I p_a(A/\alpha_A)}}{2\alpha_p}\approx\mathrm{Im}(g'(t)/(2\alpha_p)),
\end{aligned}
\end{equation}
where $p_h(A/\alpha_A):=\frac{p(A/\alpha_A)+p^\dagger(A/\alpha_A)}{2}$, $p_a(A/\alpha_A):=\frac{p(A/\alpha_A)-p^\dagger(A/\alpha_A)}{2i}$, $\ket{\psi_I''}$ is defined by Eq.~\eqref{eq:final-initial-complex}, and $\ket{s}:=\frac{1}{\sqrt{d}}\ket{1}\sum_{k=0}^{d-1}\ket{k}$. The final approximate equality results from $p(A/\alpha_A)$ only approximating $f_t(A)$. Again, up to a normalization factor, we attain the real and imaginary parts of the signal.

The above discussion necessitates several modifications of the Hadamard test circuit. First, because of the distinct subspace for the polynomial function encoded in $U_{p(A)}$, we need to prepare a special input state before applying the controlled block encoding unitary. This is illustrated in Fig.~\ref{fig:main} (a) of the unitary that depends on the control qubit of the Hadamard circuit to be either $\ket{0}$ or $\ket{1}$, prepares $\ket{s}$ or $\ket{\psi_I''}$ for the ancillary register of the block encoding unitary with input state $\rho_I$. In the next section, we show that the Hadamard test protocol provides an unbiased estimator to the real and imaginary parts of the signal as given by Eq.~\eqref{eq:unbaised_p} using the language of purification.


In more general cases, the normalization factor $\alpha$ is not one, and we may also be interested in matrix functions of $A$ other than the power function, such as the exponential function that typically appears as the solution to ordinary differential equations and open quantum systems. The matrix function is then realized by the QEVT algorithm. We discuss these problems in detail in Sec.~\ref{sec:app}.



\subsection{Quadratic speedup with purified quantum query access}
When the purification of the initial density matrix is known, we could invoke the QAE protocol instead of the Hadamard test algorithm. To this end, we first define the purified quantum query access of a density matrix.
\begin{definition}[Purified quantum query access]\label{def:purified}
Let $\rho \in \mathbb{C}^{N \times N}$ with $N=2^n$ be a density matrix of eigen-decomposition $\rho=\sum_{i=1}^N p_i\ket{\varphi_i}\bra{\varphi_i}$. It has purified quantum query access if we have access to the unitary oracle $U_\rho$ such that
\begin{equation}\label{eq:purified state}
    U_\rho|0\rangle_E|0\rangle_I=\left|\rho\right\rangle_{E I}=\sum_{i=1}^N \sqrt{p_i}\left|\phi_i\right\rangle_E\left|\varphi_i\right\rangle_I, \quad \text { where }\left\langle\phi_i | \phi_j\right\rangle=\left\langle\varphi_i | \varphi_j\right\rangle=\delta_{i j}.
\end{equation}
The density matrix then acts as the reduced density matrix of the purified state $\operatorname{Tr}_E\left(\left|\rho\right\rangle\left\langle\rho\right|\right)=\rho$.
\end{definition}
When given the purified quantum query access of the initial density matrix, denoted as $\ket{\rho_I}$, we show that it is possible to apply the AE protocol to enhance the Hadamard test quadratically with the scaling on the accuracy. 

First, consider the block encoding unitary \(U_{p(A)}\) acting on an ancilla register \(b\) and the system register such that $
(\langle 0|_b \otimes I) U_{p(A)} (|0\rangle_b \otimes I) = \frac{p(A/\alpha_A)}{2\alpha_p}$. The implementation of the block encoding through the QEVT framework is given by Lemma \ref{lem:qevt_c_block}. Hence, the objective can be estimate through the block encoded polynomial function up to some factors: $\text{Tr}(\rho_I p(A/\alpha_A)) = 2\alpha_p  \text{Tr}\left( \rho_I \frac{p(A/\alpha_A)}{2\alpha_p} \right)$. 
We now assume the factor $2\alpha_p p(\alpha_A)$ to be known and focus on estimating the trace between $\rho_I$ and the block encoded polynomial function.

Our circuit construction is similar to that of the Hadamard test circuit, adapted for mixed states and block encodings. The circuit comprises a control qubit $c$, an ancilla register $b$, and registers $E$ and $I$. Starting from the initial state $|0\rangle_c |0\rangle_b |\rho_I\rangle_{E I}$, we proceed as follows to estimate the real part of the trace.
\begin{equation}\label{eq:circ_hadamard_test}
\begin{alignedat}{2}
\ket{0}_c\ket{0}_b\ket{\rho_I}_{EI} &\xrightarrow{H_c} &\quad& \frac{|0\rangle_c + |1\rangle_c}{\sqrt{2}} \otimes |0\rangle_b |\rho_I\rangle_{EI} \\
&\xrightarrow{U_s} &\quad& \frac{1}{\sqrt{2}}\left[\ket{0}_c\ket{s}_b\ket{\rho_I}_{EI}+\ket{1}_c\ket{\psi_I''}_b\ket{\rho_I}_{EI}\right] \\
&\xrightarrow{\rm{c}-U_{p(A)}} &\quad& \frac{1}{\sqrt{2}} \left[ |0\rangle_c |s\rangle_b |\rho_I\rangle_{E I} + |1\rangle_c \left( |s\rangle_b \frac{p(A/\alpha_A)_I}{2\alpha_p} |\rho_I\rangle_{E I} + |\Phi^\perp\rangle_{b E I} \right) \right] \\
&\xrightarrow{H_c} &\quad& \quad~|0\rangle_c \otimes \frac{1}{2} \left[ |s\rangle_b |\rho_I\rangle_{E I} + |s\rangle_b \frac{p(A/\alpha_A)_I}{2\alpha_p} |\rho_I\rangle_{E I} + |\Phi^\perp\rangle_{b E I} \right] \\
&&\quad& + |1\rangle_c \otimes \frac{1}{2} \left[ |s\rangle_b |\rho_I\rangle_{E I} - |s\rangle_b \frac{p(A/\alpha_A)_I}{2\alpha_p} |\rho_I\rangle_{E I} - |\Phi^\perp\rangle_{b E I} \right].
\end{alignedat}
\end{equation}
Here, the $H_c$ means apply the Hadamard gate on the register $c$. $U_s$ is defined as 
\begin{equation}
  U_s = \ket{0}\bra{0}_c \otimes V_s + \ket{1}\bra{1}_c \otimes V_I,
  \qquad \text{where } V_s\ket{0}_b = \ket{s}_b \text{ and } V_I\ket{0}_b = \ket{\psi_I''}_b.
  \label{eq:Us_controlled_prep} 
\end{equation}
$|\Phi^\perp\rangle_{b E I}$ is the state that orthogonal to $\ket{0}_b$ and $\mathrm{c}-U_{p(A)}$ denotes the controlled $U_{p(A)}$ with control register $c$.

Next, measure register $c$ in the state $\ket{0}_c$ gives probability:
$$
P(|0\rangle_c) = \frac{1}{4} \left\| |\rho_I\rangle_{E I} + \frac{p(A/\alpha_A)_I}{2\alpha_p} |\rho_I\rangle_{E I} \right\|_2^2 + \frac{1}{4} \langle \Phi^\perp | \Phi^\perp \rangle,
$$
where $\|\cdot\|_2$ is the $\rm{L}_2$ norm. The two terms in the above summation are expanded as
\begin{equation*}
\begin{aligned}
    \left\| |\rho_I\rangle_{EI} + \frac{p(A/\alpha_A)_I}{2\alpha_p} |\rho_I\rangle_{EI} \right\|_2^2 &= 1 + \frac{1}{4\alpha_p^2} \langle \rho_I | p(A/\alpha_A)^\dagger p(A/\alpha_A) |\rho_I \rangle + \frac{1}{2\alpha_p} \left( \langle \rho_I | p(A/\alpha_A) |\rho_I \rangle + \langle \rho_I | p(A/\alpha_A)^\dagger |\rho_I \rangle \right)\\
    \langle \Phi^\perp | \Phi^\perp \rangle &= 1 - \frac{1}{4\alpha_p^2} \langle \rho_I | p(A/\alpha_A)^\dagger p(A/\alpha_A) |\rho_I \rangle,
\end{aligned}
\end{equation*}
where the second line is resulting from $\left\||s\rangle_b \frac{p(A/\alpha_A)_I}{2\alpha_p} |\rho_I\rangle_{E I} + |\Phi^\perp\rangle_{b E I}\right\|_2=1$ due to the state is normalized. This yields
$$
P(|0\rangle_c) = \frac{1}{2} + \frac{\text{Re} \left( \text{Tr}\left( \rho_I p(A/\alpha_A) \right) \right)}{4\alpha_p}.
$$
Hence, it finally gives us the estimator for the real part of the trace as
\begin{equation}\label{eq:ae_real}
    \text{Re} \left( \text{Tr}\left( \rho_I p\left(A/\alpha_A\right) \right) \right) = 2\alpha_p (2 P(|0\rangle_c) - 1).
\end{equation}

Similarly, for the imaginary part, we insert a phase gate $S^\dagger$ before the second Hadamard gate, giving the state:
$$
\frac{1}{\sqrt{2}} \left[ |0\rangle_c |s\rangle_b |\rho_I\rangle_{E I} - i |1\rangle_c \left( |s\rangle_b \frac{p(A/\alpha_A)_I}{2\alpha_p} |\rho_I\rangle_{E I} + |\Phi^\perp\rangle_{b E I} \right) \right].
$$
After applying the Hadamard gate, the probability to attain the measurement outcome $\ket{0}_c$ is
$$
P'(|0\rangle_c) = \frac{1}{4} \left\| |\rho_I\rangle_{E I} - i \frac{p(A/\alpha_A)_I}{2\alpha_p} |\rho_I\rangle_{E I} \right\|_2^2 + \frac{1}{4} \langle \Phi^\perp | \Phi^\perp \rangle = \frac{1}{2} + \frac{\text{Im} \left( \text{Tr}\left( \rho_I p(A/\alpha_A) \right) \right)}{4\alpha_p},
$$
The estimator for the imaginary part of the trace has the same form as the real part, which is
\begin{equation}\label{eq:ae_img}
    \text{Im} \left( \text{Tr}\left( \rho_I p\left(A/\alpha_A\right) \right) \right) = 2\alpha_p (2 P'(|0\rangle_c) - 1).
\end{equation}

Therefore, it is now clear that we can estimate the amplitude for $P(|0\rangle_c)$ and $P'(|0\rangle_c)$ to precision $\frac{\varepsilon}{4  \alpha_p}$ for some $\varepsilon$ using QAE as given by Lemma \ref{lemma:ae} with $\widetilde{O}\left( \frac{\alpha_p }{\varepsilon} \right)$ queries to $U_s$ and $\rm{c}-U_{p(A)}$ and their inverse. Besides, we note that if the matrix function $p(A/\alpha_A)$ is trace non-increasing, we can always set $\alpha_p=\mathcal{O}(1)$.

Eq.~\eqref{eq:unbaised_p}.
Besides, using the language of purification, we can readily formulate the estimator for the Hadamard test circuit.
\begin{remark}
The estimator for the real and imaginary parts of $\tr{\rho_I p(A/\alpha_A)}$ are given by
\begin{equation}\label{eq:hadamard_estimator1}
\begin{aligned}
    \mathrm{Re} \left( \mathrm{Tr}\left( \rho_I p\left(A/\alpha_A\right) \right) \right)&=2\alpha_p(P(|0\rangle_c)-P(|1\rangle_c)),\\
    \mathrm{Im} \left( \mathrm{Tr}\left( \rho_I p\left(A/\alpha_A\right) \right) \right)&=2\alpha_p(P'(|0\rangle_c)-P'(|1\rangle_c)),
\end{aligned}
\end{equation}
where $P(|x\rangle_c)$ and $P'(|x\rangle_c)$ are probabilities of register $c$ in Fig.~\ref{fig:main} (b) measures $x$ for $x\in\{0,1\}$  with $W=I$ and $S^\dagger$, respectively. In particular, for the signal given by Eq.~\eqref{eq:q_signal}, we have
\begin{equation}\label{eq:hadamard_estimator2}
\begin{aligned}
    \mathrm{Re} \left( \mathrm{Tr}\left( \rho_I A^t \right) \right)&=\alpha_A^t(P(|0\rangle_c)-P(|1\rangle_c)),\\
    \mathrm{Im} \left( \mathrm{Tr}\left( \rho_I A^t \right) \right)&=\alpha_A^t(P'(|0\rangle_c)-P'(|1\rangle_c)).
\end{aligned}
\end{equation}
\end{remark}
To explain, we find that tracing out the register $I$ in Eq.~\eqref{eq:circ_hadamard_test} will provide the correct state of the Hadamard test circuit. This reflects that the probabilities from measuring the register $c$ with or without purification are the same. This then gives us a similar estimator for the signals for the Hadamard test and the QAE algorithm. Besides, for the signal in Eq.~\eqref{eq:q_signal}, we construct the BE of $(A/\alpha_A)^t$ using Lemma \ref{lemma:block_prod}. This then gives the measurement probabilities that satisfy $P(|0\rangle_c)-P(|1\rangle_c))=\text{Re} \left( \text{Tr}\left( \rho_I (A/\alpha_A)^t \right) \right)$ and similarly for the imaginary part. This concludes Eq.~\eqref{eq:hadamard_estimator2}.

\section{Complexity analysis}

\subsection{Error propagation of the Vandermonde decomposition}
We provide the noise-tolerant result of the MP method and analyze the sample complexity of the algorithm. It is worth noting that the ideal GEVP given by Eq.~\eqref{eq:generalized_ev} can be reformulated into an eigenvalue problem by multiplying $H^{-1}_0$ on both sides of the equation:
$$H^{-1}_0H_1\ket{\nu}=\lambda\ket{\nu}.$$
By denoting $G:=H^{-1}_0H_1$, we find its eigen decomposition as $G=(V^{\rm T})^{-1}ZV^{\rm T}$, where we have denoted $V(\bm{z})$ as $V$ for notational convenience. 

For practical cases, we denote the noisy Hankel matrix as $\widetilde{H}_0:=H_0+E_0$ and $\widetilde{H}_1:=H_1+E_1$, where $E_0$ and $E_1$ are noise matrices. Without loss of generality, we first assume that the noise matrices consist of Gaussian elements (i.e., induced by sampling noises) and the noise rates are on the same level for both matrices such that their operator norms satisfy $\|E_0\|=\|E_1\|=\|E\|$. As such, the noisy GEVP problem can be similarly treated as an eigenvalue problem
\begin{equation}\label{eq:noise_general}
\begin{aligned}
    \widetilde{H}_1\ket{\widetilde\nu}&=\widetilde\lambda\widetilde{H}_0\ket{\widetilde\nu}\\
    \widetilde{G}\ket{\widetilde\nu}&=\widetilde\lambda\ket{\widetilde\nu},
\end{aligned}
\end{equation}
where $\widetilde{G}:=\widetilde{H}^{-1}_0\widetilde H_1$. Then, our goal becomes to deduce how noises impact the eigenvalues, which is determined by the change of $G$.
To this end, we resort to the first-order perturbation theory and denote $\tilde{G}=G+\Delta G$:
\begin{align*}
\widetilde{G} &= \left(H_0+E_0\right)^{-1}(H_1+E_1) \\
&=\left(H_0^{-1}-H_0^{-1} E_0 H_0^{-1}+\mathcal{O}(\|E_0\|^2)\right)\left(H_1+E_1\right)\\
&=G+H_0^{-1} E_1-H_0^{-1} E_0 H_0^{-1} H_1+\mathcal{O}\left(\|E\|^2\right).
\end{align*}
In the second line, we have used the Taylor expansion for $f(x)=x^{-1}$ and truncated to the first order, assuming $\|E\|$ is small.

Therefore, we have
\begin{equation}
    \Delta G = \widetilde{G} - G = H_0^{-1}(E_1 - E_0G)+\mathcal{O}\left(\|E\|^2\right).
\end{equation}
We then bound the operator norm of by considering the first-order perturbation
\begin{align*}
    \|H_0^{-1}(E_1 - E_0 G)\|&\leq \left\|H_0^{-1}\right\|\|E_1 - E_0 G\|\\
    &\leq\sigma_{\min}^{-1}(H_0)\left(\|E_1\|+\|E_0 G\|\right)\\
    &\leq \sigma_{\min}^{-1}(H_0)\left(\|E\|+\|E\|\|G\|\right)\\
    &\leq \sigma_{\min}^{-1}(H_0)\left(\|E\|+\|E\|\kappa(V)\right).
\end{align*}
Here, the first line follows the sub-multiplicativity of the operator norm. In the second line, we apply $\|H_0^{-1}\|=\sigma_{\min}^{-1}(H_0)$ with $\sigma_{\min}$ denoting the minimal singular value, and the triangle inequality. The third line again follows the sub-multiplicativity. Finally, for $\|G\|$, we apply its eigen-decomposition and get $\|(V^{\rm T})^{-1}ZV^{\rm T}\|\leq \|(V^{\rm T})^{-1}\|\|V^{\rm T}\|\|Z\|\leq \kappa(V)$ such that $\kappa(V)$ is the condition number of $V$ and $\|Z\|\leq 1$ because of the quantum operation is trace non-increasing so that $|z_j|\leq1,~\forall j$.

To see the impact of the change in $G$ for the eigenvalue estimation, we utilize the Bauer-Fike theorem~\cite{bauer1960norms} and get the following results.
\begin{lemma}\label{lem:gevp_dist}
For the generalized eigenvalue problem defined in Definition \ref{def:gevp}, consider that we only have noisy access to the two Hankel matrices as given by Eq.~\eqref{eq:noise_general}. Assume the noisy matrices satisfy $\|E_0\|=\|E_1\|=\|E\|$. Then, the eigenvalues solved by the ideal (denoted as ${\lambda}$) and noisy Hankel matrices (denoted as $\widetilde{\lambda}$) obey
\begin{equation}\label{eq:GEVP_dist}
    \min _{\lambda \in Z}|\widetilde{\lambda}-\lambda| \leq\frac{\|E\|(\kappa^2(V)+\kappa(V))}{\sigma_{\min}^2(V)c_{\min}}.
\end{equation}
\end{lemma}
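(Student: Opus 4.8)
The plan is to combine the first-order perturbation bound on $\Delta G = \widetilde G - G$ already derived in the excerpt with the Bauer--Fike theorem applied to the diagonalizable matrix $G = (V^{\mathrm{T}})^{-1} Z V^{\mathrm{T}}$. Recall from the preceding calculation that $\|\Delta G\| \le \sigma_{\min}^{-1}(H_0)\,\|E\|\,(1 + \kappa(V)) + \mathcal{O}(\|E\|^2)$, so the perturbation of $G$ is controlled once we control $\sigma_{\min}(H_0)$. The first step is therefore to lower-bound $\sigma_{\min}(H_0)$ using the Vandermonde decomposition $H_0 = V(\mathbf z)\,C\,V(\mathbf z)^{\mathrm{T}}$ from Eq.~\eqref{eq:h0}: by submultiplicativity of singular values, $\sigma_{\min}(H_0) \ge \sigma_{\min}(V)\,\sigma_{\min}(C)\,\sigma_{\min}(V^{\mathrm{T}}) = \sigma_{\min}^2(V)\, c_{\min}$, where $c_{\min} = \min_i |c_i|$ is the smallest coefficient modulus and $\sigma_{\min}(V) = \sigma_{\min}(V^{\mathrm{T}})$. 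Substituting gives
\begin{equation}
\|\Delta G\| \;\le\; \frac{\|E\|\,(1+\kappa(V))}{\sigma_{\min}^2(V)\, c_{\min}} \;+\; \mathcal{O}(\|E\|^2).
\end{equation}

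The second step is to invoke Bauer--Fike~\cite{bauer1960norms}. Since $G$ is diagonalized as $G = S Z S^{-1}$ with $S = (V^{\mathrm{T}})^{-1}$, every eigenvalue $\widetilde\lambda$ of $\widetilde G = G + \Delta G$ satisfies $\min_{\lambda \in Z} |\widetilde\lambda - \lambda| \le \kappa(S)\, \|\Delta G\|$, where $\kappa(S) = \|S\|\,\|S^{-1}\| = \|(V^{\mathrm{T}})^{-1}\|\,\|V^{\mathrm{T}}\| = \kappa(V)$ (the condition number is invariant under transpose and under inversion). Chaining the two bounds yields
\begin{equation}
\min_{\lambda \in Z} |\widetilde\lambda - \lambda| \;\le\; \kappa(V)\cdot \frac{\|E\|\,(1+\kappa(V))}{\sigma_{\min}^2(V)\, c_{\min}} \;=\; \frac{\|E\|\,(\kappa^2(V) + \kappa(V))}{\sigma_{\min}^2(V)\, c_{\min}},
\end{equation}
which is exactly Eq.~\eqref{eq:GEVP_dist}, discarding the higher-order $\mathcal{O}(\|E\|^2)$ term under the assumption that $\|E\|$ is small (formally: absorbing it by requiring $\|E\|$ below a threshold depending on $\sigma_{\min}(H_0)$ so the Neumann series for $(H_0+E_0)^{-1}$ converges).

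The main obstacle is not the algebra — which is a short chain of submultiplicativity and Bauer--Fike — but making the first-order truncation rigorous: one must show the $\mathcal{O}(\|E\|^2)$ remainder in $\widetilde G = G + \Delta G$ is genuinely negligible, i.e. establish a quantitative regime $\|E\| \lesssim \sigma_{\min}^2(V)\, c_{\min}$ in which $(H_0 + E_0)$ stays invertible and the Neumann expansion $(H_0+E_0)^{-1} = H_0^{-1} - H_0^{-1} E_0 H_0^{-1} + \cdots$ has a geometrically bounded tail. A secondary subtlety is that $\widetilde G$ need not be diagonalizable, so Bauer--Fike must be applied with $G$ (not $\widetilde G$) as the diagonalizable reference matrix and $\Delta G$ as the perturbation — which is the correct orientation here and is what makes $\kappa(V)$ rather than a condition number of the noisy matrix appear. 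I would also remark that the bound is stated as a one-sided matching distance ($\min_{\lambda\in Z}$), which is precisely what Bauer--Fike delivers, so no spectral gap is needed at this stage; the gap enters only later when one wants a bijective pairing of $\{\widetilde\lambda_i\}$ with $\{\lambda_i\}$.
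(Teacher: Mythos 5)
Your proposal follows the paper's proof essentially step for step: apply Bauer--Fike with $G=(V^{\mathrm T})^{-1}ZV^{\mathrm T}$ as the diagonalizable reference (giving the factor $\kappa(V)$), plug in the first-order bound $\|\Delta G\|\lesssim\sigma_{\min}^{-1}(H_0)\|E\|(1+\kappa(V))$ already derived in the text, and then lower-bound $\sigma_{\min}(H_0)$ via the Vandermonde factorization $H_0=VCV^{\mathrm T}$ and submultiplicativity to get $\sigma_{\min}^{-1}(H_0)\le\sigma_{\min}^{-2}(V)c_{\min}^{-1}$. The additional remarks on the Neumann-series regime, the orientation of Bauer--Fike, and the one-sided matching distance are accurate but do not change the substance; the argument matches the paper's.
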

\begin{proof}
The Bauer-Fike theorem~\cite{bauer1960norms} states that for diagonalizable matrix $T=X\Lambda X^{-1}$ with eigenvalues $\{\eta\}$, the eigenvalues $\{\widetilde\eta\}$ of perturbed matrix $\widetilde{T}=T+\Delta T$ obey
$$
\min _{\eta \in \Lambda}|\widetilde{\eta}-\eta| \leq \kappa(X)\|\Delta T\|.
$$
As such, we can bound the worst-case eigenvalue change in Eq.~\eqref{eq:noise_general} from Eq.~\eqref{eq:generalized_ev} as
\begin{equation*}
    \min _{\lambda \in Z}|\widetilde{\lambda}-\lambda| \leq \kappa(V)\|\Delta G\|\lesssim \kappa(V)\sigma_{\min}^{-1}(H_0)\|E\|\left(1+\kappa(V)\right)\leq \frac{\|E\|(\kappa^2(V)+\kappa(V))}{\sigma_{\min}^2(V)c_{\min}}.
\end{equation*}
Here, the last inequality follows from $\sigma_{\min}^{-1}(H_0)\leq \sigma_{\min}^{-2}(V)c_{\min}^{-1}$ by the Vandermonde decomposition and sub-multiplicativity.
\end{proof}

\subsection{Bounds on the condition number of the Vandermonde matrix}
Our next step is to bound the condition number of $V$. First, we bound the spectral norm of $V$. Since all entries of $V$ have magnitude at most 1 (due to $|z_j| \leq 1$), we apply the bound on the Frobenius norm:
$$\|V\|_F^2 = \sum_{j=1}^r \sum_{k=1}^r |V_{j,k}|^2 \leq r^2.$$
Because the Frobenius norm bounds the operator norm, we have:
\begin{equation}\label{eq:v_op_norm}
    \|V\| \leq \|V\|_F \leq r.
\end{equation}

Next, we find that the upper bound on the $\sigma_{\min}^{-1}(V)$ coincides with $\|V^{-1}\|$. As such, we construct the upper bound by deriving the explicit form of $V^{-1}$. 
Although the form of this inverse is known, we derive it here for completeness. We want to find the form of a matrix $W = (w_{jk})_{j,k=1}^r$ such that $VW=I$. This means for $j\in[r]$, we have 
\begin{equation}\label{eq:interpo}
    \sum_{k=1}^r z_j^{k-1} w_{kl} = \delta_{jl},
\end{equation}
where $\delta_{jl}$ is the Kronecker delta function. Our target then becomes \emph{deriving the form of each coefficient $w_{kl}$ given by the above formula.}

Now, let us notice that for a fixed $l$, Eq.~\eqref{eq:interpo} can be interpreted as a degree-$(r-1)$ polynomial function of the input $z_j$ with coefficients $w_{kl}$. 
As such, each column of $W$ can be viewed as a solution to the polynomial interpolation problem given by Eq.~\eqref{eq:interpo}. To this end, we define the polynomial $p_l(z)$ of degree at most $r-1$ with coefficients correspond to the $l$th column of $W$:
$$p_l(z_j) = \delta_{jl},~j\in[r].$$

A simple solution to the polynomial interpolation problem is by applying the \emph{Lagrange polynomial}. We briefly review the idea of this method. The Lagrange polynomial provides the unique lowest-degree polynomial interpolation for a given set of data points. Consider a set of distinctive input $\{x_i\}_{i=0}^K$ with values $\{y_i\}_{i=0}^K$. Each Lagrange polynomial comprising the Lagrange basis is defined as
$$l_j(x):=\prod_{\substack{0 \leq i \leq k \\ i \neq j}} \frac{x-x_i}{x_j-x_i}.$$
We find the Lagrange polynomial obeys $l_j(x_k)=\delta_{jk}$. Then, by defining the Lagrange interpolation polynomial
$$L(x):=\sum_{j=0}^K y_jl_j(x),$$
we find the polynomial interpolates the data points as $L(x_j)=y_j$.
In our cases, for a fixed $l$, we define its Lagrange interpolation polynomial as 
\begin{equation}\label{eq:interpo2}
    p_l(z) := \sum_{j=1}^r \delta_{jl} \frac{\prod_{k \neq j}(z - z_k)}{\prod_{k \neq j}(z_j - z_k)}=\frac{\prod_{k \neq l}(z - z_k)}{\prod_{k \neq l}(z_l - z_k)}.
\end{equation}
Here, the last formula results from $\delta_{jl} = 0$ for $j \neq l$. We find that by expanding the formula into a summation of monomials, the coefficient for each $z_k$ gives $w_{kl}$. To this end, we resort to the \emph{elementary symmetric polynomial}, which is defined as follows.
\begin{definition}[Elementary symmetric polynomial]
Given variables $x_1, x_2, \ldots, x_n$, the elementary symmetric polynomial $e_m(x_1, x_2, \ldots, x_n)$ is defined as:
$$e_m(x_1, x_2, \ldots, x_n) = \sum_{1 \leq i_1 < i_2 < \ldots < i_m \leq n} x_{i_1} x_{i_2} \cdots x_{i_m}$$
That is, $e_m$ is the summation of all possible products of $m$ distinct variables chosen from the $n$ variables.
\end{definition}


Applying the elementary symmetric polynomial, the nominator of the polynomial given by Eq.~\eqref{eq:interpo2} can be expanded as
$$\prod_{k \neq l}(z - z_k) = \sum_{m=0}^{r-1} (-1)^{r-1-m} e_{r-1-m}(z_1,...,z_{l-1},z_{l+1},...,z_r) \cdot z^m.$$
As such, we obtain each element in $W$ as
$$w_{jk} = \frac{(-1)^{r-j} e_{r-j}(z_1,...,z_{k-1},z_{k+1},...,z_r)}{\prod_{l \neq k}(z_k - z_l)}.$$
This finishes the derivation for the explicit form of $V^{-1}$. Our strategy for bounding $\|V^{-1}\|$ again follows by bounding the Frobenius norm as dictated by Eq.~\eqref{eq:v_op_norm}. We first consider the bound for the elementary symmetric polynomials. Because of $|z_j| \leq 1,~\forall j$, we have
$$|e_m(z_1,...,z_n)| \leq \binom{n}{m},$$
which follows from counting the number of components in the polynomials and iteratively applying the triangle inequality. This readily leads to
$$|w_{jk}| \leq \frac{\binom{r-1}{r-j}}{|\prod_{l \neq k}(z_k - z_l)|}\leq\frac{\binom{r-1}{r-j}}{\Delta^{r-1}},$$
where 
\begin{equation}\label{eq:gap}
    \Delta := \min_{j \neq k} |z_j - z_k|
\end{equation}
is the minimal gap of the signal. Following Eq.~\eqref{eq:v_op_norm}, we have 
$$\|V^{-1}\| \leq \|V^{-1}\|_F = \sqrt{\sum_{j=1}^r \sum_{k=1}^r |w_{jk}|^2}\leq \sqrt{\sum_{j=1}^r \sum_{k=1}^r \frac{\binom{r-1}{r-j}^2}{\Delta^{2(r-1)}}}=\frac{ \sqrt{r\binom{2(r-1)}{r-1}}}{\Delta^{r-1}},$$
where the last equation follows from the identity $\sum_{j=0}^{r-1} \binom{r-1}{j}^2 = \binom{2(r-1)}{r-1}$. Then, for large enough $r$, we use Stirling's approximation for the binomial coefficient such that
$\binom{2(r-1)}{r-1} \approx \frac{4^{r-1}}{\sqrt{\pi(r-1)}}.$
Eventually, we have
\begin{equation}
    \sigma_{\min}^{-1}(V)=\|V^{-1}\| \leq \|V^{-1}\|_F \lesssim \frac{2^{r-1}\sqrt{r}}{\Delta^{r-1} \sqrt{\pi(r-1)/4}} \approx \frac{2^{r}}{\sqrt{\pi}\Delta^{r-1}}.
\end{equation}
Besides, follows Eq.~\eqref{eq:v_op_norm}, we upper bound the condition number as
\begin{equation}\label{eq:kappa_v}
    \kappa(V)=\|V\|\|V^{-1}\|\leq \frac{2^{r} r}{\sqrt{\pi}\Delta^{r-1}}.
\end{equation}

We notice that the bound on the condition number and $\|V^{-1}\|$ seems rather pessimistic: Both quantities scale exponentially with the sparsity $r$, indicating the algorithm remains efficient only when $r=\mathcal{O}(1)$. Yet, we found the scaling to be inevitable as a lower bound of the condition number is exponential in the matrix size, as displayed in Ref.~\cite{pan2016bad}.

We close this section by deriving the error in the generalized eigenvalue problem.
\begin{lemma}\label{lem:sample_complexity}
Consider the generalized eigenvalue problem defined by Definition \ref{def:gevp} with only noisy access to $H_0$ and $H_1$ as given by $\widetilde{H}_0:=H_0+E_0$ and $\widetilde{H}_1:=H_1+E_1$, respectively. Assume the noisy matrices satisfy $\|E_0\|=\|E_1\|=\|E\|$. Then, the generalized eigenvalues solved by the noisy (denoted as $\widetilde{\lambda}$) and ideal Hankel matrices satisfy
\begin{equation}\label{sigma_min}
    \min _{\lambda \in Z}|\widetilde{\lambda}-\lambda| \leq \frac{\|E\|}{c_{\min}}\left( \frac{r^2 2^{4r}}{\pi^2\Delta^{4(r-1)}}+\frac{r 2^{3r}}{\pi^{3/2}\Delta^{3(r-1)}} \right).
\end{equation}
\end{lemma}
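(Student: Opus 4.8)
The plan is to obtain \eqref{sigma_min} by feeding the quantitative Vandermonde estimates derived in this subsection into the abstract perturbation bound of Lemma~\ref{lem:gevp_dist}. That lemma already reduces the problem to a clean inequality,
\[
\min_{\lambda\in Z}|\widetilde\lambda-\lambda|\le\frac{\|E\|\,\bigl(\kappa^2(V)+\kappa(V)\bigr)}{\sigma_{\min}^2(V)\,c_{\min}},
\]
so the only remaining work is to substitute the explicit bounds on $\kappa(V)$ and $\sigma_{\min}^{-1}(V)$ and carry out the bookkeeping of powers of $2$, $\Delta$, and $\pi$.

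Concretely, I would recall the two ingredients established above: the condition-number estimate $\kappa(V)\le 2^{r}r/(\sqrt{\pi}\,\Delta^{r-1})$ from Eq.~\eqref{eq:kappa_v}, and the inverse smallest-singular-value estimate $\sigma_{\min}^{-1}(V)=\|V^{-1}\|\lesssim 2^{r}/(\sqrt{\pi}\,\Delta^{r-1})$ obtained from the Frobenius-norm bound on the explicit inverse via elementary symmetric polynomials; squaring the latter gives $\sigma_{\min}^{-2}(V)\lesssim 2^{2r}/(\pi\,\Delta^{2(r-1)})$. Inserting these, the $\kappa^2(V)$ contribution becomes $\|E\|\,c_{\min}^{-1}\cdot\frac{2^{2r}}{\pi\Delta^{2(r-1)}}\cdot\frac{2^{2r}r^2}{\pi\Delta^{2(r-1)}}=\frac{\|E\|}{c_{\min}}\cdot\frac{r^2 2^{4r}}{\pi^2\Delta^{4(r-1)}}$, while the $\kappa(V)$ contribution becomes $\|E\|\,c_{\min}^{-1}\cdot\frac{2^{2r}}{\pi\Delta^{2(r-1)}}\cdot\frac{2^{r}r}{\sqrt{\pi}\Delta^{r-1}}=\frac{\|E\|}{c_{\min}}\cdot\frac{r 2^{3r}}{\pi^{3/2}\Delta^{3(r-1)}}$. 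Summing the two reproduces exactly the claimed bound \eqref{sigma_min}.

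There is no genuine obstacle here; the statement is a composition of earlier results, and the main care is routine: keeping the $\Delta$-exponents and $\pi$-powers straight across the two summands, being explicit that all $\lesssim$-estimates hold in the large-$r$ regime where Stirling's formula for $\binom{2(r-1)}{r-1}$ is invoked (constants for small $r$ being harmless), and confirming that no cross term from $\kappa^2+\kappa$ was dropped. I would close by noting that the first term, scaling as $\Delta^{-4(r-1)}$, dominates, which is the source of the exponential-in-$r$ noise sensitivity flagged in the main text and explains why the complex-eigenvalue algorithm is efficient only for $r=\mathcal{O}(1)$.
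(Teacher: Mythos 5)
Your proposal is correct and takes exactly the same route as the paper: the paper's one-line proof also just substitutes the bound $\kappa(V)\le 2^{r}r/(\sqrt{\pi}\Delta^{r-1})$ from Eq.~\eqref{eq:kappa_v} and the estimate $\sigma_{\min}^{-1}(V)\lesssim 2^{r}/(\sqrt{\pi}\Delta^{r-1})$ into Lemma~\ref{lem:gevp_dist} and expands the two resulting summands. Your arithmetic (including the $\pi$-powers and $\Delta$-exponents in each term) and your caveat about the Stirling-regime nature of the inequality are both accurate.
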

\begin{proof}
The result manifests by plugging in the upper bound on $\sigma_{\min}^{-1}$ given by Eq.~\eqref{sigma_min} and the condition number of $V$ given by Eq.~\eqref{eq:kappa_v} into Lemma \ref{lem:gevp_dist}.
\end{proof}

\subsection{Sample complexity from random matrix theory}
This section solves the algorithm's sample complexity by resorting to random matrix theory. The main result we will apply is from Ref.~\cite{tropp2012user} as follows.
\begin{lemma}[Matrix Bernstein for Bounded Random Matrices {\cite[Theorem 1.6]{tropp2012user}}]\label{lemma:matrix_bernstein}
Let $Q_1, Q_2, \ldots, Q_m$ be i.i.d.~random matrices in $\mathbb{C}^{d_1 \times d_2}$ with $\mathbb{E}[Q_i] = 0$ and $\|Q_i\| \leq R$ almost surely for all $i$. Define 
\begin{equation}\label{eq:variance}
    \sigma^2 := \max\left\{\left\|\sum_{i=1}^m \mathbb{E}[Q_i Q_i^\dagger]\right\|, \left\|\sum_{i=1}^m \mathbb{E}[Q_i^\dagger Q_i]\right\|\right\}
\end{equation}
as the variance of the random matrices. Then for any $\gamma > 0$:
\begin{equation}\label{eq:rand_concentration}
\mathbb{P}\left(\left\|\sum_{i=1}^m Q_i\right\|  \geq \gamma\right) \leq (d_1 + d_2) \exp\left(-\frac{\gamma^2/2}{\sigma^2 + R\gamma/3}\right).
\end{equation}
\end{lemma}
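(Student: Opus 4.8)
\noindent The plan is to follow the matrix Laplace-transform method of Ahlswede--Winter, as sharpened by Tropp, first reducing the rectangular case to the Hermitian case via the self-adjoint dilation. \textbf{Step 1 (dilation).} For each $Q_i$ set $X_i := \begin{pmatrix} 0 & Q_i \\ Q_i^\dagger & 0 \end{pmatrix} \in \mathbb{C}^{d \times d}$ with $d := d_1 + d_2$. Then $\mathbb{E}[X_i] = 0$, $\lambda_{\max}(X_i) = \|X_i\| = \|Q_i\| \le R$ almost surely, and $X_i^2 = \operatorname{diag}(Q_i Q_i^\dagger,\, Q_i^\dagger Q_i)$, whence $\bigl\|\sum_i \mathbb{E}[X_i^2]\bigr\| = \max\bigl\{\|\sum_i \mathbb{E}[Q_i Q_i^\dagger]\|,\, \|\sum_i \mathbb{E}[Q_i^\dagger Q_i]\|\bigr\} = \sigma^2$. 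Since $\|\sum_i Q_i\| = \lambda_{\max}\bigl(\sum_i X_i\bigr)$, it suffices to bound $\mathbb{P}\bigl(\lambda_{\max}(\sum_i X_i) \ge \gamma\bigr)$ with ambient dimension $d$.

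\textbf{Step 2 (Laplace transform and CGF subadditivity).} For $\theta > 0$, Markov's inequality applied to $e^{\theta \lambda_{\max}(\sum_i X_i)}$, together with $e^{\theta \lambda_{\max}(M)} = \lambda_{\max}(e^{\theta M}) \le \operatorname{tr} e^{\theta M}$ for Hermitian $M$, gives
\[
\mathbb{P}\Bigl(\lambda_{\max}\bigl(\textstyle\sum_i X_i\bigr) \ge \gamma\Bigr) \le e^{-\theta \gamma}\, \mathbb{E}\operatorname{tr}\exp\Bigl(\theta \textstyle\sum_i X_i\Bigr).
\]
The central analytic ingredient is the subadditivity of matrix cumulant generating functions, $\mathbb{E}\operatorname{tr}\exp(\sum_i \theta X_i) \le \operatorname{tr}\exp(\sum_i \log \mathbb{E} e^{\theta X_i})$, valid by independence and the tower property. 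I then bound the matrix MGF of a single centered, bounded summand: from the scalar estimate $e^{\theta x} \le 1 + \theta x + g(\theta) x^2$ valid for $x \le R$, with $g(\theta) := (e^{\theta R} - \theta R - 1)/R^2$, the transfer rule yields $e^{\theta X} \preceq I + \theta X + g(\theta) X^2$; taking expectations gives $\mathbb{E} e^{\theta X} \preceq I + g(\theta)\mathbb{E}[X^2] \preceq \exp\bigl(g(\theta)\mathbb{E}[X^2]\bigr)$, and operator monotonicity of $\log$ gives $\log \mathbb{E} e^{\theta X} \preceq g(\theta)\mathbb{E}[X^2]$.

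\textbf{Step 3 (combine and optimize).} Summing the last bound over $i$ and using monotonicity of $M \mapsto \operatorname{tr}\exp(M)$ in the semidefinite order,
\[
\mathbb{E}\operatorname{tr}\exp\Bigl(\theta\textstyle\sum_i X_i\Bigr) \le \operatorname{tr}\exp\Bigl(g(\theta)\textstyle\sum_i \mathbb{E}[X_i^2]\Bigr) \le d\, e^{g(\theta)\sigma^2},
\]
since $\bigl\|g(\theta)\sum_i\mathbb{E}[X_i^2]\bigr\| \le g(\theta)\sigma^2$. Thus $\mathbb{P}(\lambda_{\max} \ge \gamma) \le d\exp\bigl(g(\theta)\sigma^2 - \theta\gamma\bigr)$ for every $\theta > 0$. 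Minimizing the exponent at $\theta = R^{-1}\log(1 + R\gamma/\sigma^2)$ yields $\mathbb{P} \le d\exp\bigl(-\tfrac{\sigma^2}{R^2} h(R\gamma/\sigma^2)\bigr)$ with $h(u) := (1+u)\log(1+u) - u$, and the elementary inequality $h(u) \ge \tfrac{u^2/2}{1 + u/3}$ for $u \ge 0$ converts this into the claimed bound $(d_1+d_2)\exp\bigl(-\tfrac{\gamma^2/2}{\sigma^2 + R\gamma/3}\bigr)$.

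\textbf{Main obstacle.} The delicate point is the subadditivity of matrix CGFs used in Step 2: unlike the scalar identity $\log\mathbb{E}\prod_i = \sum_i\log\mathbb{E}$, its matrix analogue has no elementary proof and rests on Lieb's concavity theorem (concavity of $M \mapsto \operatorname{tr}\exp(H + \log M)$ on positive-definite $M$), applied iteratively with conditional expectations over the independent summands. Everything else---the dilation bookkeeping, the scalar-to-matrix transfer of $e^{\theta x} \le 1 + \theta x + g(\theta)x^2$ (which uses only that $\lambda_{\max}(X)\le R$), and the final one-variable optimization together with the bound on $h$---is routine. I would also note that the i.i.d.\ hypothesis is for convenience only: mere independence suffices, which is all the CGF-subadditivity step actually requires.
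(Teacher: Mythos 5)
Your reconstruction is correct and is precisely the proof of the cited Theorem 1.6 in Tropp (2012): self-adjoint dilation to reduce the rectangular case to the Hermitian one, the matrix Laplace-transform/Chernoff bound, subadditivity of matrix cumulant generating functions via Lieb's concavity theorem, the scalar-to-matrix transfer of $e^{\theta x}\le 1+\theta x+g(\theta)x^2$, and the Bennett-to-Bernstein conversion $h(u)\ge \tfrac{u^2/2}{1+u/3}$. The paper does not prove this lemma at all — it imports it directly from Tropp — so there is no in-paper argument to contrast with; your proof simply matches the one in the cited reference.
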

The scheme for exploiting is to view the noises as random matrices. Specifically, for each entry of the Hankel matrix, we take $m$ independent samples of both real and imaginary parts. We formulate each sample for different entries into a matrix form and denote entries in the matrix as
\begin{itemize}
    \item $X_{j,k,1}, X_{j,k,2}, ..., X_{j,k,m}$ for the real part;
    \item $Y_{j,k,1}, Y_{j,k,2}, ..., Y_{j,k,m}$ for the imaginary part.
\end{itemize}
Note that the samples are taken from a Hadmard test circuit so that each $X_{j,k,i}$ and $Y_{j,k,i}$ is a random variable taking values in $\{-1, 1\}$. The empirical mean of the random matrices then serves as the unbiased estimator for the ideal Hankel matrix, such that for each entry, we denote it as
\begin{equation}\label{eq:signal_estimator}
\hat{f}(j+k) := \frac{1}{m}\sum_{i=1}^m X_{j,k,i} + i\frac{1}{m}\sum_{i=1}^m Y_{j,k,i}.
\end{equation}
The noise matrix is subsequently given by
\begin{equation}
\begin{aligned}
    E_{j,k} &= \hat{f}(j+k) - f(j+k) \\
&= \frac{1}{m}\sum_{i=1}^m X_{j,k,i} - \text{Re}(f(j+k)) + i\left(\frac{1}{m}\sum_{i=1}^m Y_{j,k,i} - \text{Im}(f(j+k))\right).
\end{aligned}
\end{equation}
The noise matrix can be seen as taking the empirical mean of random matrices $\{Z_i\}_{i=1}^m$ defined as 
\begin{equation}\label{eq:rand_mean}
    [Z_i]_{j,k}:=X_{j,k,i}-\mathbb{E}[X_{j,k,i}] +i \left( Y_{j,k,i} -  \mathbb{E}[Y_{j,k,i}]\right)=X_{j,k,i}' +iY_{j,k,i}',
\end{equation}
where $\frac{1}{m}\sum_{i=1}^m Z_i=E$, $\mathbb{E}[X_{j,k,i}]\equiv\text{Re}(f(j+k))$, $\mathbb{E}[Y_{j,k,i}]\equiv\text{Im}(f(j+k))$, $X_{j,k,i}':=X_{j,k,i}-\mathbb{E}[X_{j,k,i}]$ and $Y_{j,k,i}':=Y_{j,k,i}-\mathbb{E}[Y_{j,k,i}]$. Note that $\mathbb{E}[X_{j,k,i}']=\mathbb{E}[Y_{j,k,i}']=0$.
Because of $\mathbb{E}(E)=0$, it is now clear that we can take the set of $Z_i$ as the subject for Lemma \ref{lemma:matrix_bernstein}. Therefore, the two key components left for solving are an upper bound on the operator norm of $Z_i$ and its variance. 

To derive the upper bound of the operator norm of $Z_i$, we again use the Frobenius norm as the upper bound. For each entry in $Z_i$, we have
$$
\left|[Z_i]_{j,k}\right|=\left|X_{j,k,i}-\mathbb{E}[X_{j,k,i}] +i \left( Y_{j,k,i} -  \mathbb{E}[Y_{j,k,i}]\right)\right|\leq 2\sqrt{2},
$$
where we have used the triangle inequality and $X_{j,k,i},Y_{j,k,i}\in\{-1,1\}$. Therefore, we obtain
\begin{equation}\label{eq:rand_op_norm}
    \|Z_i\|\leq\|Z_i\|_F = \sqrt{\sum_{j=1}^r \sum_{k=1}^r |[Z_i]_{j,k}|^2} \leq 2\sqrt{2} r,~\forall i.
\end{equation}

The next step is to bound $\delta^2$. To this end, let us notice that the entries in the Hankel matrix with the same sum of indices are equal: If $j+k = j'+k'$, then $f(j+k) = f(j'+k')$. Hence, we may reuse the empirical results for repeated entries in the Hankel matrix. By denoting $t = j+k$, we can relabel the real and imaginary parts of the zero-centered random matrix in Eq.~\eqref{eq:rand_mean} as
$$
X'_{t,i},~Y'_{t,i} \text{ for } t \in [2r].
$$
Hence, the matrix $Z_i$ has entries $[Z_i]_{j,k} = X'_{j+k,i} + iY'_{j+k,i}$. Now, consider the target $Z_i Z_i^\dagger$, and its entry is computed as
\begin{equation}\label{eq:z_expand}
\begin{aligned}
    [Z_i Z_i^\dagger]_{j,l} &= \sum_{k=1}^{r} [Z_i]_{j,k} [Z_i^*]_{l,k}\\
    &=\sum_{k=1}^{r} (X'_{j+k,i} + iY'_{j+k,i})(X'_{l+k,i} - iY'_{l+k,i})\\
    &=\sum_{k=1}^{r} (X'_{j+k,i}X'_{l+k,i} + Y'_{j+k,i}Y'_{l+k,i}  + iY'_{j+k,i}X'_{l+k,i} - i X'_{j+k,i}Y'_{l+k,i}),
\end{aligned}
\end{equation}
where $x^*$ denotes complex conjugation of $x$; we substitute in the definition of entries of $Z_i$ in the second line and expand it in the last line. Then, for $\mathbb{E}[Z_i Z_i^\dagger]$, we consider separately the contribution from each term expanded in the last line of Eq.~\eqref{eq:z_expand}. Because of $\mathbb{E}[X'_{t,i}] = \mathbb{E}[Y'_{t,i}] = 0$, we have
\begin{itemize}
    \item $\mathbb{E}[X'_{t,i}X'_{t',i}] = \mathbb{E}[X'_{t,i}]\mathbb{E}[X'_{t',i}] = 0$, $\forall t \neq t'$;
    \item $\mathbb{E}[Y'_{t,i}Y'_{t',i}]=\mathbb{E}[Y'_{t,i}]\mathbb{E}[Y'_{t',i}] = 0$, $\forall t \neq t'$;
    \item $\mathbb{E}[X'_{t,i}Y'_{t',i}]=\mathbb{E}[X'_{t,i}]\mathbb{E}[Y'_{t',i}] = 0$, $\forall t, t'$.
\end{itemize}
The factorization of the above expectation values results simply from the independence of probabilities. This leaves the only non-zero terms to be
$$
\mathbb{E}[(X'_{t,i})^{2}] = \mathbb{E}[(Y'_{t,i})^{2}] \leq1.
$$
The rationale for the final upper bound follows from $X_{t,i}, Y_{t,i}\in\{-1,1\}$, $X'_{t,i}=X_{t,i}-\mathbb{E}[X_{t,i}]~\text{and}~Y'_{t,i}=Y_{t,i}-\mathbb{E}[Y_{t,i}]$ so that $\mathbb{E}[(X'_{t,i})^{2}]=\mathbb{E}[(X_{t,i})^{2}]+\mathbb{E}^2[X_{t,i}]-2\mathbb{E}^2[X_{t,i}]=1-\mathbb{E}^2[X_{t,i}]\leq1$ and similarly for $\mathbb{E}[(Y'_{t,i})^{2}]$. We finally arrive at
$$
\mathbb{E}[Z_i Z_i^\dagger]_{j,l} \leq \sum_{k=1}^{r} (\delta_{j+k,l+k} + \delta_{j+k,l+k})=2\sum_{k=1}^{r} \delta_{j,l},
$$
suggesting the matrix is diagonal and each diagonal element is bounded by $2r$. As such, we can compute the variance given by Eq.~\eqref{eq:variance} as
$$
\left\|\sum_{i=1}^m \mathbb{E}[Z_i Z_i^\dagger]\right\| \leq 2rm.
$$
Similarly, we can show $\left\|\sum_{i=1}^m \mathbb{E}[Z_i^\dagger Z_i]\right\| \leq 2rm$. Eventually, we have 
\begin{equation}\label{eq:rand_variance}
    \sigma^2\leq 2rm.
\end{equation}

At this point, we obtain bounds on the operator norm for the noise matrix in the following lemma.
\begin{lemma}\label{lem:rand_op_norm}
Let $\varepsilon,\delta\in (0,1)$. Let $Z_1, Z_2, \ldots, Z_m$ be i.i.d.~random matrices in $\mathbb{C}^{r \times r}$ with element given by Eq.~\eqref{eq:rand_mean}. Then, the operator norm of the mean of the random matrices satisfies
$$
\mathbb{P}\left(\left\|\frac{1}{m}\sum_{i=1}^m Z_i\right\|  \geq \varepsilon\right) \leq \delta
$$
for the sample complexity to be
\begin{equation}\label{eq:rand_samp_complexity}
    m=\mathcal{O}\left(\frac{r}{\varepsilon^2}  \log\left(\frac{r}{\delta}\right)\right).
\end{equation}
\end{lemma}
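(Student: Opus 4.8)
The plan is to invoke the Matrix Bernstein inequality of Lemma~\ref{lemma:matrix_bernstein} directly, feeding in the two ingredients already assembled above. I would apply it to the centered i.i.d.\ matrices $Z_1,\dots,Z_m \in \mathbb{C}^{r\times r}$ of Eq.~\eqref{eq:rand_mean}, which satisfy $\mathbb{E}[Z_i]=0$, the almost-sure bound $\|Z_i\| \le R := 2\sqrt{2}\,r$ from Eq.~\eqref{eq:rand_op_norm}, and the variance bound $\sigma^2 \le 2rm$ from Eq.~\eqref{eq:rand_variance}. Here $d_1=d_2=r$, so the dimensional prefactor in Eq.~\eqref{eq:rand_concentration} equals $d_1+d_2 = 2r$.

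To control $\left\|\tfrac1m\sum_{i=1}^m Z_i\right\|$, I would take the deviation parameter $\gamma = m\varepsilon$ in Eq.~\eqref{eq:rand_concentration}, since the event $\left\|\tfrac1m\sum_i Z_i\right\| \ge \varepsilon$ coincides with $\left\|\sum_i Z_i\right\| \ge m\varepsilon$. Substituting the bounds on $R$ and $\sigma^2$ gives
\begin{equation*}
\mathbb{P}\!\left(\left\|\tfrac1m\sum_{i=1}^m Z_i\right\| \ge \varepsilon\right)
\le 2r\exp\!\left(-\frac{(m\varepsilon)^2/2}{2rm + (2\sqrt{2}\,r)(m\varepsilon)/3}\right)
= 2r\exp\!\left(-\frac{m\varepsilon^2/2}{2r + 2\sqrt{2}\,r\varepsilon/3}\right).
\end{equation*}

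The only point needing a light touch is the denominator, which mixes the variance term $2r$ with the boundedness term $2\sqrt{2}\,r\varepsilon/3$. Because $\varepsilon\in(0,1)$, the latter is itself $\mathcal{O}(r)$, so the whole denominator is at most $(2 + 2\sqrt{2}/3)\,r = \mathcal{O}(r)$; the exponent is therefore $\Omega(m\varepsilon^2/r)$ and we are effectively in the variance-dominated (sub-Gaussian) regime. Hence there is an absolute constant $c>0$ with
\begin{equation*}
\mathbb{P}\!\left(\left\|\tfrac1m\sum_{i=1}^m Z_i\right\| \ge \varepsilon\right) \le 2r\,\exp\!\left(-\frac{c\,m\varepsilon^2}{r}\right).
\end{equation*}

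Finally, to force the right-hand side below $\delta$, I would solve $2r\exp(-c\,m\varepsilon^2/r)\le\delta$ for $m$, which rearranges to $m \ge \frac{r}{c\varepsilon^2}\log\frac{2r}{\delta}$; absorbing constants into the $\mathcal{O}$-notation yields $m = \mathcal{O}\!\left(\frac{r}{\varepsilon^2}\log\frac{r}{\delta}\right)$, as claimed. I do not anticipate a genuine obstacle here: the substantive work---the Frobenius-norm bound on $\|Z_i\|$ and the diagonalization of $\mathbb{E}[Z_iZ_i^\dagger]$---has already been carried out in the preceding paragraphs, so what remains is a routine rearrangement whose only subtlety is the harmless bookkeeping of the two terms in the Bernstein denominator together with the use of $\varepsilon<1$ to discard the boundedness contribution.
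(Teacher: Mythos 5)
Your proposal is correct and follows essentially the same route as the paper: apply the matrix Bernstein inequality (Lemma~\ref{lemma:matrix_bernstein}) with $\gamma=m\varepsilon$, $R=2\sqrt{2}r$, $\sigma^2\le 2rm$, and $d_1+d_2=2r$, then observe that the denominator is $\mathcal{O}(rm)$ and solve for $m$. The only (minor) improvement is that you explicitly justify why the Bernstein-term contribution $R\gamma/3$ is also $\mathcal{O}(rm)$ by invoking $\varepsilon<1$, whereas the paper simply asserts that the variance term dominates.
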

\begin{proof}
We remark that Lemma \ref{lemma:matrix_bernstein} characterizes the behavior of the sum of random matrices. However, we are interested in the behavior of their mean value. Because we aim to upper bound the operator norm of $E=\frac{1}{m}\sum_{i=1}^m Z_i$ by $\varepsilon$, this is equivalent to bound the sum of $Z_i$ by $m\varepsilon$ using ~Eq.~\eqref{eq:rand_op_norm}:
$$
\mathbb{P}\left(\left\|\sum_{i=1}^m Z_i\right\|  \geq m\varepsilon\right) \leq 2r \exp\left(-\frac{(m\varepsilon)^2/2}{2rm + 2\sqrt{2}r(m\varepsilon)/3}\right),
$$
where we have taken $d_1=d_2=r$, $\sigma^2=2rm$ from Eq.~\eqref{eq:rand_variance} and $R=2\sqrt{2}r$ from Eq.~\eqref{eq:rand_op_norm}. 

Equivalently, we have
$$
\mathbb{P}\left(\left\| E\right\|  \geq \varepsilon\right) \leq 2r \exp\left(-\frac{(m\varepsilon)^2/2}{2rm + 2\sqrt{2}r(m\varepsilon)/3}\right)\leq \delta.
$$
The $2rm$ term dominates the denominator in the exponential function, so we will take the denominator to be $\mathcal{O}(rm)$.
To fulfill the inequality, we obtain the sample complexity as given by Eq.~\eqref{eq:rand_samp_complexity}.
\end{proof}

Finally, we are in place for deriving the sample complexity of the GEVP problem given by Definition \ref{def:gevp}.
\begin{lemma}\label{lem:final_sample_complexity}
Let $\varepsilon_1,\delta_1\in (0,1)$. Let $H_0$ and $H_1$ be the Hankel matrices in Definition \ref{def:gevp} with each element a signal of the form given by Eq.~\eqref{eq:signal}. Let $\widetilde{H}_0$ and $\widetilde{H}_1$ be the noisy version of $H_0$ and $H_1$ such that each element (real and imaginary parts) is estimated through the empirical mean of $m$ samples of Bernoulli variables as shown by Eq.~\eqref{eq:signal_estimator}, provided the estimator to be unbiased. Then, to let the generalized eigenvalues solved by the noisy (denoted as $\widetilde{\lambda}$) and ideal Hankel matrices obey
$$
\min _{\lambda \in Z}|\widetilde{\lambda}-\lambda|\leq \varepsilon_1
$$
with probability at least $1-\delta_1$, the sample complexity is then given by
\begin{equation}\label{eq:sample_complexity_general}
    m=\mathcal{O}\left(\frac{r^5 2^{8r}}{ (\Delta^{4r}\varepsilon_1 c_{\min})^2} \log\left(\frac{r}{\delta_1}\right)\right).
\end{equation}
\end{lemma}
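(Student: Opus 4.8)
\emph{Proof idea.} The plan is to chain the two quantitative facts already established. Lemma~\ref{lem:sample_complexity} converts an operator-norm bound $\|E\|$ on the Hankel noise into a bound on the generalized-eigenvalue error, while Lemma~\ref{lem:rand_op_norm} converts a target operator-norm accuracy into a sample count through matrix Bernstein (the stated unbiasedness of the Hadamard-test estimator guarantees $\mathbb{E}[E]=0$, so the latter applies). First I would read off from Lemma~\ref{lem:sample_complexity} how small $\|E\|$ must be in order for the eigenvalue error to fall below $\varepsilon_1$: it suffices that
\[
\|E\|\ \le\ \varepsilon\ :=\ \frac{\varepsilon_1\, c_{\min}}{\dfrac{r^{2}2^{4r}}{\pi^{2}\Delta^{4(r-1)}}+\dfrac{r\, 2^{3r}}{\pi^{3/2}\Delta^{3(r-1)}}}.
\]

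Next I would collapse the denominator to a single monomial in $r$ and $\Delta$. Because every signal value lies in the unit disk, the minimal gap satisfies $\Delta\le 2$, so $\Delta^{-4(r-1)}=\Delta^{4}\Delta^{-4r}\le 16\,\Delta^{-4r}$ and, for the second summand, $\Delta^{-3(r-1)}=\Delta^{r+3}\Delta^{-4r}\le 2^{r+3}\Delta^{-4r}$; together with $r\le r^{2}$ and $2^{3r}\cdot 2^{r}=2^{4r}$ this shows the bracket is $\mathcal{O}\!\left(r^{2}2^{4r}\Delta^{-4r}\right)$, whence it is enough to take $\varepsilon=\Theta\!\left(\varepsilon_1 c_{\min}\Delta^{4r}/(r^{2}2^{4r})\right)$. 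Substituting this $\varepsilon$ into Lemma~\ref{lem:rand_op_norm}, applied to the common noise matrix $E$ (if $E_0,E_1$ are handled separately a union bound changes $\delta_1$ only by a factor $2$, absorbed in the $\mathcal{O}$), gives $\|E\|\le\varepsilon$ with probability at least $1-\delta_1$ as soon as
\[
m\ =\ \mathcal{O}\!\left(\frac{r}{\varepsilon^{2}}\log\frac{r}{\delta_1}\right)\ =\ \mathcal{O}\!\left(\frac{r^{5}2^{8r}}{(\Delta^{4r}\varepsilon_1 c_{\min})^{2}}\log\frac{r}{\delta_1}\right),
\]
which is precisely Eq.~\eqref{eq:sample_complexity_general}.

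The argument is essentially bookkeeping, and I do not anticipate a deep obstacle; the steps that demand care are (i) the gap arithmetic, i.e.\ cleanly absorbing the mismatch between $\Delta^{4(r-1)}$ and $\Delta^{4r}$ and confirming that one term of the bracket dominates up to constants, and (ii) consistency of regimes: Lemma~\ref{lem:sample_complexity} rests on a first-order perturbation expansion and Lemma~\ref{lem:rand_op_norm} on a Bernstein bound in which the variance term dominates, so one should check that the chosen $\varepsilon$ lies below the implicit smallness thresholds of both, which holds for the asymptotic claim. A final, minor subtlety is that $H_0$ and $H_1$ are estimated from the same $2r$ signal values $\widehat g(0),\dots,\widehat g(2r-1)$, so $E_0$ and $E_1$ are correlated; this is harmless because only a high-probability operator-norm bound on each is needed, and that is exactly what Lemma~\ref{lem:rand_op_norm} delivers.
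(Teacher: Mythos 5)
Your proposal matches the paper's own proof essentially step for step: invoke Lemma~\ref{lem:sample_complexity} to translate a target eigenvalue accuracy $\varepsilon_1$ into a required bound $\|E\|\le\varepsilon_1 c_{\min}/F$, collapse $F$ to $\mathcal{O}(r^2 2^{4r}\Delta^{-4r})$, and feed that into the matrix-Bernstein bound of Lemma~\ref{lem:rand_op_norm} to read off $m$. The paper simply replaces $r-1$ by $r$ at the outset and assumes $\Delta<1$, while you keep the exponents intact and absorb the mismatch via $\Delta\le 2$; both routes give the same asymptotic expression, and your added remarks on the correlated $E_0,E_1$ and the perturbative/concentration regimes are sound caveats consistent with the paper's (implicit) treatment.
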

\begin{proof}
To accomplish the accuracy $\varepsilon_1$, it requires the operator norm of the noise matrix to abide by
\begin{equation}\label{eq:noise_prop}
    \|E\|\leq\frac{c_{\min} \varepsilon_1}{F}
\end{equation}
with $F=\frac{r^2 2^{4r}}{\pi^2\Delta^{4r}}+\frac{r 2^{3r}}{\pi^{3/2}\Delta^{3r}}$, as dictated by Lemma \ref{lem:sample_complexity}. 
All $r-1$ is taken to be $r$ for simplicity.
Next, taking this operator norm and $\delta=\delta_1$ into Lemma \ref{lem:rand_op_norm}, we have the sample complexity to be
$$\mathcal{O}\left(\frac{r F^2}{c_{\min }^2 \varepsilon_1^2} \log \frac{r}{\delta_1}\right).$$
Finally, we set $F=\mathcal{O}\left((2/\Delta)^{4r})r^2\right)$ for $\Delta<1$.
Note also that there are $\mathcal{O}(r)$ different signals.
This concludes the proof.
\end{proof}

\subsection{Complexity for solving the QESE problem}
In this section, we address the sample, query, and total complexity of our algorithm for solving the QESE problem. 

Specifically, we consider two types of (ideal) signals as shown by Eq.~\eqref{eq:signal}. The first type is that the signal is generated by monomials of the matrix $A$ as shown in Eq.~\eqref{eq:q_signal}. Beyond this, we also consider signals that are generated by other matrix functions, as Eq.~\eqref{eq:q_signal_f}. Particularly, we consider the matrix exponential that is of practical importance, as will later be discussed in Sec.~\ref{sec:app}.

Before diving into the details of the deductions, we first summarize the steps of our algorithm for solving the QESE problem for the second type of signals.
\begin{enumerate}
    \item Prepare the initial density matrix $\rho_I$ or its purification $\ket{{\rho_I}}$ if the purified quantum query access is available.
    \item Construct the Hadamard test circuit shown by Eq.~\eqref{eq:circ_hadamard_test} for the real part of the signal with $W=I$. For the imaginary part, change to $W=S^\dagger$.
    \item When $\ket{\rho_I}$ is accessible or $\rho_I$ is pure, perform QAE for estimating the real and imaginary parts of the signal through Eq.~\eqref{eq:ae_real} and \eqref{eq:ae_img} with sufficient accuracy. Otherwise, proceed by standard Hadamard test procedures with estimators of the real and imaginary parts of the signal given by Eq.~\eqref{eq:hadamard_estimator1} or \eqref{eq:hadamard_estimator2} depending on the matrix function applied.
    \item Repeat for each signal in the Hankel matrix the above three steps by constructing block encodings of the corresponding matrix function (with the QEVT algorithm if necessary).
    \item Collecting the two noisy Hankel matrices $\widetilde{H}_0$ and $\widetilde{H}_1$. Solve for the eigenvalues of the GEVP given by Eq.~\eqref{eq:noise_general} using the two Hankel matrices with the MP method.
\end{enumerate}
For the matrix monomial function, the only difference is that we do not need to invoke the QEVT algorithm, but only by taking products of the BE. Given $U_A$ a ($\alpha_A,m,0$)-BE of $A$, we may change the $b$ register of the Hadamard test circuit used above to $m$ qubits and remove the action of $U_s$. Besides, the final targeted state of register $b$ becomes $\ket{0^m}$, for which the QAE algorithm should change accordingly.

\subsubsection{Matrices with general eigenvalues}\label{sec:general_solver}
We first consider non-normal matrices with general eigenvalues, either real or complex. The matrix function we are concerned with is the monomial functions as shown in Eq.~\eqref{eq:q_signal}.

The sample, query, and total complexities of our algorithm are given as follows, without purified quantum query access or $\rho_I$ is mixed.
\begin{theorem}\label{thm:general_without}
Let $A,\rho_I\in\mathbb{C}^{N\times N}$ be two matrices that satisfy the conditions in Assumption \ref{assm:sparsity} such that $\rho_I$ is $r$-sparse when expanded in the eigenbasis of $A$. Let $\epsilon,\delta\in(0,1)$. Denote the set of $r$ eigenvalues as $Z:=\{\lambda_i\}_{i=1}^r$. Then, given access to the ($\alpha_A,m,0$)-block encoding $U_A$ of $A$, there exists a quantum algorithm that outputs an estimation of these eigenvalues $\{\widetilde\lambda_i\}_{i=1}^r$ such that
$$\min _{\lambda \in Z}|\widetilde{\lambda}-\lambda|\leq \epsilon$$
with success probability no less than $1-\delta$ at the cost
\begin{itemize}
    \item A maximal $\mathcal{O}\left(r\right)$ queries to controlled-$U_A$ and its inverse in one coherent run of the algorithm.
    \item A sample complexity of $\mathcal{O}\left(\frac{r^5  \alpha_A^{4r}2^{8r}}{ ((\Delta')^{4r}\epsilon c_{\min})^2} \log\left(\frac{2r}{\delta}\right)\right)$ for the preparation of $\rho_I$, where 
    \begin{equation}\label{eq:gap_normalized}
        \Delta' := \min_{j \neq k}\frac{1}{\alpha_A} |\lambda_j - \lambda_k|,\lambda\in Z
    \end{equation}
    is the normalized gap and $c_{\min}$ is given by Eq.~\eqref{eq:c_min} in Assumption \ref{assm:sparsity}.
    \item A total number of $\mathcal{O}\left(\frac{r^6  \alpha_A^{4r}2^{8r}}{ ((\Delta')^{4r}\epsilon c_{\min})^2} \log\left(\frac{2r}{\delta}\right)\right)$ queries to controlled-$U_A$ and its inverse.
\end{itemize}
\end{theorem}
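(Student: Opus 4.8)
\emph{Proof plan.} The plan is to specialize the signal-plus-matrix-pencil pipeline of the previous sections to the monomial signal $g(t)=\tr{\rho_I A^t}=\sum_{i=1}^r c_i\lambda_i^t$ of Eq.~\eqref{eq:q_signal}, for which the frequencies in Eq.~\eqref{eq:signal} are $z_i=\lambda_i$, so that the matrix-pencil generalized eigenvalues are already the sought quantities (up to rescaling by $\alpha_A$). First, for each $t\in\{1,\dots,2r-1\}$ I would build $U_{A^t}$, an $(\alpha_A^t,mt,0)$-block encoding of $A^t$, by Lemma~\ref{lemma:block_prod}; this costs $\mathcal{O}(t)=\mathcal{O}(r)$ queries to controlled-$U_A$ and its inverse, which gives the per-coherent-run bound. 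Running the Hadamard test of Fig.~\ref{fig:main}(a) with this controlled unitary and $W\in\{I,S^\dagger\}$ produces $\{\pm1\}$-valued outcomes that, by Eq.~\eqref{eq:power_unbiased}, are unbiased estimators of $\mathrm{Re}(g(t)/\alpha_A^t)$ and $\mathrm{Im}(g(t)/\alpha_A^t)$; averaging $m$ shots and multiplying back $\alpha_A^t$ yields $\hat g(t)$. One copy of $\rho_I$ is consumed per shot, so the total number of preparations of $\rho_I$ equals (number of distinct signal values)$\,\times\,m=\mathcal{O}(r)\cdot m$, and the total query count is that number times the $\mathcal{O}(r)$ queries per shot.

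Second, I would assemble the noisy Hankel matrices $\widetilde H_0=(\hat g(j+k))_{j,k=0}^{r-1}$ and $\widetilde H_1=(\hat g(j+k+1))_{j,k=0}^{r-1}$ and control $\norm{E_0}=\norm{\widetilde H_0-H_0}$ and $\norm{E_1}$. Writing $E_0$ as the empirical mean of i.i.d.\ zero-mean random matrices whose entries are bounded by $\mathcal{O}(\alpha_A^{j+k})$ times a $\{\pm1\}$-type variable, the matrix-Bernstein estimate of Lemma~\ref{lem:rand_op_norm} (an application of Lemma~\ref{lemma:matrix_bernstein}, with the $\alpha_A^{\mathcal{O}(r)}$ scaling carried through the bound $R$ and variance $\sigma^2$ of Eq.~\eqref{eq:variance}) gives $\norm{E_0},\norm{E_1}\le\varepsilon$ with probability $\ge1-\delta/(2r)$ each once $m=\mathcal{O}(r\varepsilon^{-2}\log(r/\delta))$ shots per value are taken; a union bound over the $\mathcal{O}(r)$ values yields overall success $\ge1-\delta$ and the $\log(2r/\delta)$ factor. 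Feeding $\norm{E_0}=\norm{E_1}=\varepsilon$ into the matrix-pencil perturbation bound of Lemma~\ref{lem:sample_complexity} — Bauer--Fike~\cite{bauer1960norms} applied to $G=H_0^{-1}H_1=(V^{\mathrm T})^{-1}ZV^{\mathrm T}$, combined with $\sigma_{\min}^{-1}(H_0)\le\sigma_{\min}^{-2}(V)\,c_{\min}^{-1}$ and the explicit elementary-symmetric-polynomial bounds on $\sigma_{\min}^{-1}(V)$ and $\kappa(V)$ from Eq.~\eqref{eq:kappa_v} — bounds the generalized-eigenvalue error by $\tfrac{\varepsilon}{c_{\min}}\big(\tfrac{r^2 2^{4r}}{\pi^2(\Delta')^{4(r-1)}}+\tfrac{r 2^{3r}}{\pi^{3/2}(\Delta')^{3(r-1)}}\big)$, where $\Delta'$ is the normalized gap of Eq.~\eqref{eq:gap_normalized}.

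Finally, choosing $\varepsilon$ so that this is at most $\epsilon$ and substituting back into $m$ gives the claimed sample complexity $\mathcal{O}\!\big(\tfrac{r^5\alpha_A^{4r}2^{8r}}{((\Delta')^{4r}\epsilon c_{\min})^2}\log\tfrac{2r}{\delta}\big)$ and, after the $\mathcal{O}(r)$ queries-per-shot factor, the total query bound; this is essentially Lemma~\ref{lem:final_sample_complexity} specialized to $z_i=\lambda_i$ and with the accuracy rescaled from $z$-space to $\lambda$-space through the block-encoding normalization. I expect the main obstacle to be precisely the Vandermonde conditioning in the second step: $\kappa(V)$ and $\sigma_{\min}^{-1}(V)$ scale like $(\Delta')^{-\Theta(r)}$, so the noise amplification forces the exponential-in-$r$ sample cost, and this is intrinsic rather than an artifact of the analysis (a matching exponential lower bound on the Vandermonde condition number is known~\cite{pan2016bad}); the accompanying bookkeeping challenge is tracking how the normalization $\alpha_A^t$ for $t$ up to $2r-1$ inflates the effective noise (equivalently the norms $\norm{V}$ and $\norm{Z}$), which is what produces the $\alpha_A^{4r}$ prefactor.
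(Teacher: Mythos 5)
Your proof plan follows the same route as the paper's: bound the per-coherent-run query cost by the maximal monomial degree $\mathcal{O}(r)$ (Lemma~\ref{lemma:block_prod}); estimate each Hankel entry by a Hadamard test with $\{\pm1\}$ outcomes that are unbiased for $g(t)/\alpha_A^t$ (Eq.~\eqref{eq:power_unbiased}); apply matrix Bernstein (Lemma~\ref{lem:rand_op_norm}) to control $\norm{E_0},\norm{E_1}$; push the noise through the Bauer--Fike / Vandermonde perturbation bound (Lemma~\ref{lem:sample_complexity}); and multiply per-run cost by sample count. Structurally this is the same argument that the paper routes through Lemma~\ref{lem:final_sample_complexity}.

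However, there is a gap in how you arrive at the $\alpha_A^{4r}$ prefactor. You invoke Lemma~\ref{lem:rand_op_norm} in its stated form, concluding $m=\mathcal{O}(r\varepsilon^{-2}\log(r/\delta))$, and then feed $\norm{E_0}=\norm{E_1}=\varepsilon$ into Lemma~\ref{lem:sample_complexity} with the normalized gap $\Delta'$; solving for $\varepsilon$ to make the eigenvalue error $\le\epsilon$ and back-substituting gives $m\sim\tfrac{r^5 2^{8r}}{((\Delta')^{4r}\epsilon c_{\min})^2}\log\tfrac{r}{\delta}$, with \emph{no} $\alpha_A$ factor. That is not the claimed bound. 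The $\alpha_A^{4r}$ arises because the Hankel entries are $\hat g(j+k)=\alpha_A^{j+k}\times(\text{Hadamard estimate of }g(j+k)/\alpha_A^{j+k})$, so the per-entry noise of $E_0,E_1$ is inflated by up to $\alpha_A^{2r-1}$ relative to the $\{\pm1\}$-bounded regime that Lemma~\ref{lem:rand_op_norm} analyzes. Concretely, the paper handles this by feeding $\varepsilon_1=\epsilon\,\alpha_A^{-2r}$ into Lemma~\ref{lem:final_sample_complexity} (there is a sign typo in the paper, but that is the intent), which squares into the $\alpha_A^{4r}$ in $m$. You gesture at exactly this issue in your closing sentence as a ``bookkeeping challenge'' (and your parenthetical that the scaling should be ``carried through $R$ and $\sigma^2$'' points the right way), but you never actually carry it through: your displayed estimates therefore stop short of the theorem's bound, and the step you deferred is precisely the one non-trivial ingredient the paper's proof supplies. (Also, the blow-up you need to track lives in the noise matrices $E_0,E_1$, not in $\norm{V}$ or $\norm{Z}$ as your parenthetical suggests; in the Lemma~\ref{lem:sample_complexity} analysis $\norm{V}\le r$ and $\norm{Z}\le1$ are used with $|z_j|\le1$.)
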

\begin{proof}
We first note that the largest degree of monomial functions is $\mathcal{O}(r)$ as dictated by the structure of the Hankel matrices, which determines the query complexity of $U_A$ for constructing the monomial functions at one coherent run. 

The major cost for one coherent run is from the Hadamard test protocol. To this end, we resort to the result given by Eq.~\eqref{eq:sample_complexity_general} of Lemma \ref{lem:final_sample_complexity}. In the light of this result, we take $\varepsilon_1$ in Eq.~\eqref{eq:sample_complexity_general} to be $\varepsilon_1=\epsilon/(\alpha_A)^{-2r}$. This is because we need to multiply back the normalization factor in the monomial function as given by Eq.~\eqref{eq:q_signal} as discussed in Sec.~\ref{sec:q_alg}. Hence, the error is enlarged by a factor at most $(\alpha_A)^{-2r}$ for the highest degree monomial. 
This gives the sample complexity to be $\mathcal{O}\left(\frac{r^5  \alpha_A^{4r}2^{8r}}{ ((\Delta')^{4r}\epsilon c_{\min})^2} \log\left(\frac{2r}{\delta}\right)\right)$ due to $\Delta'<1$.

The total query complexity is obtained by multiplying the query complexity in one run by the sample complexity.

\end{proof}


When given access to the purified quantum query access of the initial density matrix or $\rho_I$ is pure, we have the complexity of the algorithm as follows.
\begin{theorem}\label{thm:general_with}
When given purified query access to $\rho_I$ defined in Definition \ref{def:purified} such that $U_I\ket{0}=\ket{\rho_I}$, the complexities becomes:
\begin{itemize}
    \item A maximal $\mathcal{O}\left(\frac{r^3\alpha_A^{2r}2^{4r}}{\Delta'^{4r}\epsilon c_{\min}} \log\left(\frac{r}{\delta}\right)\right)$ queries to controlled-$U_A$, $U_I$ and their inverse in one coherent run of the algorithm. 
    \item A sample complexity of $\mathcal{O}(r)$.
    \item A total number of $\mathcal{O}\left(\frac{r^4 2^{4r}\alpha_A^{2r}}{\Delta'^{4r}\epsilon c_{\min}} \log\left(\frac{r}{\delta}\right)\right)$ queries to controlled-$U_A$ and $U_I$ and its inverse.
\end{itemize}
\end{theorem}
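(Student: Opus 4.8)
The plan is to keep the entire architecture of Theorem~\ref{thm:general_without} intact and only swap out the $m$-shot Hadamard-test estimator of each Hankel entry for a single coherent amplitude estimation, which converts the $m^{-1/2}$ statistical rate into the Heisenberg $m^{-1}$ rate at the price of deeper circuits. Concretely, I would first fix the signal values to be read off: the $\mathcal{O}(r)$ entries $g(t)=\operatorname{Tr}(\rho_I A^t)$, $t=0,1,\dots,2r-1$, that populate $H_0$ and $H_1$. For each $t$ I would compose $t$ copies of $U_A$ through Lemma~\ref{lemma:block_prod} to obtain the $(\alpha_A^t,mt,0)$-block encoding $U_{A^t}$ of $A^t$ (costing $\mathcal{O}(t)=\mathcal{O}(r)$ controlled-$U_A$ calls), and then run the purified circuit of Eq.~\eqref{eq:circ_hadamard_test} in its monomial specialisation --- the ancilla register $b$ carries the $mt$ block-encoding qubits, the controlled preparation $U_s$ is dropped, and the post-selected pattern is $\ket{0^{mt}}$. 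Because tracing out the purifying register $I$ reproduces the ordinary Hadamard-test probabilities, one has $P(\ket{0}_c)=\tfrac12+\tfrac12\operatorname{Re}\operatorname{Tr}(\rho_I(A/\alpha_A)^t)$, and inserting $S^\dagger$ yields $P'(\ket{0}_c)=\tfrac12+\tfrac12\operatorname{Im}\operatorname{Tr}(\rho_I(A/\alpha_A)^t)$. Invoking Lemma~\ref{lemma:ae}, QAE then returns $\operatorname{Re}\operatorname{Tr}(\rho_I(A/\alpha_A)^t)$ and its imaginary part to additive precision $\varepsilon_0$ with confidence $1-\delta'$ using $\widetilde{\mathcal{O}}(\varepsilon_0^{-1}\log\delta'^{-1})$ coherent invocations; since each invocation uses $\mathcal{O}(r)$ controlled-$U_A$ queries and $\mathcal{O}(1)$ $U_I$ queries, one coherent run costs $\widetilde{\mathcal{O}}(r/\varepsilon_0)$ controlled-$U_A$ and $\widetilde{\mathcal{O}}(1/\varepsilon_0)$ $U_I$ queries, and one needs only $\mathcal{O}(r)$ independent preparations of $\rho_I$ (two per value of $t$), which will be the $\mathcal{O}(r)$ sample complexity.

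The next step would be the error budget. Multiplying each estimate by $\alpha_A^t$ to recover $\widehat g(t)$ inflates its error by at most $\alpha_A^{2r}$, so the perturbed Hankel matrices obey $\norm{E_0},\norm{E_1}\le\norm{E}=\widetilde{\mathcal{O}}(\alpha_A^{2r}\varepsilon_0)$ up to $\mathrm{poly}(r)$ factors. Here I would use that the poles of the normalised signal $\sum_i c_i(\lambda_i/\alpha_A)^t$ are $z_i=\lambda_i/\alpha_A$, so $\abs{z_i}\le\norm{A}/\alpha_A\le1$ with minimum separation the normalised gap $\Delta'$ of Eq.~\eqref{eq:gap_normalized}; the hypotheses of Lemma~\ref{lem:sample_complexity} therefore hold, and combining Lemma~\ref{lem:gevp_dist} with the condition-number bound $\kappa(V)=\mathcal{O}(r\,2^r/\Delta'^{\,r-1})$ of Eq.~\eqref{eq:kappa_v} shows that the recovered generalized eigenvalues carry error at most $\norm{E}\,c_{\min}^{-1}F$ with $F=\mathcal{O}(r^2 2^{4r}/\Delta'^{4r})$. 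Rescaling the recovered $z_i$ by $\alpha_A$ and demanding accuracy $\epsilon$ on the $\lambda_i$ then forces $\varepsilon_0=\widetilde{\mathcal{O}}\!\big(c_{\min}\,\epsilon\,\Delta'^{4r}\,r^{-2}2^{-4r}\alpha_A^{-2r}\big)$, the extra factor $\alpha_A$ from the final rescaling being absorbed into the $\alpha_A^{2r}$.

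Finally I would assemble the counts: substituting $\varepsilon_0$ into $\widetilde{\mathcal{O}}(r/\varepsilon_0)$ gives the claimed $\widetilde{\mathcal{O}}\!\big(r^3\alpha_A^{2r}2^{4r}\Delta'^{-4r}\epsilon^{-1}c_{\min}^{-1}\log(r/\delta)\big)$ controlled-$U_A$ (and $U_I$) queries in one coherent run, multiplying by the $\mathcal{O}(r)$ Hankel entries gives the total $\widetilde{\mathcal{O}}\!\big(r^4 2^{4r}\alpha_A^{2r}\Delta'^{-4r}\epsilon^{-1}c_{\min}^{-1}\log(r/\delta)\big)$, and choosing $\delta'=\delta/\mathcal{O}(r)$ with a union bound over the $\mathcal{O}(r)$ entries keeps the failure probability below $\delta$ and supplies the $\log(r/\delta)$ factors. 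I expect the hard part to be the $\alpha_A$-bookkeeping rather than any new idea: one must keep the matrix-pencil step acting on the \emph{normalised} signal $\sum_i c_i(\lambda_i/\alpha_A)^t$, so that the exponentially large $\kappa(V)$ in Eq.~\eqref{eq:kappa_v} is governed by the normalised gap $\Delta'$ and not by $\min_{i\neq j}\abs{\lambda_i-\lambda_j}$, while still correctly tracking the $\alpha_A^t\le\alpha_A^{2r}$ amplification incurred when the estimated amplitude is rescaled to the true signal; a secondary point is to confirm that the multiplicative error appearing inside the proof of Lemma~\ref{lemma:ae} converts to an additive $\varepsilon_0$ with no hidden $p^{-1/2}$ penalty, which it does since that lemma is already stated in additive form.
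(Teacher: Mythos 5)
Your proposal follows the paper's proof essentially step by step: both replace the $m$-shot Hadamard test with QAE via the purified circuit of Eq.~\eqref{eq:circ_hadamard_test}, both invoke Lemma~\ref{lem:sample_complexity} (built on Lemma~\ref{lem:gevp_dist} and the Vandermonde condition-number bound of Eq.~\eqref{eq:kappa_v}) to propagate the Hankel perturbation to the generalized eigenvalues, and both union-bound over the $\mathcal{O}(r)$ Hankel entries to get the $\log(r/\delta)$ factor. One bookkeeping remark worth flagging: you (correctly, and more carefully than the paper) multiply the QAE iteration count by the $\mathcal{O}(r)$ controlled-$U_A$ calls needed to build $U_{A^t}$ per invocation, which gives the per-run count $\widetilde{\mathcal{O}}(r/\varepsilon_0)$; if you also retain the Frobenius factor $\|E\|\lesssim r\,\alpha_A^{2r}\varepsilon_0$ that you bracket as ``up to $\mathrm{poly}(r)$'', the resulting $\varepsilon_0\propto r^{-3}$ pushes the per-run count to $r^4$, one factor of $r$ larger than stated. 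The paper's own proof keeps the Frobenius $r$ (so $\varepsilon_1\propto r^{-3}$) but treats the $\mathcal{O}(r)$ block-encoding product as an \emph{additive} rather than multiplicative overhead, and that is what recovers $r^3$; matching the theorem's exponent is what silently forced your $\varepsilon_0\propto r^{-2}$. Since this slack is inherited from the paper's own accounting and not a new idea or error on your part, the argument is the same as the paper's.
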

\begin{proof}
In cases $\ket{\rho_I}$ is accessible, we apply the QAE algorithm for estimating the real and imaginary parts of the signal through estimators given by Eq.~\eqref{eq:ae_real} and \eqref{eq:ae_img}. Note that when the matrix function is a monomial, the signal is normalized with $\alpha_A^k$, where $k$ is the degree of the monomial. Then, according to Lemma \ref{lemma:ae}, $\mathcal{O}\left(\frac{\alpha_A^{2r-1}}{\varepsilon_1}\log(1/\delta_1)\right)$ queries to c$-U_A$ and its inverse is required to guarantee an $\varepsilon_1$ accuracy approximation for each element in the Hankel matrix with probability no less than $1-\delta_1$. Note that another $\mathcal{O}(r)$ queries to $U_A$ and its inverse are required for implementing the product of BE (monomial function). To determine $\varepsilon_1$, using the Frobenius norm as an upper bound on the operator norm, we have the noise matrix of operator norm bounded by $\|E\|\leq r\varepsilon_1$. Then, we invoke Lemma \ref{lem:sample_complexity} and let
$$\frac{\|E\|}{c_{\min}}\left( \frac{r^2 2^{4r}}{\pi^2\Delta'^{4(r-1)}}+\frac{r 2^{3r}}{\pi^{3/2}\Delta'^{3(r-1)}} \right)\leq \frac{r\varepsilon_1}{c_{\min}}\left( \frac{r^2 2^{4r}}{\pi^2\Delta'^{4(r-1)}}+\frac{r 2^{3r}}{\pi^{3/2}\Delta'^{3(r-1)}} \right)\leq\epsilon.$$
This gives $\varepsilon_1=\mathcal{O}(\epsilon c_{\min}r^{-3}(\Delta'/2)^{4(r-1)})$. This determines the query complexity in one coherent run. Besides, we set $\delta_1=\delta/(2r-1)$ to guarantee the union over all $2r-1$ signals with the desirable success probability. 

For the sample complexity, $2r-1$ signals are accounted for in the Hankel matrix. The total complexity is given by multiplying the maximal query and sample complexity. Finally, we modify all $r-1$ to $r$ to keep the notation simple.
\end{proof}

We remark that with the purified density matrix, the total complexity is (approximately) quadratically better in all dependencies. This makes the purified model more powerful when a deeper quantum circuit is available. Besides, we find the algorithm reaches the Heisenberg-limited performance regarding dependence on the accuracy, i.e.~$\mathcal{O}(\epsilon^{-1})$, suggesting its optimality~\cite{giovannetti2004quantum,giovannetti2006quantum}.

\subsubsection{Matrices with real eigenvalues}\label{sec:real_solver}
We consider non-normal matrices with real eigenvalues. In such circumstances, the Chebyshev polynomial of the first kind provides the optimal performance for the approximation of the matrix functions. For the monomial function, the Chebyshev polynomial can approximate the monomial function $x^k$ with degree ${\mathcal{O}}(\sqrt{k\log(\epsilon^{-1})})$~\cite{sachdeva2014faster} of approximation error $\epsilon$. Yet, as the GEVP algorithm is very sensitive to errors in the Hankel matrix, the accuracy grows exponentially with $r$, which could cause a slowdown compared to taking products of BEs. To this end, we instead consider the matrix exponentials for the normalized matrix $A/\alpha_A$
\begin{equation}\label{eq:matrix_exp_r}
    f(x)=e^{-2\pi ixt}.
\end{equation}
Here, the evolution time $t$ takes values from $0,\ldots,2r-1$ for different signals in the Hankel matrix. As such, the signal is given by Eq.~\eqref{eq:q_signal_f}. We find the setting to be intriguing as it reduces the ideal signal to be 
\begin{equation}\label{eq:real_signal}
    f(t)=\sum_{j=1}^r c_j e^{-2\pi i\lambda_j t/\alpha_A},~\lambda_j\in\mathbb{R}
\end{equation}
which is the very problem solved by Prony's method as given by Eq.~\eqref{eq:signal}. This problem is also known as the super-resolution~\cite{donoho1992superresolution,candes2013super} or the sparse Fourier analysis problem~\cite{hassanieh2012nearly} in the signal processing community. The problem, by its own right, has several fascinating features. The most important one is that when using the GEVP solver for addressing the problem, one finds that the dependence on $r$ can be improved exponentially, making $r=\mathcal{O}(\rm{poly}(n))$ a feasible setting. Such a reduction can be understood from the (re-normalized) columns of the Vandermonde matrix becoming orthogonal as $r$ goes to infinity, resulting in the condition number of the Vandermonde matrix being well-behaved (scales polynomially with $r$). To this end, we follow the seminal work of Moitra \cite{moitra2015super} and get the following result.
\begin{lemma}\label{lem:real_gevp_error}
Let $\varepsilon_1,\delta_1\in (0,1)$. Let $H_0$ and $H_1$ be the Hankel matrices in Definition \ref{def:gevp} with each element a signal of the form given by Eq.~\eqref{eq:real_signal}. Let $\widetilde{H}_0$ and $\widetilde{H}_1$ be the noisy versions of $H_0$ and $H_1$ such that $\|\widetilde{H}_0-H_0\|=\|\widetilde{H}_1-H_1\|=\|E\|$. Then, the output of the generalized eigenvalue solver satisfies
\begin{equation}\label{eq:real_noise_prop}
    \min _{\lambda \in Z}|\widetilde{\lambda}-\lambda|\leq \frac{\|E\|(\kappa^4(V)+\kappa^3(V))}{rc_{\min}}
\end{equation}
such that 
\begin{equation}\label{eq:real_kappa_v}
    \kappa(V)^2 \leq \frac{r+1 / \Delta_w-1}{r-1 / \Delta_w-1}
\end{equation}
assuming $r>\frac{1}{\Delta_w}+1$, where $\Delta_w$ is the gap of the signal $\{\lambda_j\}$ under the wrap-around metric, defined as $d(\theta, \phi)=\min _{k \in \mathbb{Z}}|\theta-\phi+2 \pi k|$.
\end{lemma}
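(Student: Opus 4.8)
The plan is to specialize the error-propagation analysis of Lemma~\ref{lem:gevp_dist} to the Fourier-mode signal \eqref{eq:real_signal}, exploiting the fact that the nodes $z_j=e^{-2\pi i\lambda_j/\alpha_A}$ now lie on the unit circle, and then to invoke the super-resolution bounds of Moitra~\cite{moitra2015super} for the condition number $\kappa(V)$. I would begin exactly as in the proof of Lemma~\ref{lem:gevp_dist}: the noisy GEVP reduces to $\widetilde{G}\ket{\widetilde\nu}=\widetilde\lambda\ket{\widetilde\nu}$ with $\widetilde{G}=\widetilde{H}_0^{-1}\widetilde{H}_1$; first-order perturbation of $G=H_0^{-1}H_1=(V^{\rm T})^{-1}ZV^{\rm T}$ gives $\|\Delta G\|\lesssim\sigma_{\min}^{-1}(H_0)\|E\|(1+\kappa(V))$, where $\|G\|\le\kappa(V)$ because $\|Z\|=1$; and the Bauer--Fike theorem then yields $\min_{\lambda\in Z}|\widetilde\lambda-\lambda|\lesssim\kappa(V)\,\sigma_{\min}^{-1}(H_0)\,\|E\|(1+\kappa(V))$. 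The Vandermonde decomposition $H_0=V(\mathbf{z})CV(\mathbf{z})^{\rm T}$ gives $\sigma_{\min}^{-1}(H_0)\le\sigma_{\min}^{-2}(V)c_{\min}^{-1}$, so the whole bound reduces to controlling $\sigma_{\min}^{-2}(V)$ and $\kappa(V)$.

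The new ingredient, unavailable in the general case, is that $\lambda_j\in\mathbb{R}$ forces $|z_j|=1$, hence every entry of $V$ has modulus one and $\|V\|_F^2=r^2$. Since $\|V\|_F^2=\sum_{i=1}^r\sigma_i^2(V)\le r\,\sigma_{\max}^2(V)$, this gives $\sigma_{\max}^2(V)\ge r$ and therefore $\sigma_{\min}^{-2}(V)=\kappa^2(V)/\sigma_{\max}^2(V)\le\kappa^2(V)/r$. Substituting into the chain above,
\[
\min_{\lambda\in Z}|\widetilde\lambda-\lambda|\;\lesssim\;\frac{\|E\|}{c_{\min}}\bigl(\kappa(V)+\kappa^2(V)\bigr)\frac{\kappa^2(V)}{r}\;=\;\frac{\|E\|\,(\kappa^3(V)+\kappa^4(V))}{r\,c_{\min}},
\]
which is \eqref{eq:real_noise_prop}. (Distinctness of the $z_j$, guaranteed by $\Delta_w>0$, makes $V$ and hence $H_0$ invertible, so this reduction is legitimate.)

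For $\kappa(V)$ I would invoke \cite{moitra2015super}. The Gram matrix $VV^{*}$ has constant diagonal $r$, while its off-diagonal entries are the partial geometric sums $\sum_{k=0}^{r-1}(z_j\bar{z}_l)^k$, whose magnitudes are controlled by the wrap-around separations $d(\theta_j,\theta_l)\ge\Delta_w$. The extremal-function estimates of \cite{moitra2015super} show that once $r>1/\Delta_w+1$ the spectrum of $VV^{*}$ is confined to an interval of radius $1/\Delta_w$ about $r-1$, so that $\kappa^2(V)=\sigma_{\max}^2(V)/\sigma_{\min}^2(V)\le(r+1/\Delta_w-1)/(r-1/\Delta_w-1)$, which is \eqref{eq:real_kappa_v}. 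Combining this with the previous display proves the lemma.

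The main obstacle is this last step. A naive Gershgorin bound only gives $\sum_{l\ne j}|(VV^{*})_{jl}|\lesssim\Delta_w^{-1}\log\Delta_w^{-1}$, since $\bigl|\sum_{k=0}^{r-1}(z_j\bar{z}_l)^k\bigr|\le1/|\sin(\pi d(\theta_j,\theta_l))|$ and summing over the geometrically increasing separations produces a harmonic series; eliminating the spurious logarithm---and with it securing the sharp threshold $r>1/\Delta_w+1$---is precisely what the Beurling--Selberg extremal-function construction of \cite{moitra2015super} provides. Everything else is a routine transcription of Lemma~\ref{lem:gevp_dist} with the extra input $|z_j|=1$.
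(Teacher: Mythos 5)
Your proof is correct and follows essentially the same route as the paper: start from Lemma~\ref{lem:gevp_dist}, rewrite $\sigma_{\min}^{-2}(V)=\kappa^2(V)/\sigma_{\max}^2(V)$, lower-bound $\sigma_{\max}(V)\ge\sqrt{r}$, and invoke Moitra's bound for $\kappa(V)$. The only cosmetic difference is that the paper obtains $\sigma_{\max}(V)\ge\sqrt{r}$ by taking the norm of a single column of $V$, whereas you derive it via $\|V\|_F^2=r^2\le r\,\sigma_{\max}^2(V)$ — both are one-line consequences of $|z_j|=1$ and lead to the same conclusion.
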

\begin{proof}
We use the result shown in Lemma \ref{lem:gevp_dist}, such that
$\min _{\lambda \in Z}|\widetilde{\lambda}-\lambda| \leq \frac{\|E\|(\kappa^2(V)+\kappa(V))}{\sigma_{\min}^2(V)c_{\min}}.$ Under circumstances that the Vandermonde matrix consists of Fourier modes, a tight bound on the condition number is given by {\cite[Theorem 2.3]{moitra2015super}} as shown in Eq.~\eqref{eq:kappa_v}. Besides, it has that $\sigma_{\min }^{-1}(V)=\frac{\kappa(V)}{\sigma_{\max }(V)},$
where $\sigma_{\max }(V)$ is the maximal singular value of $V$ or the spectral norm. As the spectral norm of a matrix is no less than the norm of any of its columns, we can lower bound it as 
$$\sigma_{\max}\geq \|v\|_2=\sqrt{\sum_{j=1}^r\left|z_j^{k}\right|^2}=\sqrt{r}$$
for any $k$ with $v$ any column of $V$ with $\|\cdot\|_2$ the $L_2$ norm. This gives
$$\sigma_{\min }^{-1}(V)\leq\frac{\kappa(V)}{\sqrt{r}},$$
which determines the final result.
\end{proof}

We remark that $\kappa(V)$ remains reasonable when the denominator in Eq.~\eqref{eq:kappa_v} is not exponentially small. This could be the case, such as when $r$ is a larger polynomial of $n$ than the inverse of $\Delta_w$. This highlights the necessity of the separation in the signal, as discussed in {\cite[Sec.3]{moitra2015super}}.

The idea for solving the QESE problem consists of first estimating the normalized eigenvalue of $A/\alpha_A$ and then multiplying back the normalization factor $\alpha_A$ to obtain the final result.
\begin{theorem}[Real eigenvalues without purified quantum query access]\label{thm:real_without}
Suppose the eigen-spectrum of $A$ is real. Denote the Jordan condition number of $A$ as $\kappa_J$ and $d_{\max}$ the largest size of the Jordan block of $A$. Then, given access to the ($\alpha_A,m,0$)-block encoding $U_A$ of $A$ such that $\alpha_A\geq2\|A\|$, there exists a quantum algorithm that outputs an estimation of these eigenvalues $\{\widetilde\lambda_i\}_{i=1}^r$ that succumbs to
$$\min _{\lambda \in Z}|\widetilde{\lambda}-\lambda|\leq \epsilon$$
with success probability no less than $1-\delta$ at the cost
\begin{itemize}
    \item A maximal $\mathcal{O}\left(d^{3/2}\kappa_J^2\log\left(\frac{r\sqrt{d}\kappa_J\alpha_A\kappa(V)}{c_{\min}\epsilon}\right)\log\left(\frac{r\alpha_A\kappa(V)}{c_{\min}\epsilon}\right)\right)$ queries to controlled $U_A$, $U_s$ defined in Eq.~\eqref{eq:Us_controlled_prep} and their inverse in one coherent run of the algorithm, where $d=\mathcal{O}\left(r+\log\left(\frac{\kappa_J\kappa(V)\alpha_A}{c_{\min}\epsilon}\right)\right)$, where $\kappa(V)$ is given by \eqref{eq:real_kappa_v}.
    \item A sample complexity of $\mathcal{O}\left(\frac{\kappa^8(V)\alpha_A^{2}}{c_{\min}^{2}\epsilon^{2}}\log\left(\frac{r}{\delta}\right)\right)$.
    \item A total number of $\mathcal{O}\left(\frac{\kappa^8(V)d^{3/2}\kappa_J^2\alpha_A^{2}}{c_{\min}^{2}\epsilon^{2}}\log\left(\frac{r\sqrt{d}\kappa_J\alpha_A\kappa(V)}{c_{\min}\epsilon}\right)\log\left(\frac{r\alpha_A\kappa(V)}{c_{\min}\epsilon}\right)\log\left(\frac{r}{\delta}\right)\right)$ queries to controlled $U_A$, $U_s$ and its inverse.
\end{itemize}
\end{theorem}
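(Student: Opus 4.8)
The plan is to reduce the task to a super‑resolution instance and then invoke the machinery already developed. Rescaling so that $\alpha_A\ge 2\norm A$ places the spectrum of $A/\alpha_A$ inside $[-1/2,1/2]$, a compact set strictly interior to $[-1,1]$. I would take the ideal signal to be $g'(t)=\tr{\rho_I\, e^{-2\pi i A t/\alpha_A}}=\sum_{i=1}^r c_i z_i^{\,t}$ with $z_i=e^{-2\pi i\lambda_i/\alpha_A}$ on the unit circle; the off‑diagonal $c_{ij}$ terms drop out of the trace by biorthogonality, so this is exactly a signal of the form \eqref{eq:real_signal}. Hence the matrix‑pencil/GEVP machinery applies, Lemma~\ref{lem:real_gevp_error} controls its noise sensitivity, and the Moitra bound \eqref{eq:real_kappa_v} on $\kappa(V)$ holds under the assumed separation $r>1/\Delta_w+1$. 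Finally I recover $\widetilde\lambda_i=-\tfrac{\alpha_A}{2\pi}\arg(\widetilde z_i)$; since $|\lambda_i|\le\norm A\le\alpha_A/2$ keeps $-2\pi\lambda_i/\alpha_A$ in $[-\pi,\pi]$ (off the branch cut of $\arg$), this inversion is bi‑Lipschitz with constant $\mathcal O(\alpha_A)$, so $|\widetilde\lambda_i-\lambda_i|\lesssim\alpha_A|\widetilde z_i-z_i|$.

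\noindent\textbf{Generating the signal.} For each $t\in\{0,\dots,2r-1\}$ I would block‑encode $p(A/\alpha_A)/(2\alpha_p)$ using the Chebyshev QEVT of Lemma~\ref{lem:qevt_r_block}, where $p$ is the degree‑$(d-1)$ Chebyshev truncation of $x\mapsto e^{-2\pi i x t}$. The Jacobi–Anger expansion identifies the Chebyshev coefficients with Bessel functions, which decay super‑exponentially once the index exceeds $\mathcal O(t)$, so $d=\mathcal O\!\big(t+\log(1/\varepsilon)\big)=\mathcal O\!\big(r+\log(1/\varepsilon)\big)$ suffices for uniform error $\varepsilon$ on $[-1,1]$, and their $\ell^2$ mass — equal to $\norm{p(\cos)\sin}_{2,[-\pi,\pi]}$ by Parseval — is $\mathcal O(1)$. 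Because $\sigma(A/\alpha_A)$ lies a constant distance inside $[-1,1]$, degree‑$j$ Chebyshev polynomials of the second kind (and, for defective $A$, their first $d_{\max}-1$ derivatives, via the Jordan‑block formula \eqref{eq:jordan_form_transformation}) are $\mathcal O(1)$ on the spectrum; combined with $A/\alpha_A=P(\Lambda/\alpha_A)P^{-1}$ this gives $\alphaU=\mathcal O(\kappa_J)$ and $\alpha_p=\mathcal O(\kappa_J)$ (more generally $\mathrm{poly}(\kappa_J,d_{\max})$). Substituting $d$, $\alphaU$, $\alpha_p$ and $\norm{p(\cos)\sin}_{2,[-\pi,\pi]}=\mathcal O(1)$ into Lemma~\ref{lem:qevt_r_block} yields the per‑signal cost $\mathcal O\!\big(d^{3/2}\kappa_J^2\log(\cdot)\log(\cdot)\big)$ queries to controlled‑$U_A$, $U_s$ and their inverses, which is the claimed one‑coherent‑run bound. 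Running the purification‑free Hadamard‑test circuit \eqref{eq:circ_hadamard_test} with $W=I$ and $W=S^\dagger$ and multiplying the outcomes back by $2\alpha_p$ gives unbiased estimators of $\mathrm{Re}\,g'(t)$ and $\mathrm{Im}\,g'(t)$.

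\noindent\textbf{Error budget and complexities.} Write $\norm E\le\norm{E_{\mathrm{approx}}}+\norm{E_{\mathrm{samp}}}$. The Jordan bound gives $\norm{p(A/\alpha_A)-e^{-2\pi iAt/\alpha_A}}\le\kappa_J\varepsilon$, hence $|\tr{\rho_I(\cdots)}|\le\kappa_J\varepsilon$ by Hölder and $\norm{E_{\mathrm{approx}}}\le r\kappa_J\varepsilon$. For sampling, the matrix‑Bernstein estimate of Lemma~\ref{lem:rand_op_norm} shows $m=\mathcal O\!\big(\tfrac{r}{\varepsilon_s^2}\log(r/\delta)\big)$ shots per signal make the $\pm1$‑matrix average have norm $\le\varepsilon_s$, so $\norm{E_{\mathrm{samp}}}=\mathcal O(\alpha_p\varepsilon_s)$. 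Lemma~\ref{lem:real_gevp_error} gives $|\widetilde z-z|\lesssim\norm E\,\kappa^4(V)/(r c_{\min})$, which must be $\lesssim\epsilon/\alpha_A$ by the set‑up; balancing the two error sources forces $\varepsilon=\Theta\!\big(c_{\min}\epsilon/(\kappa_J\kappa^4(V)\alpha_A)\big)$ — whence $d=\mathcal O\!\big(r+\log\tfrac{\kappa_J\kappa(V)\alpha_A}{c_{\min}\epsilon}\big)$, the power $\kappa^4$ being absorbed by the logarithm — and fixes $\varepsilon_s$, yielding after a union bound over the $2r-1$ signals (set $\delta_1=\delta/(2r-1)$) the stated sample complexity $\mathcal O\!\big(\kappa^8(V)\alpha_A^2 c_{\min}^{-2}\epsilon^{-2}\log(r/\delta)\big)$. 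The total query count is the coherent‑run cost times the number of shots (and signals). As usual, $r$ (or an upper bound) is learned by estimating the rank of the enlarged Hankel matrix.

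\noindent\textbf{Main obstacle.} The crux is the joint control of (i) the Chebyshev truncation degree and $\ell^2$ coefficient mass of $e^{-2\pi i x t}$ via Jacobi–Anger/Bessel decay, and (ii) the non‑normality constants $\alphaU,\alpha_p$ through the Jordan form — where $\alpha_A\ge2\norm A$ is essential, since it keeps $\sigma(A/\alpha_A)$ bounded away from $\pm1$ so that degree‑$d$ second‑kind Chebyshev polynomials and their low‑order derivatives stay $\mathcal O(1)$ on the spectrum (this is precisely where $d_{\max}$ would enter for defective $A$). The secondary difficulty is the end‑to‑end error chain: the GEVP sensitivity in Lemma~\ref{lem:real_gevp_error} scales as $\mathrm{poly}(\kappa(V))$, which can be large, so both the QEVT accuracy $\varepsilon$ and the Hadamard‑test shot count are driven by $\kappa^4(V)/(r c_{\min})$ and then by the final $z_j\mapsto\lambda_j$ rescaling by $\alpha_A$; reconciling all of these polynomial factors consistently (and checking the $\arg$ inversion stays off the branch cut) is the bookkeeping‑heavy part.
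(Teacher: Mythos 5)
Your proposal follows essentially the same route as the paper's own proof: map $\lambda_j\mapsto z_j=e^{-2\pi i\lambda_j/\alpha_A}$ via Chebyshev QEVT applied to $x\mapsto e^{-2\pi i xt}$, solve the resulting super‑resolution instance with the matrix‑pencil GEVP (Lemma~\ref{lem:real_gevp_error} and Moitra's bound \eqref{eq:real_kappa_v}), propagate the two error sources (QEVT truncation, Hadamard‑test shot noise) through the Bauer–Fike/GEVP chain, and recover $\lambda_j$ from $\arg(z_j)$. The paper cites \cite[Lemma~23]{low2024quantum} for the truncation degree where you invoke Jacobi--Anger/Bessel decay directly; these are the same estimate. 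Your explicit check that $\alpha_A\ge2\norm{A}$ keeps $-2\pi\lambda_j/\alpha_A\in[-\pi,\pi]$ and hence $\arg$ off the branch cut is a welcome clarification of a point the paper states only through the $\arcsin$ identity.

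There is one internal inconsistency worth flagging. You assert $\alpha_p=\mathcal O(\kappa_J)$ on the grounds that Lemma~\ref{lem:qevt_r_block} requires $\alpha_p\ge\norm{p(A/\alpha_A)}$ and the non-normality inflates that operator norm by $\kappa_J$. That is a reasonable reading of the lemma, but it is not compatible with the complexities you then write down: substituting $\alpha_p=\Theta(\kappa_J)$ and $\norm{p(\cos)\sin}_2=\mathcal O(1)$ into Eq.~\eqref{eq:qevt_be_complexity_r} gives a per‑signal coherent cost of $\mathcal O(d^{3/2}\kappa_J\log\log)$, one power of $\kappa_J$ \emph{lower} than the stated $d^{3/2}\kappa_J^2$; and since you also (correctly) account for $\norm{E_{\mathrm{samp}}}=\mathcal O(\alpha_p\varepsilon_s)$ after the multiply‑back, the sample complexity would pick up an extra $\kappa_J^2$ factor that the theorem statement does not contain. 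The paper's proof quietly takes $\alpha_p=\mathcal O(1)$ by only enforcing $|p(x)/(2\alpha_p)|\le1$ on $[-1,1]$ (rather than $\alpha_p\ge\norm{p(A/\alpha_A)}$), which is what makes both stated complexities come out exactly as in the theorem. To match the theorem, you should either commit to $\alpha_p=\mathcal O(1)$ throughout (and note the resulting tension with the literal hypothesis of Lemma~\ref{lem:qevt_r_block} for non-normal $A$), or carry $\alpha_p=\Theta(\kappa_J)$ consistently and report the correspondingly shifted exponents.
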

\begin{proof}
The algorithm applies QEVT to approximate the matrix exponentially. The errors in the signals of the Hankel matrix come from two sources: (i) approximation error to the matrix exponential using the QEVT algorithm, and (ii) sampling noise from the Hadamard test circuit. Denoting $H_0'$ and $H_1'$ the Hankel matrices that only contain approximation error from QEVT but not sampling error. Then, to let
$$\|H_i-\widetilde{H}_i\|=\|H_i-H_i'+H_i'-\widetilde{H}_i\|\leq\|H_i-H'_i\|+\|\widetilde{H}_i-H'_i\|\leq\varepsilon_1$$
with $i \in\{0,1\}$, we may set the errors in each of the two sources to be both $\varepsilon_1/2$. 

We first determine $\varepsilon_1$. The output of the GEVP solver is a $(\varepsilon_1/2)$-estimation to $\{z_j:=e^{-2\pi i\lambda_j/\alpha_A}\}_{j=1}^r$. We define $\theta_j:=-\frac{2 \pi \lambda_j}{\alpha_A}$ so that $e^{-2\pi i\lambda_j/\alpha_A}=e^{i\theta_j}$ for some $j$. For our algorithm, if the output (denoted as $\widetilde{\theta}_j$) of the GEVP solver is not a phase factor, we project it onto the closest one on the complex unit circle by normalization. Algorithmically, that is $\widetilde{\theta}_j:=\arg(\widetilde{z}_j)$, where $\widetilde{z}_j:=e^{i\widetilde{\theta}_j}$ is the (normalized) output of the GEVP solver. The final estimation is given by $\widetilde{\lambda}_j=\frac{\alpha_A\widetilde{\theta}_j}{2\pi}$. We use the geometric relationship on the unit circle such that
$$\left|e^{i \bar{\theta}_j}-e^{i \theta_j}\right|=2 \sin \left(\frac{\left|\widetilde{\theta}_j-\theta_j\right|}{2}\right) \leq\varepsilon_1.$$
We have
$$\left|\arg \left(\widetilde{z}_j\right)-\arg \left(z_j\right)\right| \leq 2 \arcsin \left(\frac{\varepsilon_1}{2}\right),$$
resulting in
$$\left|\widetilde{\lambda}_j-\lambda_j\right|=\frac{\alpha_A}{2 \pi}\left|\arg \left(\widetilde{z}_j\right)-\arg \left(z_j\right)\right| \leq \frac{\alpha_A}{\pi} \arcsin \left(\frac{\varepsilon_1}{2}\right)\approx\frac{\alpha_A\varepsilon_1}{2\pi}.$$
The final approximate equality follows from the error being small.
Thus, set $\varepsilon_1=\frac{2 \pi \epsilon}{\alpha_A}$ gives the result $|\widetilde{\lambda}_j-\lambda_j|\leq \epsilon$. 

For the approximation error, we invoke Lemma \ref{lem:qevt_r_block}; specifically, Eq.~\eqref{eq:qevt_be_complexity_r}. We first determine the truncation order for approximating the exponential function in Eq.~\eqref{eq:matrix_exp_r} using the Chebyshev polynomial.
We denote the final truncation error to be $\varepsilon_2$, to be determined later.
The scaling of the approximation error scaling with the truncation order is given by \cite[Lemma 23]{low2024quantum}, where a degree-$(d-1)$ polynomial has error scales as $\mathcal{O}\left(\kappa_J\left(\frac{e d_{\max } \tau}{2 d}\right)^d\right)$ for $\alpha_A\geq2\|A\|$. 
This gives us $d=\mathcal{O}(d_{\max}r+\log(\kappa_J/\varepsilon_2))$. The maximal evolution time is $(2r-1)$ for the normalized operator $A/\alpha_A$. For the $\|p(\cos \theta) \sin \theta\|_{2|-\pi, \pi|}$ term in Eq.~\eqref{eq:qevt_be_complexity_r}, we have that
$$\|p(\cos \theta) \sin \theta\|_{2|-\pi, \pi|}=\left(\int_{-\pi}^\pi|p(\cos \theta) \sin \theta|^2 d \theta\right)^{1 / 2}\leq |p(\cos \theta)|\left(\int_{-\pi}^\pi| \sin \theta|^2 d \theta\right)^{1 / 2}=\sqrt{\pi}|p(\cos \theta)|.$$
Besides, the polynomial function satisfies $|p(x)/(2\alpha_p)|\leq1,~x\in[-1,1]$, indicating the matrix polynomial is trace non-increasing. This gives us: $\frac{\|p(\cos \theta) \sin \theta\|_{2|-\pi, \pi|}}{\alpha_p}=\mathcal{O}(1)$.

For determine $\varepsilon_2$, we need $\|H_0-H'_0\|=\frac{\varepsilon_1}{2}=\frac{\pi \epsilon}{\alpha_A}$. As each entry in the Hankel matrix has a $\varepsilon_2$ error induced by the polynomial approximation, the operator norm of the noise matrix is bounded by the Frobenius norm, which is $r\varepsilon_2$.
Subsequently, by Lemma \ref{lem:real_gevp_error}, the noise matrix has operator norm obey
\begin{equation}
    \frac{\|E\|(\kappa^4(V)+\kappa^3(V))}{rc_{\min}}\leq \frac{r\varepsilon_2(\kappa^4(V)+\kappa^3(V))}{rc_{\min}}\leq\frac{\varepsilon_1}{2}=\frac{\pi \epsilon}{\alpha_A}.
\end{equation}
This gives $\varepsilon_2=\mathcal{O}\left(\frac{c_{\min}\epsilon}{\alpha_A\kappa^4(V)}\right)$. This concludes the truncation order $d$.
Also, according to Eq.~\eqref{eq:qevt_be_complexity_r}, we have the maximal query complexity as $\mathcal{O}\left(d^{3/2}\alphaU^2\log\left(\frac{r\sqrt{d}\alphaU}{\varepsilon_2}\right)\log\left(\frac{r}{\varepsilon_2}\right)\right)=\mathcal{O}\left(d^{3/2}\alphaU^2\log\left(\frac{r\sqrt{d}\alphaU\alpha_A\kappa(V)}{c_{\min}\epsilon}\right)\log\left(\frac{r\alpha_A\kappa(V)}{c_{\min}\epsilon}\right)\right)$.

For the sampling error, by Lemma \ref{lem:real_gevp_error}, the noise matrix should have operator norm obey
$$\frac{\|E\|(\kappa^4(V)+\kappa^3(V))}{rc_{\min}}\leq\frac{\varepsilon_1}{2}=\frac{\pi \epsilon}{\alpha_A}.$$
It is reasonable to assume that $\Delta_w$ in Eq.~\eqref{eq:real_kappa_v} is small so that the error propagation given by Eq.~\eqref{eq:real_noise_prop} is leading by $\kappa^4(V)$. 
This determines $\|E\|=\mathcal{O}\left(\frac{r\epsilon c_{\min}}{\kappa^4(V)\alpha_A}\right)$. As such, we invoke Lemma \ref{lem:rand_op_norm} to bound the operator norm of the noise matrix $E$. To guarantee a failure probability no more than $\delta$, we have the sample complexity to be $\mathcal{O}\left(\frac{\kappa^8(V)\alpha_A^2}{r\epsilon^{2}c_{\min}^{2}}\log\left(\frac{r}{\delta}\right)\right)$ for estimating a single signal. Besides, there are $\mathcal{O}(r)$ different signals in the Hankel matrix.

Finally, $\alphaU=\mathcal{O}\left(\kappa_J d^{d_{\max }-1}\right)$~ \cite[Appendix A.2]{low2024quantum}. Thus, when the matrix becomes diagonalizable, one finds that $d_{\max}=1$ and $\alphaU=\mathcal{O}(\kappa_J)$.

This concludes the three claimed complexities.
\end{proof}

When the initial density matrix is a pure state, or we have purified quantum query access to it, we have the following result.
\begin{theorem}[Real eigenvalues with purified quantum query access]\label{thm:real_with}
Suppose the eigen-spectrum of $A$ is real. When given purified quantum query access, the costs become:
\begin{itemize}
    \item A maximal $\mathcal{O}\left(\frac{\alpha_A \kappa^4(V)d^{3/2}\kappa_J^2}{c_{\min}\epsilon}\log\left(\frac{r\sqrt{d}\kappa_J\alpha_A\kappa(V)}{c_{\min}\epsilon}\right)\log\left(\frac{r\alpha_A\kappa(V)}{c_{\min}\epsilon}\right)\log\left(\frac{r}{\delta} \right)\right)$ queries to controlled-$U_A$ and its inverse in one coherent run of the algorithm, where $d=\mathcal{O}\left(r+\log\left(\frac{\kappa_J\kappa(V)\alpha_A}{c_{\min}\epsilon}\right)\right)$, and $\alphaU$ satisfies Eq.~\eqref{eq:alphaU}.
    \item A sample complexity of $\mathcal{O}\left(r\right)$.
    \item A total number of $\mathcal{O}\left(\frac{\alpha_A r\kappa^4(V)d^{3/2}\kappa_J^2}{c_{\min}\epsilon}\log\left(\frac{r\sqrt{d}\kappa_J\alpha_A\kappa(V)}{c_{\min}\epsilon}\right)\log\left(\frac{r\alpha_A\kappa(V)}{c_{\min}\epsilon}\right)\log\left(\frac{r}{\delta} \right)\right)$ queries to controlled-$U_A$ and its inverse.
\end{itemize}
\end{theorem}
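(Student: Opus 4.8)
The plan is to re-run the argument of Theorem~\ref{thm:real_without} almost verbatim, changing only the subroutine that estimates each entry of the two Hankel matrices: instead of repeated Hadamard tests analysed through matrix Bernstein (Lemma~\ref{lem:rand_op_norm}), I would use quantum amplitude estimation (Lemma~\ref{lemma:ae}) on the purified circuit of Eq.~\eqref{eq:circ_hadamard_test}, reading off the real and imaginary parts of the signal from the estimators Eq.~\eqref{eq:ae_real}--\eqref{eq:ae_img}. The signal is still the one of Eq.~\eqref{eq:real_signal}, generated by block encoding the matrix exponential $e^{-2\pi i A t/\alpha_A}$ of Eq.~\eqref{eq:matrix_exp_r} with a degree-$(d-1)$ Chebyshev expansion via Lemma~\ref{lem:qevt_r_block}, and the generalized eigenvalues of the matrix pencil are post-processed by projecting onto the unit circle exactly as in Theorem~\ref{thm:real_without}. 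The same geometric conversion shows it suffices to estimate each Hankel matrix to operator-norm accuracy $\varepsilon_1:=2\pi\epsilon/\alpha_A$, which I would split into an approximation budget and an estimation budget of size $\varepsilon_1/2$ each.

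For the approximation budget nothing changes from Theorem~\ref{thm:real_without}. Using the Frobenius bound $\|E\|\le r\varepsilon_2$ and the propagation bound of Lemma~\ref{lem:real_gevp_error} with the condition-number estimate Eq.~\eqref{eq:real_kappa_v}, one sets the per-entry truncation error to $\varepsilon_2=\mathcal{O}\!\big(c_{\min}\epsilon/(\alpha_A\kappa^4(V))\big)$, which fixes the Chebyshev degree $d=\mathcal{O}\big(r+\log(\kappa_J\kappa(V)\alpha_A/(c_{\min}\epsilon))\big)$ in the diagonalizable case ($d_{\max}=1$). Then Lemma~\ref{lem:qevt_r_block} and Eq.~\eqref{eq:qevt_be_complexity_r}, together with $\|p(\cos)\sin\|_{2,[-\pi,\pi]}/\alpha_p=\mathcal{O}(1)$ and $\alphaU=\mathcal{O}(\kappa_J)$, produce a $(2\alpha_p,m',\varepsilon_2)$-block encoding $U_{p(A)}$ of $p(A/\alpha_A)$ with $\alpha_p=\mathcal{O}(1)$ at a cost of $\mathcal{O}\big(d^{3/2}\kappa_J^2\log(\tfrac{r\sqrt{d}\,\kappa_J\alpha_A\kappa(V)}{c_{\min}\epsilon})\log(\tfrac{r\alpha_A\kappa(V)}{c_{\min}\epsilon})\big)$ controlled-$U_A$ queries per signal.

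The estimation budget is where the purification enters. Again by Lemma~\ref{lem:real_gevp_error} and the Frobenius bound, I would require per-entry accuracy $\varepsilon_1'=\mathcal{O}\!\big(c_{\min}\epsilon/(\alpha_A\kappa^4(V))\big)$ on the real and imaginary parts of each signal. Since $\alpha_p=\mathcal{O}(1)$, the estimators Eq.~\eqref{eq:ae_real}--\eqref{eq:ae_img} reduce to estimating the acceptance probabilities of the circuit Eq.~\eqref{eq:circ_hadamard_test} to accuracy $\mathcal{O}(\varepsilon_1')$, which by Lemma~\ref{lemma:ae} costs $\widetilde{\mathcal{O}}(\varepsilon_1'^{-1}\log\delta_1^{-1})$ coherent queries to $U_s$, controlled-$U_{p(A)}$, $U_I$, and their inverses. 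Composing this with the per-query block-encoding cost of the previous paragraph and substituting $\varepsilon_1'$ gives the claimed one-coherent-run query count, the final $\log(r/\delta)$ factor absorbing the union bound $\delta_1=\delta/\mathcal{O}(r)$ over the $\mathcal{O}(r)$ distinct signals in the Hankel matrices. The sample complexity is $\mathcal{O}(r)$, namely $\mathcal{O}(r)$ separate coherent QAE runs, one per signal (median boosting folded into the logarithms), and the total query count is the product of the one-run cost with $\mathcal{O}(r)$.

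The one genuinely new point, and what I expect to be the main thing to get right, is the interaction of the two error sources: QAE is executed on the circuit built from the \emph{approximate} block encoding $U_{p(A)}$, so its output is an $(\varepsilon_1'+\mathcal{O}(\varepsilon_2))$-estimate of the true signal $g'(t)$ rather than an $\varepsilon_1'$-estimate. One must therefore carry both $\varepsilon_1'$ and $\varepsilon_2$ at order $c_{\min}\epsilon/(\alpha_A\kappa^4(V))$ so that, after the $\kappa^4(V)/c_{\min}$ amplification in Lemma~\ref{lem:real_gevp_error} (cf.\ Eq.~\eqref{eq:real_noise_prop}), the combined per-entry error still meets the $\varepsilon_1/2$ budget, and one must confirm $\alpha_p=\mathcal{O}(1)$ so that the rescaling factor $2\alpha_p$ in Eq.~\eqref{eq:ae_real}--\eqref{eq:ae_img} introduces no extra dependence. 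Granting this, the quadratic improvement over Theorem~\ref{thm:real_without} is precisely the replacement of the $\mathcal{O}(\varepsilon_1'^{-2})$ matrix-Bernstein sample count of Lemma~\ref{lem:rand_op_norm} by the $\widetilde{\mathcal{O}}(\varepsilon_1'^{-1})$ QAE query count.
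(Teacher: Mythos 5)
Your proposal matches the paper's proof in both structure and substance: the paper likewise rests Theorem~\ref{thm:real_with} directly on Theorem~\ref{thm:real_without}, keeping the Chebyshev/QEVT approximation budget and the $\kappa^4(V)/c_{\min}$ amplification from Lemma~\ref{lem:real_gevp_error} unchanged, and simply swaps the matrix-Bernstein sampling step for QAE with per-entry target accuracy $\varepsilon_1=\Theta\bigl(c_{\min}\epsilon/(\alpha_A\kappa^4(V))\bigr)$ via Lemma~\ref{lemma:ae}. Your explicit accounting of the two error sources composing additively (QAE run on the $\varepsilon_2$-approximate block encoding) is a useful clarification that the paper leaves implicit, but it is the same argument.
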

\begin{proof}
The analysis is similar to that of Theorem \ref{thm:real_without}. The only change is in introducing the QAE such that the error $\varepsilon_1$ induced by it for each signal should obey
$$\frac{\|E\|(\kappa^4(V)+\kappa^3(V))}{rc_{\min}}\leq\frac{r\varepsilon_1(\kappa^4(V)+\kappa^3(V))}{rc_{\min}}\leq\frac{\pi \epsilon}{\alpha_A},$$
where we have used the Frobenius norm as an upper bound to the operator norm.
Thus, we have $\varepsilon_1=\Theta\left(\frac{c_{\min}\epsilon}{\alpha_A \kappa^4(V)}\right)$. Then, applying Lemma \ref{lemma:ae} gives us the final results.

\end{proof}

\subsubsection{Matrices with complex eigenvalues}\label{sec:complex_solver}
The analysis of matrices with complex eigenvalues is parallel to the real case. Yet, we apply a different matrix exponential function, which is given by 
\begin{equation}
    f(A/\alpha_A)=e^{At/\alpha_A},\quad t=0,\ldots,2j-1.
\end{equation}
To this end, we assume the eigenvalues of the matrix $A$ satisfy
\begin{equation}\label{eq:matrix_assm}
    \text{Re}(\lambda)\leq0
\end{equation}
so that $f(A/\alpha_A)$ is trace non-increasing. This matrix function arises naturally in practical scenarios, such as open quantum dynamics and ordinary differential equations. Our strategy is similar to the real cases, where we estimate the eigenvalue of the matrix exponential using the GEVP solver and then output the estimation to the original eigenvalues through post-processing. A primary reason to consider this method over the general one in Sec.~\ref{sec:general_solver} is its ability to detect the $\lambda=0$ eigenvalue, a case the general method misses. This capability might be attributed to its signal produced, analogous to Prony's method, which allows it to handle zero eigenvalues.

\begin{theorem}[Complex eigenvalues without purified quantum query access]\label{thm:complex_without}
For $A$ with complex eigenvalues, suppose that all eigenvalues of $A$ satisfy $\rm{Re}(\lambda(A))\leq0$. Given access to the ($\alpha_A,m,0$)-block encoding $U_A$ of $A$ such that $\alpha_A\geq \|A\|$, suppose that the numerical range $\mathcal{W}\left(A / \alpha_A\right)$ is encompassed by a Faber region $\mathcal{E}$, which is convex and symmetric
for the left half real axis on the complex plane with corresponding conformal maps $\mathcal{E}^c \rightarrow \mathcal{D}^c, {\Psi}: \mathcal{D}^c \rightarrow \mathcal{E}^c$ for the Faber polynomial $F_d(x)$.

Then, there exists a quantum algorithm that outputs an estimation of these eigenvalues $\{\widetilde\lambda_i\}_{i=1}^r$ that succumbs to
$$\min _{\lambda \in Z}|\widetilde{\lambda}-\lambda|\leq \epsilon$$
with success probability no less than $1-\delta$ at the cost
\begin{itemize}
    \item A maximal ${\mathcal{O}\left(\alpha_\beta d^{3/2}\alphaFP ^2\log\left(\frac{\alpha_A\alpha_\beta\sqrt{d}\alphaFP r^3 2^{4r}}{\epsilon c_{\min}(\Delta')^{4r}}\right)\log\left(\frac{\alpha_A r^3 2^{4r}}{\epsilon c_{\min}(\Delta')^{4r}}\right)\right)}$ queries to controlled $U_A$, $U_s$ defined in Eq.~\eqref{eq:Us_controlled_prep} and their inverse in one coherent run of the algorithm, where $d=\mathcal{O}\left(e^\zeta r+\log\left(\frac{r^3 2^{4r}}{\epsilon c_{\min}(\Delta')^{4r}}\right)\right)$ with $\zeta:=\Psi'(\infty)$ is the capacity of the Faber region, $\Delta'$ is the normalized gap as given by Eq.~\eqref{eq:gap_normalized} and $\alpha_\beta$ defined by Eq.~\eqref{eq:alpha beta}.
    \item A sample complexity of $\mathcal{O}\left(\frac{r^5 2^{8r}\alpha_A^2}{ ((\Delta')^{4r}\epsilon c_{\min})^2} \log\left(\frac{r}{\delta}\right)\right)$. 
    \item A total number of $\mathcal{O}\left(\frac{r^5 2^{8r}\alpha_A^2\alpha_\beta d^{3/2}\alphaFP ^2}{ ((\Delta')^{4r}\epsilon c_{\min})^2} \log\left(\frac{\alpha_A\alpha_\beta\sqrt{d}\alphaFP r^3 2^{4r}}{\epsilon c_{\min}(\Delta')^{4r}}\right)\log\left(\frac{\alpha_A r^3 2^{4r}}{\epsilon c_{\min}(\Delta')^{4r}}\right)\log\left(\frac{r}{\delta}\right)\right)$ queries to controlled $U_A$, $U_s$ and their inverse.
\end{itemize}
\end{theorem}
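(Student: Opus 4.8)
\noindent\emph{Proof plan.}
The plan is to mirror the proof of Theorem~\ref{thm:real_without}, replacing the Chebyshev-based QEVT by the Faber-based block encoding of Lemma~\ref{lem:qevt_c_block} and carrying along the geometric data $\alpha_\beta$, $\alphaFP$ and $\zeta$ of the Faber region $\mathcal{E}$. First I fix the ideal signal $g'(t)=\tr{\rho_I e^{At/\alpha_A}}=\sum_{i=1}^{r} c_i z_i^{t}$ with $z_i=e^{\lambda_i/\alpha_A}$ and $t=0,1,\dots,2r-1$; since $\alpha_A\geq\|A\|$ we have $|\lambda_i/\alpha_A|\leq1$, hence $e^{-1}\leq|z_i|\leq1$ (the upper bound using $\mathrm{Re}(\lambda_i)\leq0$), so the Vandermonde/Hankel analysis underlying Lemma~\ref{lem:sample_complexity} applies with the normalized gap $\Delta'$ of \eqref{eq:gap_normalized}. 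The matrix-pencil solver returns estimates $\widetilde z_i$ of $z_i$; because $|z_i|\geq e^{-1}$ and $\lambda_i/\alpha_A$ lies strictly inside the domain of the principal logarithm, the post-processing $\widetilde\lambda_i=\alpha_A\log\widetilde z_i$ (after projecting $\widetilde z_i$ back into $\{e^{-1}\leq|z|\leq1\}$, exactly as one projects onto the unit circle in the real case) is $\mathcal{O}(\alpha_A)$-Lipschitz. Hence it suffices to make the matrix-pencil output accurate to $\min_i|\widetilde z_i-z_i|=\Theta(\epsilon/\alpha_A)=:\varepsilon_1$.

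Next I split the Hankel-matrix error budget between a QEVT-approximation part and a Hadamard-test-sampling part. For the approximation part the target is the function $x\mapsto e^{xt}$, which is entire, with $t\leq 2r-1$; truncating its Faber expansion on $\mathcal{E}$ at degree $d-1$ gives a polynomial $p$ with $\sup_{z\in\mathcal{E}}|p(z)-e^{zt}|\leq\varepsilon_2$ once $d=\mathcal{O}(e^{\zeta}r+\log(1/\varepsilon_2))$, where $\zeta=\Psi'(\infty)$ is the capacity and the $e^{\zeta}r$ term reflects the decay of the Faber coefficients of $x\mapsto e^{xt}$. Because $\mathcal{E}$ is convex and contains $\mathcal{W}(A/\alpha_A)$, a Crouzeix-type transfer inequality upgrades this to $\|p(A/\alpha_A)-e^{At/\alpha_A}\|=\mathcal{O}(\varepsilon_2)$, so every entry of the resulting Hankel-matrix error $E$ is $\mathcal{O}(\varepsilon_2)$ and $\|E\|\leq\|E\|_F\leq r\varepsilon_2$. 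Requiring, through Lemma~\ref{lem:sample_complexity} (dominant factor $\sim r^{2}(2/\Delta')^{4r}$), that this contributes at most a constant fraction of $\varepsilon_1$ to $\min_i|\widetilde z_i-z_i|$ forces $\varepsilon_2=\Theta(c_{\min}\epsilon(\Delta')^{4r}/(\alpha_A r^{3}2^{4r}))$, which fixes the stated $d$. Feeding this $(d,\varepsilon_2)$ into Lemma~\ref{lem:qevt_c_block}, together with $\alpha_p=\mathcal{O}(1)$ (valid since $e^{At/\alpha_A}$ is trace non-increasing when $\mathrm{Re}(\lambda)\leq0$), collapses the QEVT circuit cost to $\mathcal{O}(\alpha_\beta d^{3/2}\alphaFP^{2}\log(\dots)\log(\dots))$, the two logarithms being $\log(\alpha_\beta\sqrt{d}\,\alphaFP/\varepsilon_2)$ and $\log(1/\varepsilon_2)$; substituting $\varepsilon_2$ reproduces the first bullet (queries to controlled-$U_A$ and to the input-state preparation $U_s$ of \eqref{eq:Us_controlled_prep}).

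For the sampling part I reuse the Hadamard-test circuit of Section~\ref{sec:q_alg} with the unbiased estimators \eqref{eq:unbaised_p} and the $\mathcal{O}(1)$ rescaling by $2\alpha_p$: there are $\mathcal{O}(r)$ distinct signals in $\widetilde H_0,\widetilde H_1$, each estimated by Bernoulli sampling. Demanding that the sampling-induced $\|E\|$ obey the same bound $\|E\|=\mathcal{O}(c_{\min}\varepsilon_1/(r^{2}(2/\Delta')^{4r}))$ and invoking Lemma~\ref{lem:rand_op_norm} --- equivalently Lemma~\ref{lem:final_sample_complexity} with its $\varepsilon_1$ replaced by $\epsilon/\alpha_A$ --- gives the sample complexity $\mathcal{O}(r^{5}2^{8r}\alpha_A^{2}/((\Delta')^{4r}\epsilon c_{\min})^{2}\cdot\log(r/\delta))$, and a union bound over the $\mathcal{O}(r)$ signals (each at failure probability $\delta/\mathcal{O}(r)$) yields overall success $1-\delta$; the total query count is the per-run count times the sample count, matching the third bullet. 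The step I expect to be the real obstacle is the approximation transfer: unlike the normal/Hermitian setting there is no orthonormal eigenbasis, so passing from the scalar Faber bound $\sup_{z\in\mathcal{E}}|p(z)-e^{zt}|$ to the operator-norm bound on the non-normal $e^{At/\alpha_A}$ must go through a Crouzeix-type inequality (which is exactly why $\mathcal{E}$ is assumed convex and $\mathcal{W}(A/\alpha_A)\subseteq\mathcal{E}$), and one still needs a sharp estimate of the Faber coefficients of $x\mapsto e^{xt}$ --- the origin of the capacity factor $e^{\zeta}$ in $d$ --- and of $\alphaFP$; once those are in hand, the remaining bookkeeping is a direct re-run of the real-eigenvalue analysis.
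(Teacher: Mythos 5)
Your proposal follows essentially the same route as the paper's proof: the signal $g'(t)=\tr{\rho_I e^{At/\alpha_A}}$ with $z_i=e^{\lambda_i/\alpha_A}$, post-processing via $\widetilde\lambda_j=\alpha_A\ln\widetilde z_j$ made $\mathcal{O}(\alpha_A)$-Lipschitz by the lower bound $|z_j|\geq e^{-1}$, an even split of the Hankel error budget into Faber-truncation and sampling halves, the truncation order $d=\mathcal{O}(e^\zeta r+\log(1/\varepsilon_2))$ with $\varepsilon_2=\Theta\bigl(c_{\min}\epsilon(\Delta')^{4r}/(\alpha_A r^3 2^{4r})\bigr)$ fed into Lemma~\ref{lem:qevt_c_block} with $\alpha_p=\mathcal{O}(1)$, and the sample count from Lemma~\ref{lem:final_sample_complexity}. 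The only presentational divergence is that you justify the scalar-to-operator transfer by an explicit Crouzeix-type inequality on $\mathcal{W}(A/\alpha_A)\subseteq\mathcal{E}$, whereas the paper obtains the operator-norm truncation bound directly by citing \cite[Theorem~27]{low2024quantum}; the numerical-range hypothesis in the theorem statement is exactly what makes both formulations work, so this is the same mechanism, not a different proof.
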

\begin{proof}
Similar to the real case, we set the contribution in operator norm of the noisy matrix in both the approximation of the matrix exponential and the shot noises to be $\varepsilon_1/2$.

We determine $\varepsilon_1$. Denote ${Q}=e^{\lambda_j/\alpha_A}$ for some $j$.  Suppose the GEVP solver outputs an estimate $\widetilde{Q}$ of $Q$ with error bound $\varepsilon_1$, i.e.,
\begin{equation*}
|Q - \widetilde{Q}| \leq \varepsilon_1\leq\frac{1}{2e},
\end{equation*}
where we have taken the error to be small (less than $\frac{1}{2e}$).
The estimator of the eigenvalue becomes $\widetilde{\lambda}_j:=\alpha_A \ln(\widetilde{Q})$.
We apply results from complex analysis to develop a mean-value-theorem-type bound. By \cite[Chapter 4, Section 1.1]{ahlfors1953complex}, for holomorphic function $f(x)$, $|f(a)-f(b)|=\left|\int_{a}^b f'(x) \mathrm{d}x\right|\leq \int_{a}^b |f'(x)| \mathrm{d}x$ such that the integral is a complex line integral. Hence, we can bound $\int_{a}^b |f'(x)| \mathrm{d}x\leq |a-b|\sup_{z \in [a,b]}|f'(z)|$ with $[a, b]$ denotes line segment connecting $a$ and $b$ in the complex plain. By taking $f(x)=\ln(x)$, we have that
\begin{align*}
    \bigl|\ln\widetilde Q - \ln Q\bigr|
&\leq  |\widetilde{Q}-Q| \max _{x \in[Q, \widetilde{Q}]}\left|\frac{1}{x}\right|\\
&\leq \frac{\bigl|\widetilde Q - Q\bigr|}{\min (|Q|,|\widetilde{Q}|)-|\widetilde{Q}-Q|}\\
&=\frac{\bigl|\widetilde Q - Q\bigr|}{\min (|Q|,|\widetilde{Q}|)-\varepsilon_1} \leq 2e \varepsilon_1.
\end{align*}
Here, in the second line, we have used $\max _{x \in[Q, \widetilde{Q}]}\left|\frac{1}{x}\right|\leq\frac{1}{\min (|Q|,|\widetilde{Q}|)-|\widetilde{Q}-Q|}$. In the last line, we have used that $|Q|=|e^{\lambda_j/\alpha_A}|\geq e^{-\alpha_A/\alpha_A}= e^{-1}$ so as $|\widetilde{Q}|$.
It then implies that 
$$
\bigl|\widetilde\lambda_j - \lambda_j\bigr|
=\alpha_A\,\bigl|\ln\widetilde Q - \ln Q\bigr|
\le 2e\,\alpha_A\,\varepsilon_1.
$$
To guarantee $|\widetilde\lambda_j-\lambda_j|\le\epsilon$, it suffices to choose
\begin{equation}\label{eq:epsilon1_temp}
    \varepsilon_1 =\frac{\epsilon}{2e\,\alpha_A}.
\end{equation}

The truncation order follows from \cite[Theorem 27]{low2024quantum}, which gives
$$\left\|e^{(A/\alpha_A)t}-\sum_{j=0}^{d-1} \beta_j {F}_j(A/\alpha_A)\right\|=\mathcal{O}\left(\left(\frac{e^\zeta t}{d}\right)^d\right),$$
where $F_j$ the degree $j$ Faber polynomial. As the maximal evolution time $t=\mathcal{O}(r)$, we have the exponential function approximated by a degree-($d-1$) Faber polynomial with $d=\mathcal{O}(e^\zeta r+\log(1/\varepsilon_2))$. And by Lemma \ref{lem:qevt_c_block}, the maximal query complexity is ${\mathcal{O}\left(\frac{\alpha_\beta\sqrt{d}\alphaFP}{\alpha_p}d\alphaFP\log\left(\frac{\alpha_\beta\sqrt{d}\alphaFP}{\alpha_p\varepsilon_2}\right)\log\left(\frac{1}{\varepsilon_2}\right)\right)}$. Remark that we can always choose $\alpha_p=\mathcal{O}(1)$. 

To determine $\varepsilon_2$, we want the operator norm of the noise matrix induced by the truncation to be smaller than $\frac{\varepsilon_1}{2}=\frac{\pi \epsilon}{4e\alpha_A}$. Then, similar to the real case, the operator norm of the noise matrix is bounded by the Frobenius norm, which gives $r\varepsilon_2$. We then apply Lemma \ref{lem:sample_complexity}, the operator norm of the truncation-induced noise matrix follows:
$$
\frac{r\varepsilon_2}{c_{\min}}\left( \frac{r^2 2^{4r}}{\pi^2\Delta'^{4(r-1)}}+\frac{r 2^{3r}}{\pi^{3/2}\Delta'^{3(r-1)}} \right) \leq \frac{\pi \epsilon}{4e\alpha_A}.
$$
We obtain 
\begin{equation}\label{eq:ep2}
    \varepsilon_2=\Theta\left(\frac{\epsilon c_{\min}(\Delta')^{4r}}{r^3 2^{4r}\alpha_A}\right).
\end{equation}

The sample complexity is addressed by considering the accuracy Eq.~\eqref{eq:epsilon1_temp} and applying it to Lemma \ref{lem:final_sample_complexity}.



\end{proof}

For pure initial states and when the purified quantum query access is available, we have:
\begin{theorem}[Complex eigenvalues with purified quantum query access]\label{thm:complex_with}
For $A$ with complex eigenvalues with pure or purified initial states, the costs become:
\begin{itemize}
    \item A maximal ${\mathcal{O}\left(\frac{r^3 2^{4r}\alpha_A\alpha_\beta d^{3/2}\alphaFP ^2}{\epsilon c_{\min}(\Delta')^{4r}}\log\left(\frac{\alpha_A\alpha_\beta\sqrt{d}\alphaFP r^3 2^{4r}}{\epsilon c_{\min}(\Delta')^{4r}}\right)\log\left(\frac{\alpha_A r^3 2^{4r}}{\epsilon c_{\min}(\Delta')^{4r}}\right)\right)}$ queries to (controlled) $U_A$ and its inverse in one coherent run of the algorithm.
    \item A sample complexity of $\mathcal{O}(r)$.
    \item A total number of ${\mathcal{O}\left(\frac{r^4 2^{4r}\alpha_A\alpha_\beta d^{3/2}\alphaFP ^2}{\epsilon c_{\min}(\Delta')^{4r}}\log\left(\frac{\alpha_A\alpha_\beta\sqrt{d}\alphaFP r^3 2^{4r}}{\epsilon c_{\min}(\Delta')^{4r}}\right)\log\left(\frac{\alpha_A r^3 2^{4r}}{\epsilon c_{\min}(\Delta')^{4r}}\right)\log\left(\frac{r}{\delta}\right)\right)}$ queries to controlled-$U_A$ and its inverse.
\end{itemize}
\end{theorem}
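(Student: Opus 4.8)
The plan is to mirror the proof of Theorem \ref{thm:complex_without} almost verbatim, changing only the signal-estimation subroutine from the Hadamard test to quantum amplitude estimation (QAE), which is made possible by the purified query access $U_I\ket{0}=\ket{\rho_I}$. First I would carry over, unchanged, the two pieces of the error bookkeeping that do not depend on how the Hankel entries are estimated: (i) the requirement that the GEVP solver return an $\varepsilon_1$-approximation to $\{Q_j=e^{\lambda_j/\alpha_A}\}$ with $\varepsilon_1=\epsilon/(2e\alpha_A)$, which follows from the same holomorphic mean-value-type bound on $\ln(\cdot)$ used before; and (ii) the choice of Faber truncation degree $d=\mathcal{O}\!\left(e^\zeta r+\log\!\left(\tfrac{r^3 2^{4r}}{\epsilon c_{\min}(\Delta')^{4r}}\right)\right)$ together with the truncation error $\varepsilon_2$ of Eq.~\eqref{eq:ep2}, obtained by forcing the truncation-induced Hankel perturbation to have operator norm at most $\varepsilon_1/2$ via Lemma \ref{lem:sample_complexity}. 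Because $\mathrm{Re}(\lambda(A))\le0$ and $\mathcal{W}(A/\alpha_A)$ lies in the Faber region, $e^{At/\alpha_A}$ is trace non-increasing, so Lemma \ref{lem:qevt_c_block} applies with $\alpha_p=\mathcal{O}(1)$ and the resulting block encoding $U_{p(A)}$ of $p(A/\alpha_A)/(2\alpha_p)$ costs $\widetilde{\mathcal{O}}(\alpha_\beta\sqrt d\,\alphaFP\cdot d\,\alphaFP)=\widetilde{\mathcal{O}}(\alpha_\beta d^{3/2}\alphaFP^{2})$ queries to controlled-$U_A$, with the two logarithmic factors exactly as in Eq.~\eqref{eq:qevt_be_complexity_c}.

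Next I would insert QAE in place of the sampling step. Running the circuit of Eq.~\eqref{eq:circ_hadamard_test} and estimating $P(\ket0_c)$ and $P'(\ket0_c)$ with Lemma \ref{lemma:ae} recovers the real and imaginary parts of each signal through the estimators \eqref{eq:ae_real} and \eqref{eq:ae_img}; bounding the noise matrix by its Frobenius norm, a per-entry error $\varepsilon_3$ gives $\|E\|\le r\varepsilon_3$, so Lemma \ref{lem:sample_complexity} forces $\varepsilon_3=\Theta\!\left(\tfrac{\epsilon c_{\min}(\Delta')^{4r}}{r^{3}2^{4r}\alpha_A}\right)$, the same order as $\varepsilon_2$. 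By Lemma \ref{lemma:ae}, $\widetilde{\mathcal{O}}(\varepsilon_3^{-1})$ queries to $U_s$ and controlled-$U_{p(A)}$ and their inverses suffice per signal with failure probability $\delta_1$; setting $\delta_1=\delta/(2r-1)$ handles the union bound over the $2r-1$ distinct signals. Composing the $\widetilde{\mathcal{O}}(\varepsilon_3^{-1})$ QAE iterations with the $\widetilde{\mathcal{O}}(\alpha_\beta d^{3/2}\alphaFP^{2})$ cost of a single $U_{p(A)}$ yields the stated maximal query count in one coherent run; the sample cost is the $\mathcal{O}(r)$ preparations of $\ket{\rho_I}$ (one QAE block per signal), and the total query count is the product of the per-run cost with the $\mathcal{O}(r)$ signals, carrying the extra $\log(r/\delta)$ from the union bound. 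Each final estimate $\widetilde\lambda_j=\alpha_A\ln\widetilde Q_j$ then satisfies $\min_{\lambda\in Z}|\widetilde\lambda-\lambda|\le\epsilon$.

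The part I expect to require the most care is the simultaneous calibration of three independent error sources — the block-encoding accuracy demanded inside Lemma \ref{lem:qevt_c_block}, the Faber truncation error $\varepsilon_2$, and the QAE precision $\varepsilon_3$ — which must all be driven to $\Theta(\epsilon c_{\min}(\Delta')^{4r}/(r^{3}2^{4r}\alpha_A))$ up to logarithms so that their combined contribution to the Hankel perturbation stays inside the budget $\varepsilon_1$ mandated by the noise-sensitive GEVP bound of Lemma \ref{lem:sample_complexity}, all while keeping $\alpha_p=\mathcal{O}(1)$ so that the QAE overhead is not inflated. Once this calibration is fixed, the three claimed complexities follow by direct substitution, exactly as in the transition from Theorem \ref{thm:real_without} to Theorem \ref{thm:real_with}.
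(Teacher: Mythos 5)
Your proposal is correct and follows essentially the same route the paper takes: reuse the error budget, Faber truncation degree $d$, and truncation error $\varepsilon_2$ from Theorem~\ref{thm:complex_without}, and simply replace the Hadamard-test sampling step by QAE (Lemma~\ref{lemma:ae}) with per-entry accuracy $\Theta(\epsilon c_{\min}(\Delta')^{4r}/(r^3 2^{4r}\alpha_A))$ matching Eq.~\eqref{eq:ep2}. The paper's own proof is a one-liner that says exactly this; your write-up just makes the bookkeeping (calibration of the three error sources, union bound over $2r-1$ signals, and composition of QAE iterations with the QEVT block-encoding cost) explicit.
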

\begin{proof}
The proof is similar to Theorem \ref{thm:complex_without} with the only change that the error given by Eq.~\eqref{eq:ep2} is taken to the QAE algorithm.
\end{proof}

\section{Applications}\label{sec:app}
\subsection{Liouvillian gap estimation}
Estimating the Liouvillian gap $g_{\mathcal{L}}$ is critical because it quantifies the slowest decay mode of open quantum systems, with the inverse gap giving the asymptotic relaxation time $\tau\approx 1/g_{\mathcal{L}}$ toward steady state~\cite{mori2020resolving}. In dissipative quantum state engineering, a larger gap ensures rapid convergence to target entangled or resource states, improving preparation fidelity~\cite{verstraete2009quantum}. Conversely, in passive quantum memories and error‐correction schemes, a small Liouvillian gap leads to long‐lived noise modes that limit logical qubit lifetimes by allowing environmental errors to accumulate~\cite{labay2025chiral}. In quantum heat engines and refrigerators, the gap controls the working medium’s energy exchange rate with thermal baths, thus setting upper bounds on cycle frequency and power output~\cite{zhang2022dynamical}. For quantum algorithms modeled as Markov processes, such as Gibbs sampling or optimization, mixing times scale inversely with the Liouvillian gap, directly informing runtime and convergence guarantees~\cite{ramkumar2024mixing}. Non‐Hermitian skin effects can drastically alter spectral properties and lead to boundary‐dependent gap sizes, resulting in localization‐induced relaxation anomalies. Introducing a symmetrized Liouvillian gap refines transient‐dynamics bounds by bounding initial decay before asymptotic rates dominate. In many‐body systems under weak dissipation, singular behavior of the gap in the thermodynamic limit gives rise to long‐lived quasi‐steady states that challenge control protocols. By employing Floquet and exceptional‐point engineering, one can tune and widen the Liouvillian gap to accelerate or decelerate relaxation as desired. Accurate gap estimates are therefore indispensable for designing real‐time feedback and reservoir‐engineering schemes, since they set the required rates of corrective operations to stabilize quantum states.

The dynamics of an open quantum system is described by the Lindblad master equation $\frac{{\rm d}\rho}{{\rm d}t}=\mathcal{L}(\rho)$, where $\mathcal{L}:\mathbb{C}^{2^n\times 2^n}\mapsto \mathbb{C}^{2^n\times 2^n}$ represents a superoperator. More specifically, the Lindblad operator
\begin{align}
    \mathcal{L}(\rho)=-i[H,\rho]+\sum\limits_{\mu}\left(L_{\mu}\rho L^{\dagger}_{\mu}-\frac{1}{2}\{L^{\dagger}_{\mu}L_{\mu},\rho\}\right),
\end{align}
governed by some Hamiltonian $H$ and dissipative operator $L_{\mu}$. To utilize the proposed quantum algorithm, the Lindblad operator $\mathcal{L}$ is required to ``vectorized" to a matrix. Suppose the underlying Hamiltonian
\begin{align}
    H=\sum\limits_{j}\alpha_j V_{0,j},
\end{align}
and the dissipative term
\begin{align}
    L_{\mu}=\sum\limits_{j}\sqrt{\alpha_{\mu,j}}V_{\mu,j}
\end{align}
with $\alpha_j$, $\alpha_{\mu, j}$ are real values, and $V_{\mu,j}$ are $n$-qubit Pauli operators. By performing the vectorization, the Lindblad equation can be expressed via $d|\rho\rangle\rangle/dt=\tilde{\mathcal{L}}|\rho\rangle\rangle$, where the density matrix 
\begin{align}
    |\rho\rangle\rangle=\sum\limits_{j,k}\rho_{j,k}|j\rangle|k\rangle,
\end{align}
with the entry $\rho=\langle j|\rho|k\rangle$. Meanwhile the superoperator
\begin{eqnarray}
\begin{split}
     &\tilde{\mathcal{L}}=\sum\limits_{\mu}\tilde{\mathcal{L}}_{\mu},\\
     &\tilde{\mathcal{L}}_0=-iH\otimes I^{\otimes n}+iI^{\otimes n}\otimes H,\\
     &\tilde{\mathcal{L}}_{\mu\geq 1}=\left(L_{\mu}\otimes L_{\mu}^*-\frac{1}{2}L_{\mu}^{\dagger}L_{\mu}\otimes I^{\otimes n}-\frac{1}{2}I^{\otimes n}L_{\mu}^TL_{\mu}^*\right).
\end{split}
\end{eqnarray}
Let $\lambda_j(\tilde{\mathcal{L}})$ be the eigenvalues of $\tilde{\mathcal{L}}$ such that $0={\rm Re}[\lambda_0(\tilde{\mathcal{L}})]>{\rm Re}[\lambda_1(\tilde{\mathcal{L}})]>\cdots$, and the Liouvillian gap is defined as $g_{\mathcal{L}}=\abs{{\rm Re}[\lambda_1(\tilde{\mathcal{L}})]}$.

Given the initial density matrix as $\rho_I=|\rho\rangle\rangle \langle \langle\rho|$, we have 
$$\tr{L\rho_I}=\sum_{i=1}^r \lambda_i \tr{\ket{\psi_i}\bra{\phi_i}\rho_I}=\sum_{i=1}^r c_i \lambda_i,$$
where $L$ is the vectorized Liouvillian with eigen-decomposition $L=\sum_{i} \lambda_i \ket{\psi_i}\bra{\phi_i}$, and we have assumed the $r$-sparse condition in Assumption \ref{assm:sparsity}. Besides, we need the sparse decomposition to contain the eigen-component corresponding to $\lambda_1$. This is a reasonable assumption, such that one can obtain such an initial state through constructions used in linear response theory. That is, we first prepare the steady state (eigenstate corresponding to $\lambda_0=0$), and we may `excite' the state to a linear combination of eigen-components of low-lying states.

For estimating the Liouvillian gap, we can adopt the construction given by Theorem \ref{thm:general_with}. As long as $c_{\min}$ and the normalized gap $\Delta'$ remain reasonable, we can estimate the $r$ eigenvalues, including $\lambda_1$, to desirable accuracy with Heisenberg scaling guarantees. This gives us the Liouvillian gap.

\subsection{Spectral abscissa and dynamics stability}
The spectral abscissa of a matrix is a fundamental quantity in the stability analysis of dynamical systems, particularly in fields like control theory, fluid dynamics, and differential equations. In fluid dynamics, non-normal matrices often arise when linearizing the Navier-Stokes equations around steady states, such as laminar flows. The spectral properties—especially the dominant eigenvalue—inform the stability of these flows and help explain phenomena like transition to turbulence.

In such contexts, the non-normal linearized Navier-Stokes operator (matrix) $\mathcal{N}$  governs the evolution of infinitesimal perturbations $\ket{u(t)}$ in fluid flows:
\begin{equation}
\frac{\partial \ket{u(t)}}{\partial t} = \mathcal{N} \ket{u(t)}.
\end{equation}
The spectral abscissa is defined as
$$a(\mathcal{N}):=\max\{\mathrm{Re}(\lambda):\lambda\in\sigma(\mathcal{N})\},$$
where $\sigma(\mathcal{N})$ denotes the set of eigenvalues of $\mathcal{N}$. The spectral abscissa's importance lies in: 
The sign of the spectral abscissa determines the stability of the dynamics:
\begin{itemize}
    \item If $a(\mathcal{N})<0$, all trajectories decay exponentially to zero, implying asymptotic stability.
    \item If $a(\mathcal{N})=0$, the system is linearly marginally stable. Yet, it may exhibit transient growth due to non-normality.
    \item If $a(\mathcal{N})>0$, the trajectories grow exponentially, meaning instability.
\end{itemize}

Similar to the discussion in the last section, we assume an initial density matrix $\rho_I$ that contains eigen-components corresponding to the spectral abscissa non-trivially (the expanded coefficient is not exponentially small). We can first estimate $a(\mathcal{N})$ using methods discussed in Sec.~\ref{sec:general_solver}. If we find $a(\mathcal{N})>0$, we may directly output the estimation. Yet, this method cannot detect the zero eigenvalue. Thus, if we otherwise find that the estimated value is less than zero, we further adopt the method in Sec.~\ref{sec:complex_solver} to determine whether there exists an eigenvalue of value zero.

\subsection{Non-normal matrices with real eigenvalues}
In this section, we identify cases where we know a non-normal matrix has real eigenvalues. For such matrices, we can apply the method in Sec.~\ref{sec:real_solver} for solving their eigenvalues.

In non-Hermitian quantum systems, Hamiltonians that are PT-symmetric can exhibit entirely real spectra~\cite{bender1998real}. The PT symmetry dictates that the system is invariant under combined parity and time-reversal operations.

Totally nonnegative matrices~\cite{fallat2011totally} represent a fundamental class of matrices where all minors are nonnegative. It is established that totally nonnegative matrices possess entirely real eigenvalues~\cite{fallat2011totally}, making them particularly valuable for applications requiring spectral reality guarantees. These matrices frequently exhibit non-normal characteristics while maintaining their real spectral properties. Totally nonnegative matrices arise in specialized contexts, including certain population dynamics models with specific age-structured constraints~\cite{logofet2008nonnegative}, discretized Green's functions for boundary value problems~\cite{stakgold2011green}, particular classes of interpolation matrices in numerical analysis~\cite{fallat2011totally}, and oscillatory matrices in mechanical vibration analysis~\cite{gantmacher2002oscillation}.

Upper and lower triangular matrices with real diagonal entries have real eigenvalues, while being non-normal. That is, all eigenvalues equal these diagonal elements and are therefore real. These matrices are generally non-normal unless they are diagonal.

\end{document}